\def\ps@pprintTitle{%
 \let\@oddhead\@empty
 \let\@evenhead\@empty
 \def\@oddfoot{}%
 \let\@evenfoot\@oddfoot}
\newcommand{\repeattheorem}[1]{%
  \begingroup
  \renewcommand{\thetheorem}{\ref{#1}}%
  \expandafter\expandafter\expandafter\theorem
  \csname reptheorem@#1\endcsname
  \endtheorem
  \endgroup
}
\xdef\csname reptheorem@#1\endcsname{%
    \unexpanded\expandafter{\BODY}%
  }%
\unskip\label{#1}\endtheorem
\newcommand{\classFont}[1]{\protect\ensuremath{\mathsf{#1}}\xspace}
\newcommand{\PSPACE}{\classFont{PSPACE}}
\newcommand{\sub}{\subseteq}
\newcommand{\tuple}[1]{\vec{#1}}
\newcommand{\Dom}{\textrm{Dom}}
\newcommand{\C}{\mathcal{C}}
\newcommand{\N}{\mathbb{N}}
\newcommand\re[2]{{
  \left.\kern-\nulldelimiterspace 
  #1 
  \vphantom{\big|} 
  \right|_{#2} 
  }}
\newcommand{\I}{\mathcal{I}}
\newcommand{\Si}{\Sigma}
\newcommand{\Sig}[1]{\Si_{\rm #1}}
\newcommand{\si}{\sigma}
\newcommand{\tn}{t_{\rm new}}
\newcommand{\vn}{v_{\rm new}}
\newcommand{\suc}{\textrm{S}}
\def\boto{\mkern1.5mu\bot\mkern2.5mu}
\newcommand{\tax}[1]{\mathcal{I}{#1}}
\newcommand{\iax}[1]{\mathcal{U}{#1}}
\newcommand{\uax}[1]{\mathcal{UI}{#1}}
\newcommand{\cax}[1]{\mathcal{C}_{#1}}
\newcommand{\fiax}{\mathcal{FI}}
\newcommand{\fax}[1]{\mathcal{F}{#1}}
\newcommand{\fmodels}{\models_{\rm fin}}
\newcommand{\scc}{\textrm{scc}}
\newtheorem{theorem}{Theorem}
\newtheorem{lemma}[theorem]{Lemma}
\newtheorem{corollary}[theorem]{Corollary}
\newtheorem{definition}[theorem]{Definition}
\newtheorem{example}[theorem]{Example}
\begin{document}

\begin{frontmatter}



\title{On the Interaction of Functional and Inclusion Dependencies with Independence Atoms\footnote{Some of our results were presented at the 23rd International Conference on Database Systems for Advanced Applications (DASFAA 2018) and the 21st International Conference on Logic for Programming, Artificial Intelligence and Reasoning (LPAR 2017)}}


\author[Suomi]{Miika Hannula}
\ead{miika.hannula@helsinki.fi}
\author[Suomi]{Juha Kontinen}
\ead{juha.kontinen@helsinki.fi}
\author[Aotearoa]{Sebastian Link}
\ead{s.link@auckland.ac.nz}
\address[Suomi]{Department of Mathematics and Statistics, University of Helsinki, Helsinki, Finland}
\address[Aotearoa]{School of Computer Science, University of Auckland, New Zealand}

\begin{abstract}
Infamously, the finite and unrestricted implication problems for the classes of i) functional and inclusion dependencies together, and ii) embedded multivalued dependencies alone are each undecidable. Famously, the restriction of i) to functional and unary inclusion dependencies in combination with the restriction of ii) to multivalued dependencies yield implication problems that are still different in the finite and unrestricted case, but each are finitely axiomatizable and decidable in low-degree polynomial time. An important embedded tractable fragment of embedded multivalued dependencies are independence atoms that stipulate independence between two attribute sets. We establish a series of results for implication problems over subclasses of the combined class of functional and inclusion dependencies as well as independence atoms. One of our main results is that both finite and unrestricted implication problems for the combined class of independence atoms, unary functional and unary inclusion dependencies are axiomatizable and decidable in low-degree polynomial time.
\end{abstract}

\end{frontmatter}

\section{Introduction}

Databases represent information about some domain of the real world. For this purpose, data dependencies provide the main mechanism for enforcing the semantics of the given application domain within a database system. As such, data dependencies are essential for most data management tasks, including conceptual, logical and physical database design, query and update processing, transaction management, as well as data cleaning, exchange, and integration. The usability of a class $\mathcal{C}$ of data dependencies for these tasks depends critically on the computational properties of its associated implication problem. The implication problem for $\mathcal{C}$ is to decide whether for a given finite set $\Sigma\cup\{\varphi\}$ of data dependencies from $\mathcal{C}$, $\Sigma$ implies $\varphi$, that is, whether every database that satisfies all the elements of $\Sigma$ also satisfies $\varphi$. If we require databases to be finite, then we speak of the finite implication problem, and otherwise of the unrestricted implication problem. While the importance of data dependencies continues to hold for new data models, the focus of this article is on the finite and unrestricted implication problems for important classes of data dependencies in the relational model of data. In this context, data dependency theory is deep and rich, and dedicated books exist \cite{DBLP:series/synthesis/2012Greco,thalheim:1991}.

Functional and inclusion dependencies constitute the most commonly used classes of data dependencies in practice. In particular, functional dependencies (FDs) are more expressive than keys, and inclusion dependencies (INDs) are more expressive than foreign keys, thereby capturing Codd's principles of entity and referential integrity, respectively, on the logical level. An FD $R:X\rightarrow Y$ with attribute subsets $X,Y$ on relation schema $R$ expresses that the values on attributes in $Y$ are uniquely determined by the values on attributes in $X$. In particular, $R:X\rightarrow R$ expresses that $X$ is a key for $R$. For example, on schema \textsc{Patient}=\{\textit{p\_id},\textit{p\_name}\} the FD $\textsc{Patient}:\textit{p\_id}\rightarrow\textit{p\_name}$ expresses that the \textit{id} of a patient uniquely determines the name of the patient, and on schema  \textsc{Test}=\{\textit{t\_id},\textit{t\_desc}\} the FD $\textsc{Test}:\textit{t\_id}\rightarrow\textit{t\_desc}$ expresses that the \textit{id} of a medical test uniquely determines the description of a test. An inclusion dependency (IND) $R[A_1,\ldots,A_n]\subseteq R'[B_1,\ldots,B_n]$, with attribute sequences $A_1,\ldots,A_n$ on $R$ and $B_1,\ldots,B_n$ on $R'$, expresses that for each tuple $t$ over $R$ there is some tuple $t'$ over $R'$ such that for all $i=1,\ldots,n$, $t(A_i)=t'(B_i)$ holds. If $n=1$ we call the IND unary (UIND). For example, on schema \textsc{Heart}=\{\textit{p\_id},\textit{p\_name},\textit{t\_id}\} the unary IND
$\textsc{Heart}[\textit{t\_id}]\subseteq \textsc{Test}[\textit{t\_id}]$ expresses that each id of a test that is performed on patients to diagnose a heart disorder must reference the (unique) id of a medical test on schema \textsc{Test}.

A fundamental result in dependency theory is that the unrestricted and finite implication problems for the combined class of FDs and INDs differ and each is undecidable \cite{chandra85,mitchell83,mitchell83a}. Interestingly, for the expressive subclass of FDs and UINDs, the unrestricted and finite implication problems still differ but each are axiomatizable and decidable in low-degree polynomial time \cite{CosmadakisKV90}.

Another important expressive class of data dependencies are \textit{embedded multivalued dependencies} (EMVDs). An EMVD $R:X\rightarrow Y\boto Z$ with attribute subsets $X,Y,Z$ of $R$ expresses that the projection $r[XYZ]$ of a relation $r$ over $R$ on the set union $XYZ$ is the join $r[XY]\bowtie r[XZ]$ of its projections on $XY$ and $XZ$. Another fundamental result in dependency theory is that the unrestricted and finite implication problems for EMVDs differ, each is not finitely axiomatizable \cite{ParkerP80} and each is undecidable \cite{herrmann:2006,Herrmann06}. An important fragment of EMVDs are multivalued dependencies (MVDs), which are a class of full dependencies in which $XYZ$ covers the full underlying set $R$ of attributes. In fact, MVDs are the basis for Fagin's fourth normal form \cite{fagin77,DBLP:conf/caise/WeiL19}. For the combined class of FDs, MVDs, and UINDs, finite implication is axiomatizable and decidable in cubic time, while unrestricted implication is also axiomatizable and decidable in almost linear time \cite{CosmadakisKV90,Kanellakis90}.

Another expressive known fragment of EMVDs that is computationally friendly is the class of \textit{independence atoms} (IAs). IAs are EMVDs $R:X\rightarrow Y\boto Z$ where $X=\emptyset$, i.e. expressing that $r[YZ]=r[Y]\bowtie r[Z]$ holds. IAs are denoted by $Y\boto Z$. In our example, the IA $\textit{p\_id}\boto\textit{t\_id}$ on schema \textsc{Heart} expresses that all patients that are tested for a heart disorder undergo all tests for this disorder. For the class of IAs, the finite and unrestricted implication problems coincide, they are finitely axiomatizable and decidable in low-degree polynomial time \cite{KontinenLV13}. Besides their attractive computational features, IAs are interesting for a variety of other reasons: (i) Database researchers studied them as early as 1976~\cite{DBLP:conf/mfcs/Cadiou76}, with continued interest over the years \cite{DBLP:journals/tods/Delobel78,HannulaKL16,KontinenLV13,paredaens:1980}. (ii) Geiger, Paz, and Pearl studied IAs in a probabilistic setting \cite{geiger:1991} where they constitute an important fragment of conditional independencies, which form the foundation for Markov and Bayesian networks. (iii) IAs occur naturally in database practice. For example, the cross product between various tables is computed by the \texttt{FROM} clause in SQL. Naturally, a variety of IAs hold on the resulting table. The choice of the best query plan depends typically on the correct estimation of cardinalities for intermediate results. For efficiency purposes, the estimates typically assume independence between attribute columns \cite{DBLP:conf/vldb/PoosalaI97}. Knowing which independencies actually hold, could replace cardinality estimation by exact cardinalities and therefore result in better query plans. We acknowledge that the independence statements required for these types of optimizations are mostly restricted to hold for specific combinations of values on the given attributes. However, to understand such expressive independence atoms, we first need to understand the more basic ones. This is therefore an exciting area of future research that will be influenced by the results we derive in the current article. More recently, Olteanu and Zavodny \cite{OlteanuZ15} studied succinct representations of relational data by employing algebraic factorizations using distributivity of Cartesian products over unions. Not surprisingly, one of the core enabling notions of the factorizations is that of independence. (iv) In fact, the concept of independence is fundamental to areas as diverse as causality, bound variables in logic, random variables in statistics, patterns in data, the theory of social choice, Mendelian genetics, and even some quantum physics \cite{DBLP:journals/eatcs/Abramsky14,AbramskyGK13}. In a recent response, the study of logics with IAs as atoms of the language has been initiated \cite{GradelV13}.

Given the usefulness of EMVDs, FDs, and INDs for data management, given their computational barriers, and given the attractiveness of IAs as a tractable fragment of EMVDs, it is a natural question to ask how IAs, FDs, and INDs interact. We aim at helping address this current gap in the existing rich theory of relational data dependencies. Adding further to the challenge it is important to note that IAs still form an embedded fragment of EMVDs, in contrast to MVDs which are a class of full dependencies. Somewhat surprisingly, already the interaction of IAs with just keys is intricate \cite{HannulaKL14,HannulaKL16}. For example, unrestricted implication is finitely axiomatizable but finite implication is not for keys and unary IAs (those with singleton attribute sets), while the finite and unrestricted implication problems coincide and enjoy a finite axiomatization for IAs and unary keys (those with a singleton attribute set).


\begin{table*}[t]
\caption{Subclasses of FD+IND+IA. We write ``ui'' and ``fi''   for unrestrited and finite implication, respectively. }\label{newresults}
\begin{center}
\scalebox{.725}{
\begin{tabular}{llll}
\toprule
	class &  ui = fi  & complexity: ui / fi & finite axiomatization: ui / fi  \\\midrule
	FD  & yes \cite{armstrong74} & linear time \cite{BeeriB79}  & yes ($2$-ary) \cite{armstrong74}\\
	IND &  yes \cite{CasanovaFP84journal}  & $\PSPACE$-complete \cite{CasanovaFP84journal}& yes ($2$-ary) \cite{CasanovaFP84journal}\\
	IA  & yes  \cite{geiger:1991,KontinenLV13,paredaens:1980}& cubic time \cite{geiger:1991,KontinenLV13}& yes ($2$-ary) \cite{geiger:1991,KontinenLV13,paredaens:1980} \\
	 IND+IA	& yes    & $\PSPACE$-complete &   yes ($3$-ary)        \\
	 	FD+IA, FD+UIA & no \cite{HannulaKL16} & ? / ? &  ? / no \\
	FD+IND  & no \cite{chandra85,mitchell83} & undecidable / undecidable \cite{chandra85,mitchell83}& no / no \cite{chandra85,mitchell83}  \\
 FD+UIND & no \cite{CosmadakisKV90} & cubic time / cubic time \cite{CosmadakisKV90} & yes / no (infinite) \cite{CosmadakisKV90} \\
  UFD+UIND & no \cite{CosmadakisKV90} & linear time / linear time \cite{CosmadakisKV90} & yes  / no (infinite) \cite{CosmadakisKV90} \\
UFD+UIND+IA & no & cubic time / cubic time & yes  / no (infinite)    \\\bottomrule
\end{tabular}
}
\end{center}
\end{table*}

In this article we make the following contributions.
\begin{enumerate}
\item For the combined class of IAs and INDs,  finite and unrestricted implication coincide and we establish a finite axiomatization. We further show that implication for this class is PSPACE-complete and fixed-parameter tractable in the maximum arity of the given dependencies. As these results already hold for INDs \cite{CasanovaFP82,CasanovaFP84journal}, adding IAs to INDs adds significant expressivity without penalties in terms of computational properties. This is in sharp contrast to adding IAs to keys \cite{HannulaKL14,HannulaKL16}.
\item For the combined class of FDs and IAs, finite and unrestricted implication differ \cite{HannulaKL14,HannulaKL16}. We show that finite implication is not finitely axiomatizable, already for binary FDs (those with a two-element attribute set on the left-hand side) and unary IAs. For the combined class of IAs and unary FDs, we show that finite and unrestricted implication coincide and establish a finite axiomatization. Hence, the situation for the combined class of FDs and IAs is more intricate than for the combined class of FDs and MVDs, where finite and unrestricted implication coincide, which enjoy an elegant finite axiomatization \cite{BeeriFH77}, and for which implication can be decided in almost linear time~\cite{Galil82}.
\item For the combined class of IAs, unary FDs, and UINDs, we prove existence of finite Armstrong relations, establish axiomatizations for their finite and unrestricted implication problems, and show that both are decidable in low-degree polynomial time. This is analogous to the results for the combined class of FDs, MVDs, and UINDs. To the best of our knowledge, the class of IAs, unary FDs, and UINDs is only the second known class for which the finite and unrestricted implication differ but both are decidable in low-degree polynomial time. The class is practically relevant as it covers arbitrary independence atoms on top of unary keys and unary foreign keys, and already unary keys and unary foreign keys occur readily in practice \cite{CosmadakisKV90}. The significant difference to FDs, MVDs, and UINDs is the more intricate interaction between FDs and IAs in comparison to FDs and MVDs. Since unary FDs and INDs frequently occur in database practice in the form of surrogate keys and foreign keys that reference them, the ability to reason efficiently about IAs, UFDs, and UINDs is good news for data management. Finally, trading in restrictions of the arity on INDs and FDs for restrictions on the arity of IAs cannot be successful:  Finite implication for unary IAs and binary FDs is not finitely axiomatizable, see 2).
\item For the combined class of IAs and INDs, and the combined class of IAs and FDs, we establish tractable conditions sufficient for non-interaction, in both the finite and unrestricted cases. For the class of IAs and INDs, the condition ensures that we can apply known algorithms for deciding implication of the individual classes of IAs and INDs, respectively, to decide implication for an input that combines both individual classes. For the general class of IAs and FDs the decidability of finite and unrestricted implication are both still open. Instances of the finite or unrestricted implication problems that meet the non-interaction conditions can therefore be decided efficiently by using already known algorithms for the sole class of IAs and the sole class of FDs.
\end{enumerate}

\textbf{Organization.} We illustrate some use cases for our work in Section~\ref{sec:motivation}. In Section~\ref{sect:preli} we present all the necessary definitions for the article. Section~\ref{sect:IND+IA} examines the axiomatic characterization of the combined class of INDs and IAs, and Section~\ref{sect:FD+IAgen} addresses the combined class of FDs and IAs. In Section~\ref{sect:UFD+UIND+IA} we focus on the combination of UFDs, UINDs, and IAs, and establish axiomatizations for their finite and unrestricted implication problems. Section~\ref{sect:poly} identifies polynomial-time criteria for the non-interaction between INDs and IAs, and also between FDs and IAs. Finally, in Section~\ref{sect:comp} we discuss the computational complexity of the implication problems considered. We conclude in Section~\ref{sec:conclusion} where we also list some direction for future work.

\section{Motivating Showcases}\label{sec:motivation}

We use a few showcases to illustrate how knowledge about independence atoms can advance data management. FDs and INDs do not require further motivation but the more we know about the interaction of IAs with FDs and INDs, the more we can advance data management.

\subsection{Advanced integrity management}

We use a simplified example to illustrate how IAs, FDs and INDs can be used to manage data integrity in databases. For this purpose, consider the four relation schemata
\begin{itemize}
\item \textsc{Patient}=\{\textit{p\_id},\textit{p\_name}\},
\item \textsc{Test}=\{\textit{t\_id},\textit{t\_desc}\},
\item \textsc{Heart}=\{\textit{p\_id},\textit{p\_name},\textit{t\_id}\}, and
\item \textsc{Disorder}=\{\textit{p\_id},\textit{t\_id},\textit{confirmed}\},
\end{itemize}

in which basic information about patients and medical tests is stored. In particular, \textsc{Heart} stores which medical tests for a specific heart disorder were performed on which patients, and \textsc{Disorder} stores all those tests performed on patients which have been diagnosed with the disorder. In addition, the following set $\Sigma$ of FDs, INDs, and IAs has been specified:
\begin{itemize}
\item $\sigma_1=\textsc{Patient}:\;\textit{p\_id}\rightarrow\textit{p\_name}$,
\item $\sigma_2=\textsc{Test}:\;\textit{t\_id}\rightarrow\textit{t\_desc}$,
\item $\sigma_3=\textsc{Heart}[\textit{p\_id},\textit{p\_name}]\subseteq \textsc{Patient}[\textit{p\_id},\textit{p\_name}]$,
\item $\sigma_4=\textsc{Heart}[\textit{t\_id}]\subseteq \textsc{Test}[\textit{t\_id}]$,
\item $\sigma_5=\textsc{Heart}:\; \textit{p\_id}\boto\textit{t\_id}$,
\item $\sigma_6=\textsc{Disorder}[\textit{p\_id}]\subseteq \textsc{Heart}[\textit{p\_id}]$,
\item $\sigma_7=\textsc{Disorder}[\textit{t\_id}]\subseteq \textsc{Heart}[\textit{t\_id}]$,
\item $\sigma_8=\textsc{Disorder}:\; \textit{confirmed}\boto\textit{confirmed}$.
\end{itemize}

Note that not all constraints need to be enforced strictly. For example, violations of $\sigma_5$ may issue alerts about patients that still have to undergo remaining tests. The IA $\textit{confirmed}\boto\textit{confirmed}$ expresses that all tuples have the same value on attribute $\textit{confirmed}$. There are a number of interesting dependencies that are implied by $\Sigma$. Firstly, the IND $\sigma_3$ together with the FD $\sigma_1$ finitely imply the FD $\sigma_9=\textsc{Heart}:\;\textit{p\_id}\rightarrow\textit{p\_name}$. In turn, the FD $\sigma_9$ and the IA $\sigma_5$ together finitely imply the IA $\sigma_{10}=\textsc{Heart}:\;\textit{p\_id,p\_name}\boto\textit{t\_id}$ and thus also $\sigma_{11}=\textsc{Heart}:\;\textit{p\_name}\boto\textit{t\_id}$. Finally, the INDs $\sigma_6$ and $\sigma_7$ and the IA $\sigma_5$ together finitely imply the IND
\[\sigma_{12}=\textsc{Disorder}[\textit{p\_id,t\_id}]\subseteq \textsc{Heart}[\textit{p\_id,t\_id}]\;.\]
In particular, the last interaction is very relevant in practice. While the two UINDs $\sigma_6$ and $\sigma_7$ do not together imply the IND $\sigma_{12}$, knowing that the IA $\sigma_5$ holds on the referenced schema, tells us that $\sigma_{12}$ also holds on the referencing schema. It may be more natural to specify $\sigma_{12}$ in the first place, instead of specifying $\sigma_6$ and $\sigma_7$, but enforcing these two unary INDs and the IA $\sigma_5$ is more efficient than enforcing the binary IND $\sigma_{12}$ and the IA $\sigma_5$~\cite{DBLP:conf/edbt/MemariL15}.

\subsection{Query optimization}

As another example for the usefulness of understanding the interaction between IAs, FDs, and INDs, we consider query optimization. The famous division operator  $\pi_{XY}(R)\div\pi_Y(R)$ returns all those $X$-values $x$ such that for every $Y$-value $y$ there is some tuple $t$ with $t(X)=x$ and $t(Y)=y$ \cite{DBLP:persons/Codd72}. The ability of the division operator to express universal quantification makes it very powerful. The validity of independence is intrinsically linked to the optimization of the division operator, as our following result suggests.

\begin{theorem}
For all relations $r$ over $R$, $\pi_{XY}(R)(r)\div\pi_Y(R)(r)=\pi_X(R)(r)$ if and only if $r$ satisfies $X\bot Y$.
\end{theorem}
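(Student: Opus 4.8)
The plan is to reduce the claimed set identity to the defining equation of the independence atom by unfolding the definition of relational division. First I would fix two conventions. Since the divisor $\pi_Y(R)(r)$ has attribute set $Y$ and the dividend $\pi_{XY}(R)(r)$ has attribute set $X\cup Y$, the quotient has attribute set $X\setminus Y$; for the displayed equation to be type-correct I read $X$ and $Y$ as disjoint, so that $X\cup Y$ is their disjoint union and each pair $(x,y)$ with $x\in r[X]$ and $y\in r[Y]$ names a unique tuple over $X\cup Y$. And I would use the standard ``$\pi_X$-restricted'' definition of division, $Q\div S=\{x\in\pi_X(Q):\{x\}\times S\subseteq Q\}$, which keeps the quotient finite even when $S$ is empty. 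With $Q=r[XY]$ and $S=r[Y]$ this already gives, for free, $r[XY]\div r[Y]\subseteq r[X]$; so the theorem is equivalent to showing that the reverse inclusion $r[X]\subseteq r[XY]\div r[Y]$ holds exactly when $r$ satisfies $X\bot Y$.

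For the core step I would just unfold both sides. By the definition of division, $r[X]\subseteq r[XY]\div r[Y]$ iff every $x\in r[X]$ satisfies $\{x\}\times r[Y]\subseteq r[XY]$, and, taking the union of these conditions over all $x\in r[X]$, this is equivalent to $r[X]\times r[Y]\subseteq r[XY]$. On the other hand the inclusion $r[XY]\subseteq r[X]\times r[Y]$ holds in every relation, because the restriction of any tuple $s\in r$ to $X\cup Y$ is the pairing of $s[X]\in r[X]$ with $s[Y]\in r[Y]$. Hence $r[X]\subseteq r[XY]\div r[Y]$ iff $r[XY]=r[X]\times r[Y]$; and, for disjoint $X$ and $Y$, the join $r[X]\bowtie r[Y]$ appearing in the definition of an IA is precisely this Cartesian product, so $r[XY]=r[X]\times r[Y]$ is exactly the condition $r\models X\bot Y$. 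Chaining these equivalences with the free inclusion from the first paragraph closes the argument.

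I do not expect a real obstacle here; the only thing to watch is the bookkeeping of conventions. In particular, the naive definition $Q\div S=\{x:\{x\}\times S\subseteq Q\}$ would return all tuples over $X$ when $S=\emptyset$ and would falsify the statement for $r=\emptyset$, whereas the $\pi_X$-restricted version makes the empty relation fall out of the general argument (both sides empty, and $X\bot Y$ holding trivially). Likewise the disjointness of $X$ and $Y$ has to be stated, since for overlapping attribute sets the quotient is not even a relation over $X$. Once these points are pinned down, the proof is exactly the short chain of equivalences sketched above.
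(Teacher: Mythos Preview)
Your argument is correct and is essentially the same as the paper's: both reduce the equality $\pi_{XY}(r)\div\pi_Y(r)=\pi_X(r)$ to the Cartesian-product characterization $\pi_X(r)\times\pi_Y(r)=\pi_{XY}(r)$ of $X\boto Y$. The only cosmetic difference is that the paper starts from the algebraic identity $\pi_{XY}(r)\div\pi_Y(r)=\pi_X(r)-\pi_X\big((\pi_X(r)\times\pi_Y(r))-\pi_{XY}(r)\big)$ rather than your element-wise definition of division, which lets it skip the separate treatment of the two inclusions; your added remarks on disjointness of $X,Y$ and on the $\pi_X$-restricted quotient are sound bookkeeping (and in this paper relations are non-empty by convention, so the $r=\emptyset$ corner case does not arise).
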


\begin{proof} The division operator is defined as follows:
\begin{align*}
&\pi_{XY}(R)(r)\div\pi_Y(R)(r)=\pi_X(R)(r)-\\
&\pi_X((\pi_X(R)(r)\times\pi_Y(R)(r))-\pi_{XY}(R)(r))\;,
\end{align*}
and $r$ satisfies $X\bot Y$ if and only if $\pi_X(R)(r)\times\pi_Y(R)(r)=\pi_{XY}(R)(r)$. The result follows directly.
\end{proof}

In particular, the validity of an IA reduces the qua\-dra\-tic complexity of the division operator to a linear complexity of a simple projection \cite{DBLP:conf/pods/LeindersB05}, as illustrated next on our running example. Suppose, we would like to return the p\_id of people that have undergone all tests listed for the specific heart disorder we consider. We can express this query by a division operator as follows: $\pi_{p\_id,t\_id}(\textsc{Heart})\div\pi_{t\_id}(\textsc{Heart})$. In SQL, the query would have to use double-negation as in:

\begin{center}
\begin{tabular}{l}
SELECT H0.p\_id FROM \textsc{Heart} H0 \\
WHERE NOT EXISTS \\
\hspace*{.5cm}SELECT $\ast$ FROM \textsc{Heart} H1 \\
\hspace*{.5cm}WHERE NOT EXISTS \\
\hspace*{1cm}SELECT $\ast$ FROM \textsc{Heart} H2 \\
\hspace*{1cm}WHERE H2.t\_id = H1.t\_id AND \\
\hspace*{2cm}H2.p\_id = H0.p\_id\;; \\
\end{tabular}
\end{center}
However, if a query optimizer can notice that the IA $\sigma_{11}$ is implied by the enforced set $\Sigma$ given above, then the query can be rewritten into
\begin{center}
\begin{tabular}{l}
SELECT p\_id \\
FROM \textsc{Heart}\,;
\end{tabular}
\end{center}
While the set of our constraints is weakly acyclic~\cite{DBLP:journals/tcs/FaginKMP05}, our query is not ``path-conjunctive" and the chase \& backchase algorithm from~\cite{DBLP:conf/vldb/DeutschPT99} cannot be applied.

\subsection{Inference control}

Our second example is database security. More specifically, the aim of inference control is to protect private data under inferences that clever attacks may use to circumvent access limitations \cite{DBLP:journals/ijisec/BiskupB04}. For example, the combination of a particular patient name (say Jack) together with a particular medical examination (say angiogram) may be considered a secret, while access to the patient name and access to the medical examination in isolation may not be a secret. However, in some given context such as a procedure to diagnose some condition, all patients may need to undergo all examinations. That is, the information about the patient is independent of the information about the examination. Now, if the secret (Jack, angiogram) must not be revealed to an unauthorized user that can query the data source, then this user must not learn both: that Jack is a patient undergoing the diagnosis of the condition, and that angiogram is a medical examination that is part of the process for diagnosing the condition. Being able to understand the interaction of independence atoms with other database constraints can therefore help us to protect secrets under clever inference attacks.

\subsection{Data profiling}

Our final example is data profiling. Here we would like to demonstrate that independence atoms do occur in real-world data sets. For that purpose, we have mined some well-known publicly available data sets that have been used for the mining of other classes of data dependencies before \cite{DBLP:journals/pvldb/PapenbrockEMNRZ15}. We report the basic characteristics of these data sets in the form of their numbers of rows and columns, and list the number of maximal IAs and the maximum arity of those found. Here, an IA $X\boto Y$ is maximal in a given set of IAs if there is no other IA $V\boto W$ in the set such that $V\subseteq X$ and $W\subseteq Y$ holds. The arity of an IA is defined as the total number of attribute occurrences.

\begin{center}
\scalebox{.85}[.85]{
\begin{tabular}{c@{\hspace*{.25cm}}c@{\hspace*{.25cm}}c@{\hspace*{.25cm}}c@{\hspace*{.25cm}}c}\hline
\textit{Data set} & \textit{Number of columns} & \textit{Number of rows} & \textit{Number of IAs} & \textit{Maximum arity}\\ \hline
bridges        & 13 & 108 & 4 & 3 \\
echocardiogram & 13 & 132 & 5 & 4 \\
adult          & 14 & 48,842 & 9 & 3 \\
hepatitis      & 20 & 155 & 855 & 6 \\
horse          & 27 & 368 & 112 & 3 \\ \hline
\end{tabular}
}
\end{center}

It should be stressed that the usefulness of these IAs is not restricted to those that are semantically meaningful. For example, the optimizations for the division operator also apply to IAs that ``accidentally'' hold on a given data set. For the future, we envision that data profiling tools can also keep profiles of sophisticated notions of independence atoms. For example, in the data set \emph{hepatitis} one desires non-bias and therefore an independence atom $\emph{age}\bot\emph{sex}$ to hold. Indeed, if this independence held we would know that the number of distinct tuples in the projection of \emph{hepatitis} onto age and sex would be the product of the distinct tuples in the projection onto age and in the projections onto sex, however we have \[|\emph{hepatitis}[\emph{age},\emph{sex}]|=0.612\times|\emph{hepatitis}[\emph{age}]|\times|\emph{hepatitis}[\emph{sex}]|\;.\] This is valuable information that could be profiled.

\section{Preliminaries}\label{sect:preli}
We usually write $A,B,C,... $ for attributes, $X,Y,Z, ... $ for either sets or sequences of attributes, depending on the context. For two sets (sequences) $X$ and $Y$, we write $XY$ for their union (concatenation). Similarly, we may write $A$ instead of the single element set or sequence that consists of $A$. 
The size of a set (or length of a sequence) $X$ is denoted by $|X|$.

A \emph{relation schema} is a set of attributes $A$, each with a \emph{domain} $\Dom(A)$, and by a \emph{database schema} we denote a pairwise disjoint sequence of relation schemata. A \emph{tuple} over a relation schema $R$ is a function that maps each $A\in R$ to $\Dom(A)$. A \emph{relation} $r$ over $R$ is a non-empty set of tuples over $R$. To emphasize that  $r$ is a relation over $R$, we sometimes write  $r[R]$. A \emph{database} over a database schema $R_1, \ldots ,R_n$ is a sequence of relations $(r_1[R_1], \ldots ,r_n[R_n])$. A \emph{finite} relation over $R$ is a non-empty, finite set of tuples over $R$, and a finite database is a sequence of finite relations.
For a tuple $t$ and a relation $r$ over $R$ and $X\sub R$, $t(X)$ is the \emph{restriction} of $t$ to  $X$, and $r(X)$ is the set of all restrictions $t(X)$ where $t\in r$. If 
$X=(A_1,\ldots ,A_n)$ is a sequence of attributes, then we write $t(X)$ for  $(t(A_1), \ldots ,t(A_n))$.

We exclude empty relations from our definition. This is a practical assumption with no effect when single relation schemata are considered only. However, on multiple relations it has an effect, e.g., the rule $\uax{3}$ in Table \ref{tab-rules3} becomes unsound.

Syntax and semantics of FDs, INDs, and IAs are as follows. Let $d=(r_1[R_1], \ldots ,r_n[R_n])$ be a database. For two sequences of distinct attributes $A_1, \ldots ,A_n\in R_i$ and $B_1, \ldots ,B_n\in R_j$, $R_i[A_1\ldots A_n ]\sub R_j[B_1\ldots B_n]$ is an \emph{inclusion dependency} with semantics defined by $d\models  R_i[A_1\ldots A_n] \sub R_j[ B_1\ldots B_n]$ if for all $t\in  r_i$ there is some $t'\in r_j$ such that $t(A_1)=t'(B_1), \ldots ,t(A_n)=t'(B_n)$. For two (not necessarily disjoint) sets  of attributes $X,Y\sub R_i$,  $R_i:X\boto Y$ is an \emph{independence atom} with semantics: $d\models R_i:X\boto Y$ if for all $ t,t'\in r_i$ there exists $ t''\in r_i$ such that $t''(X)=t(X)$ and $ t''(Y)=t'(Y)$. For two sets of attributes  $X,Y\sub R_i$,  $R_i:X\to Y$ is a \emph{functional dependency} with semantics: $d\models R_i:X\to Y$ if for all $ t,t'\in r_i$, $t(X)=t'(X)$ implies $t(Y)=t'(Y)$. We may exclude  relation schemata from the notation if they are clear from the context (e.g. write $X\boto Y$ instead of $R_i: X\boto Y$). A \emph{disjoint} independence atom (DIA) is an IA $X\boto Y$ where  $X\cap Y$ is empty. 
We say that an IND is \emph{$k$-ary} if it is of the form $A_1\ldots A_k\sub B_1\ldots B_k$. An IA  $X \boto Y$ and an FD $X\to Y$ are called $k$-ary if $\max\{|X|,|Y|\}= k$. A class of dependencies is called $k$-ary if it contains at most $k$-ary dependencies.  We add ``U'' to a class name to denote its unary subclass, e.g., UIND denotes the class of all unary INDs. Similarly, for $k\geq 2$ we add ``$k$'' to a class name to denote its $k$-ary subclass. We use ``$+$'' to denote unions of classes, e.g., IND+IA denotes the class of all inclusion dependencies and independence atoms. Note that the semantics of IAs implies:
\begin{itemize}
\item[*]$d\models R_i:X\boto X$, if \\ for all $ s,s'\in r_i$ it holds that $s(X)=s'(X)$.
\end{itemize}
Hence, unary FDs of the form $\emptyset \to A$ and unary IAs of the form $A\boto A$ are also called \emph{constancy atoms} (CAs). 

The \emph{restriction} of a dependency $\sigma$ to a set of attributes $R$, written $\sigma\upharpoonright R$, is $X\cap R \to Y\cap R$ for an FD $\si$ of the form $X\to Y$, and $X\cap R\boto Y\cap R$ for an IA $\si$ of the form $X\boto Y$. If $\sigma$ is an IND of the form $A_1\ldots A_n\sub B_1\ldots B_n$ and $i_1, \ldots ,i_k$ lists $\{i=1, \ldots ,n:A_i\in R\textrm{ and }B_i\in R\}$, then $\sigma\upharpoonright R=A_{i_1}\ldots A_{i_k}\sub B_{i_1}\ldots B_{i_k}$. For a set of dependencies $\Sigma$, the restriction of $\Sigma$ to $R$, written $\Sigma\upharpoonright R$, is the set of all  $\sigma\upharpoonright R$ where $ \sigma \in \Sigma$. Let $A$ and $B$ be attributes from $R$. By $\si(R:A\mapsto B)$ we denote dependencies obtained from $\si$ by replacing any number of occurrences of $A$ with $B$.

A set $\mathfrak{R}$ of rules of the form $\si_1, \ldots ,\si_n \Rightarrow \si$ is called an \emph{axiomatization}. A rule of the previous form is called $n$-ary, and an axiomatization consisting of at most $n$-ary rules is called $n$-ary. A \emph{deduction} from a set of dependencies $\Si$ by an axiomatization $\mathfrak{R}$ is a sequence of dependencies $(\si_1, \ldots ,\si_n)$ where each $\si_i$ is either an element of $\Si$ or follows from $\si_1, \ldots ,\si_{i-1}$ by an application of a rule in $\mathfrak{R}$. In such an occasion we write $\Si\vdash_{\mathfrak{R}}\si$, or simply $\Si\vdash \si$ if $\mathfrak{R}$ is known.

Given a finite set of database dependencies $\Si\cup\{\si\}$, the (finite) unrestricted implication problem is to decide whether all (finite) databases that satisfy $\Si$ also satisfy $\si$, written $\Si\models \si$ ($\Si\fmodels \si$). An axiomatization $\mathfrak{R}$ is \emph{sound} for the unrestricted implication problem of a class of dependencies $\C$ if for all finite sets $\Si\cup\{\si\}$ of dependencies from $\C$, $\Si\vdash_{\mathfrak{R}} \si \Rightarrow \Si\models \si$; it is \emph{complete} if $ \Si\models \si\Rightarrow  \Si\vdash_{\mathfrak{R}} \si $. Soundness and completeness for finite implication are defined analogously.

We assume that all our axiomatizations are attribute-bounded. A sound and complete axiomatization is said to be \emph{attribute-bounded} if it does not introduce new attributes, i.e., any implication of $\si$ by $\Si$ can be verified by a deduction in which only attributes from $\Si$ or $\si$ appear \cite{chandra85}. It is easy to see that a 
finite (attribute-bounded) axiomatization gives rise to a decision procedure for the associated implication problem. The converse is not necessarily true; join dependencies consitute a class that is associated with a decidable implication problem, yet they lack finite axiomatization \cite{Petrov:1989}.  Consider then the class FD+IND+IA. Clearly, both sets $\{(\Si,\si)\mid \Si \models \si\}$ and $\{(\Si,\si)\mid \Si \not\fmodels \si\}$ are recursively enumerable; the first via reduction to the validity problem of first-order logic, and the second by checking through whether some finite relation satisfies $\Si\cup\{\neg \si\}$. Consequently, given a subclass $\mathcal{C}$ of FD+IND+IA, the unrestricted and finite implication problems for $\mathcal{C}$ are decidable whenever these two problems coincide.

Many of our completeness proofs utilize the chase technique (see, e.g., \cite{AbiteboulHV95}). The chase provides a general tool for reasoning about various dependencies as well as for optimizing conjunctive queries.  Given an implication problem for $\si$ by $\Si$, the starting point of the chase is a simple database falsifying $\si$. For instance, the chase for independence atoms starts with a unirelational database consisting of two rows that disagree on all attributes. Using some dedicated set of chase rules, this initial database is then completed to another database satisfying $\Si$. If the new database satisfies also $\si$, then one concludes that the implication holds.  For some classes, such as embedded multivalued dependencies, the  chase does not necessarily terminate. In those cases only a semi-decision procedure is obtained.


\textbf{Axiomatizations.} Tables \ref{tab-rules}, \ref{tab-rules2}, and \ref{tab-rules3}  present  the axiomatizations considered in this article. In Table \ref{tab-rules}, the axiomatization $\mathfrak{I}:=\{\tax{1}, \ldots ,\tax{5}\}$ is sound and complete for independence atoms alone  \cite{HannulaKL16,KontinenLV13}. The rules $\fax{1}, \fax{2} ,\fax{3}$ form the Armstrong axiomatization for functional dependencies \cite{armstrong74}, and the rules $\fiax{1}$ and $\fiax{2}$ describe simple interaction between independence atoms and functional dependencies. Table \ref{tab-rules2} depicts the sound and complete axiomatization of inclusion dependencies introduced in \cite{CasanovaFP82,CasanovaFP84journal}. Table \ref{tab-rules3} presents rules  describing interaction between inclusion dependencies and independence atoms.

We conclude this section by stating the soundness of the axioms in Tables \ref{tab-rules}, \ref{tab-rules2}, and \ref{tab-rules3}. The proof is a straightforward exercise and left to the reader. Note that soundness of $\uax{3}$ follows only if databases are not allowed to contain empty relations.

\begin{theorem}\label{soundness}
The axiomatization $\mathfrak{A}\cup\mathfrak{B}\cup\mathfrak{C}$ is sound for the unrestricted and finite implication problems of FD+IND+IA.
\end{theorem}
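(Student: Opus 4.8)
The plan is the routine verification: for each inference rule $\sigma_1,\ldots,\sigma_k\Rightarrow\sigma$ of $\mathfrak{A}\cup\mathfrak{B}\cup\mathfrak{C}$, fix an arbitrary database $d=(r_1,\ldots,r_n)$ satisfying $\sigma_1,\ldots,\sigma_k$ and check directly from the semantics that $d\models\sigma$. None of these arguments uses finiteness of any $r_i$, so the same case analysis settles soundness for the unrestricted and for the finite implication problem at once; the one place where a hypothesis on the relations enters at all is the standing convention that relations are non-empty, which is needed exactly for $\uax{3}$, as the remark preceding the theorem already points out.

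For $\mathfrak{A}$: the five rules $\tax{1},\ldots,\tax{5}$ each concern a single relation $r_i$ and are immediate from the witness formulation of $d\models R_i:X\boto Y$ --- symmetry by interchanging the two input tuples, the decomposition-type rule by restricting a witness for a larger attribute set to a subset, the exchange/mixing-type rule by feeding the witness produced by one premise into the other premise, and the constancy-type rule by noting that $R_i:X\boto X$ forces $r_i(X)$ to be a singleton. The rules $\fax{1},\fax{2},\fax{3}$ are the classical reflexivity, augmentation, and transitivity arguments for FDs, and $\fiax{1},\fiax{2}$ are handled by taking a witness tuple supplied by the IA premise and then applying the functional determination given by the FD premise to it.

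For $\mathfrak{B}$ one re-checks the Casanova--Fagin--Papadimitriou rules (trivial inclusion, projection/permutation of coordinates, transitivity of INDs) straight from the definition of $d\models R_i[\vec A]\subseteq R_j[\vec B]$. The genuinely new content is in $\mathfrak{C}$, where the uniform pattern is: given a tuple $t$ in a referencing relation, chase it along the IND premise(s) into the referenced relation, combine the resulting tuples there using the witness property of the IA premise, and read the combined tuple back as the witness demanded by the conclusion. For instance, the rule underlying the derivation $\sigma_6,\sigma_7,\sigma_5\models\sigma_{12}$ from Section~\ref{sec:motivation} is sound because from $t\in r_i$ the two unary INDs give tuples $t_1,t_2$ in the referenced relation agreeing with $t$ on the two coordinates separately, and the IA on that relation then supplies a single tuple agreeing with $t$ on both coordinates. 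For $\uax{3}$ the same style of argument needs a starting tuple from a relation that the rule does not otherwise constrain, which is precisely where non-emptiness is invoked. The only real obstacle here is bookkeeping rather than mathematics: one must track which relation each auxiliary tuple lives in when several INDs are chained, and one must treat the degenerate IAs with overlapping or coinciding attribute sets (the constancy atoms $A\boto A$ and FDs $\emptyset\to A$) as separate cases, since there a witness tuple collapses information and the generic argument must be specialised. Once these cases are dispatched, soundness of every rule is immediate and the theorem follows.
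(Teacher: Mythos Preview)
Your proposal is correct and is exactly the approach the paper has in mind: the paper's own proof consists of the single sentence ``The proof is a straightforward exercise and left to the reader,'' together with the remark that soundness of $\uax{3}$ requires the standing assumption that relations are non-empty. You have simply spelled out that exercise, and your observation about where non-emptiness enters matches the paper's.
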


\begin{table}[h]
\[\fbox{$\begin{array}{c@{\hspace*{.25cm}}c}
\cfrac{}{\emptyset\boto X} & \cfrac{X\boto Y}{Y\boto X} \\
\text{(trivial independence, $\tax{1}$)} & \text{(symmetry, $\tax{2}$)}\\

\\ \\

\cfrac{X\boto YZ}{X\boto Y}&\cfrac{X\boto Y\quad XY\boto Z}{X\boto YZ}   \\
 \text{(decomposition, $\tax{3}$)}& \text{(exchange, $\tax{4}$)}\\

\\ \\

\cfrac{X\boto Y \quad Z\boto Z}{X\boto YZ} &	 \cfrac{}{XY\to Y} \\
\text{(weak composition, $\tax{5}$)} & \text{(reflexivity, $\fax{1}$)}

\\ \\

\cfrac{X \to Y \quad Y\to Z}{X\to Z}    &	  \cfrac{X\to Y}{XZ\to YZ}  \\
 \text{(transitivity, $\fax{2}$)}& \text{(augmentation, $\fax{3}$)} \\

\\ \\

 \cfrac{X\boto Y \quad X\to Y}{\emptyset\to Y}    &   \cfrac{X\boto YZ \quad Z\to V}{X\boto YZV}\\
   \text{(constancy, $\fiax{1}$)} &   \text{(composition, $\fiax{2}$)}

\end{array}$}\]
\caption{Axiomatization $\mathfrak{A}$ for FDs and IAs \label{tab-rules}. We define $\mathfrak{I}:=\{\tax{1},\ldots ,\tax{5}\}$ and $\mathfrak{A}^*:=\mathfrak{A}\setminus\{\tax{5},\fax{3}\}$.}
\vspace{-3mm}
\end{table}


\section{Independence Atoms and Inclusion Dependencies}\label{sect:IND+IA}
In this section we establish a set of inference rules that is proven sound and complete for the unrestricted and finite implication problems of independence atoms and inclusion dependencies. This axiomatization consists of rules $\mathfrak{C}$ describing interaction between the two classes (see Table \ref{tab-rules3}) and two sets $\mathfrak{I}:=\{\tax{1},\ldots ,\tax{5}\}$ and $\mathfrak{B}$ of complete rules for both classes in isolation (see Tables \ref{tab-rules} and \ref{tab-rules2}, resp.).
Furthermore, as a consequence of the completeness proof we obtain that the finite and unrestricted implication problems coincide for IND+IA. In addition, our completeness proof enables us to construct Armstrong databases for this class of constraints, and to simplify the implication problem for a subclass of IND+IA.

\subsection{Axiomatization}

\begin{table}[h]
\[\fbox{$\begin{array}{c}

\cfrac{}{R[X] \sub R[X] }
\\
\text{(reflexivity, $\mathcal{U}1$)}
 \\ \\

 \cfrac{R[X]\sub R'[Y]  \quad R'[Y]  \sub R''[Z] }{R[X]\sub  R''[Z]}
 \\
 \text{(transitivity, $\mathcal{U}2$)}

 \\ \\

\cfrac{R[A_1 \ldots A_n] \sub R'[B_1\ldots B_n] }{R[A_{i_1}\ldots A_{i_m}] \sub R'[B_{i_1}\ldots B_{i_m}]}{(^*)}
\\
\text{(projection and permutation, $\mathcal{U}3$)}\\
\text{($^*$) $i_j$ are pairwise distinct and from $\{1, \ldots ,n\}$}  

\end{array}$}\]
\caption{Axiomatization  $\mathfrak{B}$ for INDs \label{tab-rules2}}
\vspace{-3mm}
\end{table}
\begin{table}[h]
\[\fbox{$\begin{array}{c}

\cfrac{R[X] \sub R'[Z ] \quad R[Y] \sub R'[W] \quad R'[Z \boto W]}{R[X Y]\sub R'[Z W]} \\
\text{(concatenation, $\mathcal{UI}1$)}

\\ \\

\cfrac{R[X Y] \sub R'[ZW] \quad R'[ZW]\sub R[XY] \quad R'[Z\boto W]}{R[X\boto Y]}
\\
\text{(transfer, $\mathcal{UI}2$)}

 \\ \\

 \cfrac{R[X]\sub R'[Y] \quad R': Y \boto Y}{R'[Y] \sub R[X]}

\\
\text{(symmetry, $\mathcal{UI}3$)}

\\ \\
\cfrac{R[X]\sub R'[Y] \quad R':Y\boto Y}{R:X \boto X}

\\
\text{(constancy, $\mathcal{UI}4$)}

\\ \\
\cfrac{R[A]\sub R'[C]\quad R[B]\sub R'[C]\quad R': C \boto C\quad \sigma}{\sigma(R:A\mapsto B)}

\\
\text{(equality, $\mathcal{UI}5$)}

\end{array}$}\]
\caption{Axiomatization $\mathfrak{C}$ for IAs and INDs \label{tab-rules3}}
\vspace{-3mm}
\end{table}
We start with the following simplifying lemma which  reduces one finite IND+IA-implication problem  to another that is not associated with any constancy atoms, i.e., IAs of the form $X\boto X$. Note that we write $\Sigma \dashv\vdash \Sigma'$ if $\Sigma \vdash\Sigma'$ and $\Sigma' \vdash\Sigma$.

\begin{lemma}\label{help}
Let $\Sigma$ be a set of IAs and INDs over schema $R_1, \ldots ,R_n$, and let $\C:=\bigcup_{i=1}^n\{A\in R_i\mid \Sigma \vdash R_i:A \boto A\}$. Let $\Sigma_0$ and $\sigma_0$ be the restrictions of $\Sigma$ and $\sigma$ to the attributes not in $\C$, and let $\Sigma_1$ be obtained from $\Si$ by
\begin{enumerate}[label=\emph{(\arabic*)}]
\item replacing  $R_i:X\boto Y\in \Sigma$ with $R_i:X\setminus \C\boto Y\setminus \C$,
 \item adding $R_i:A_1\ldots A_{j}\boto A_{j+1}\ldots A_m(R_i\setminus \C) $, where  $A_1, \ldots ,A_m$ is some   enumeration of $R_i\cap \C$ and $j=1, \ldots ,m$,
 \item adding $R_j[B]\sub R_i[A]$ if $\Sigma\vdash R_i[A]\sub R_j[B]\wedge R_j:B \boto B$.
  \end{enumerate}
  Then $\Sigma \dashv\vdash\Sigma_0\cup\Sigma_1 \cup \{R_i:  R_i\cap \C\boto R_i\cap \C\mid  i=1, \ldots ,n\}$ and
\begin{enumerate}[label=\emph{(\roman*)}]
\item $\sigma$ is an IA: $\Sigma \fmodels \sigma \Rightarrow \Sigma_0 \fmodels \sigma_0$,
\item $\sigma$ is an IND: $\Sigma \fmodels \sigma \Rightarrow \Sigma_1\fmodels \sigma$.
\end{enumerate}
\end{lemma}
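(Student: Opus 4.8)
The plan is to treat the two halves of the statement separately: first the syntactic equivalence $\Sigma \dashv\vdash \Sigma_0\cup\Sigma_1\cup\{R_i: R_i\cap\C\boto R_i\cap\C\}$, then the two semantic reductions (i) and (ii). For the syntactic direction $\vdash$ (from $\Sigma$ to the right-hand side) I would argue that every generator of the right-hand side is deducible from $\Sigma$ using $\mathfrak{A}\cup\mathfrak{B}\cup\mathfrak{C}$: the atoms $R_i: R_i\cap\C\boto R_i\cap\C$ follow from the definition of $\C$ together with weak composition ($\tax{5}$) applied repeatedly (each $A\in R_i\cap\C$ satisfies $\Sigma\vdash A\boto A$, and $\tax{5}$ lets us compose these constancy atoms into the single atom on $R_i\cap\C$); the atoms in clause (2) follow from the same constancy atoms plus $\tax{1}$ (trivial independence gives $\emptyset\boto A_{j+1}\ldots A_m$) and $\tax{5}$/$\tax{4}$; the modified IAs in clause (1) follow from the original IAs by decomposition $\tax{3}$; the INDs in clause (3) are literally the hypotheses of $\uax{3}$'s companion rule $\uax{3}$ (symmetry for INDs, $\mathcal{UI}3$) which is assumed available; and $\Sigma_0$ is the restriction, obtained via $\uax{3}$ (projection) for INDs and $\tax{3}$ for IAs. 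For the converse direction $\dashv$, I would show each $\sigma\in\Sigma$ is recovered: an original IND $R_i[A]\sub R_j[B]$ is split into its $\C$-part and non-$\C$-part and reassembled using $\uax{3}$ (permutation/projection) in $\mathfrak{B}$ together with concatenation $\mathcal{UI}1$ — here the constancy atom on the referenced schema supplies the required $Z\boto W$ hypothesis; an original IA is recovered from its clause-(1) image and the constancy atoms via $\tax{5}$ and $\tax{4}$ (exchange), padding the stripped-out $\C$-coordinates back in.

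For part (i), suppose $\sigma$ is an IA $R_i: X\boto Y$ and $\Sigma\fmodels\sigma$; I want $\Sigma_0\fmodels\sigma_0$ where $\sigma_0 = R_i: X\setminus\C\boto Y\setminus\C$. The natural route is contrapositive: take a finite model $d_0$ of $\Sigma_0$ violating $\sigma_0$, and extend it to a finite model $d$ of $\Sigma$ violating $\sigma$. The extension adds back the $\C$-attributes; since every attribute in $\C$ is forced to be constant by $\Sigma$ (soundness of the rules defining $\C$, via Theorem~\ref{soundness}), we assign each such attribute a single fixed value across all tuples of the relevant relation, and choose these values consistently across relations so that all INDs touching $\C$-attributes (including those added in clause (3)) are satisfied — consistency is exactly what clauses (2) and (3) of the construction of $\Sigma_1$ encode. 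One checks that (a) all IAs of $\Sigma$ still hold, because restricting a satisfied IA to constant columns is harmless and padding with constant columns preserves independence, and (b) the violation of $\sigma_0$ in $d_0$ lifts to a violation of $\sigma$ in $d$, because the two witnessing tuples still disagree on $X\setminus\C$ (resp.\ $Y\setminus\C$) and no witness tuple $t''$ can exist since it would already have to exist in the projection. Part (ii) is handled symmetrically: if $\sigma$ is an IND and $\Sigma\fmodels\sigma$, show $\Sigma_1\fmodels\sigma$, again by taking a finite model of $\Sigma_1$ violating $\sigma$ and observing it is already a model of $\Sigma$ — here we use that $\Sigma_1\vdash\Sigma$ restricted appropriately, i.e. every model of $\Sigma_1\cup\{R_i: R_i\cap\C\boto R_i\cap\C\}$ is a model of $\Sigma$, so in particular we may need to first quotient the model so that the $\C$-columns become genuinely constant (they are forced to be, up to the equivalence induced by $\mathcal{UI}5$, the equality rule) before concluding.

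The main obstacle I anticipate is the bookkeeping in part (i): verifying that the back-and-forth extension genuinely yields a model of \emph{all} of $\Sigma$, in particular that the INDs among relations that mix $\C$- and non-$\C$-attributes are satisfied after the constant columns are glued on. This is where clause (3) — adding the reverse IND $R_j[B]\sub R_i[A]$ whenever $\Sigma\vdash R_i[A]\sub R_j[B]\wedge R_j: B\boto B$ — does the real work, and one must check that the value chosen for a constant $\C$-column in relation $R_j$ is compatible with every relation $R_i$ that references it or is referenced by it; acyclicity is not available, so this needs the symmetry rule $\mathcal{UI}3$/$\mathcal{UI}4$ to show the "constancy" propagates in both directions along INDs, forcing a well-defined common value on each connected component of the "constancy graph." Once that compatibility is established the rest is routine. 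A secondary subtlety is ensuring finiteness is preserved throughout, which it is, since all constructions only add finitely many columns (and no new tuples) to a finite relation.
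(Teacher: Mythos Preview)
Your treatment of the syntactic equivalence $\Sigma \dashv\vdash \Sigma_0\cup\Sigma_1\cup\{R_i\cap\C\boto R_i\cap\C\}$ is more explicit than the paper's (which simply asserts it), and that part is fine.

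The genuine gap is in part (i). Your plan is to take a finite model $d_0$ of $\Sigma_0$ violating $\sigma_0$, and then extend it by assigning a single fixed value to each $\C$-attribute, leaving the non-$\C$ columns untouched. This does preserve all IAs of $\Sigma$, but it does \emph{not} in general preserve the INDs of $\Sigma$. The problem is that an IND $R_i[A_1\ldots A_n]\sub R_j[B_1\ldots B_n]\in\Sigma$ can have a position $k$ with $A_k\in\C$ but $B_k\notin\C$: from $\Sigma\vdash R_i[A_k]\sub R_j[B_k]$ and $R_i:A_k\boto A_k$ there is no rule that forces $R_j:B_k\boto B_k$ (semantically, $r_i(A_k)$ may be a singleton while $r_j(B_k)$ is not). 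In $\Sigma_0$ that position is dropped, so $d_0$ only guarantees matching on the remaining positions. After your extension, every tuple of $r_i$ carries the fixed constant $c_{A_k}$ on $A_k$, but there is no reason any tuple of $r_j$ that matches on the surviving positions also has $c_{A_k}$ on $B_k$. Clauses (2) and (3) of the construction of $\Sigma_1$ do not help here; they concern relationships among $\C$-attributes only.

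The paper's construction for (i) fixes exactly this: rather than leaving non-$\C$ columns alone, it replaces each tuple $t$ of each $r_i$ by \emph{all} tuples $t'$ with $t'(A)=0$ for $A\in R_i\cap\C$ and $t'(A)\in\{0,t(A)\}$ for $A\in R_i\setminus\C$, where $0$ is a fresh value. This blow-up ensures that the constant $0$ is available in every non-$\C$ column as well, so an IND with $A_k\in\C$, $B_k\notin\C$ is satisfied by taking the variant of the $d_0$-witness that sets $B_k$ to $0$. The freshness of $0$ is then used to argue that the violation of $\sigma_0$ persists: any putative witness in the blown-up relation projects back to a witness in $d_0$.

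For part (ii) your plan is directionally correct but vague. You propose to ``quotient the model so that the $\C$-columns become genuinely constant,'' but $\Sigma_1$ does not force constancy on $\C$-columns, and you have not said what the quotient is or why the violation of $\sigma$ survives it. The paper makes this explicit: starting from the violating tuple $t\in r'_l$, it defines a single tuple $t_0$ on $\C$ (with $t_0(A)=t(B)$ whenever $\Sigma\vdash R_i[A]\sub R_l[B]$ for $B\in R_l\cap\C$, and arbitrary otherwise), and sets $r_i := r'_i(R_i\setminus\C)\times\{t_0(R_i\cap\C)\}$. Well-definedness of $t_0$ requires first collapsing $R_l\cap\C$ by the equivalence $B\sim B'\Leftrightarrow \Sigma\vdash R_l[B]\sub R_l[B']$, which is where $\uax{5}$ is invoked. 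You allude to this with your reference to the equality rule, but the actual construction and its verification are missing.
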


\begin{proof}

Clearly we have that $\Sigma \dashv\vdash\Sigma_0\cup\Sigma_1 \cup\{R_i:  R_i\cap \C\boto R_i\cap \C\mid  i=1, \ldots ,n\}$. For claim (i) note that any finite  database $d=(r_1, \ldots ,r_n)$ satisfying $\Sigma_0 \cup\{\neg \sigma_0\}$ can be extended to a one satisfying $\Sigma \cup\{\neg \sigma\}$ by replacing in each $r_i\in d$ each tuple $t$ with all  tuples $t'$ such that $t'(A)=0$ for $A\in R_i\cap \C$ and $t'(A)\in \{0,t(A)\}$ for  $A\in R_i\setminus \C$, where $0$ is a value not appearing $d$. 

Next we show claim (ii). Assuming a finite database \[d'=(r'_1[R_1], \ldots ,r'_n[R_n])\] satisfying $\Sigma_1\cup\{\neg \sigma\}$ for $\si$ of the form $R_l[X] \sub R_{l'}[Y]$, we construct a finite database $d=(r_1[R_1], \ldots ,r_n[R_n])$ satisfying $\Sigma\cup\{\neg \sigma\}$. Let $t\in r'_l$ be such that  $t(X)\neq t'(Y)$ for all $t'\in r'_{l'}$. Let $t_0$ be an extension of $t(R_l\cap \C)$ to $\C$ such that, for $A\in R_i\cap \C$ and  $B\in R_l\cap  \C$, $t_0(A)=t(B)$ if   $\Sigma \vdash R_i[A]\sub R_l[B]$, and otherwise $t_0(A)$ is any value from $r'_i(A)$.  Note that we may assume without losing generality that $t_0$ is well-defined, i.e., for no distinct $B,B'\in R_l\cap \C$, $\Sigma \vdash R_l[B]\sub R_l[B']$.
For this, define an equivalence class $\sim$ on $R_l\cap\C$ such that $B\sim B'$ if $\Sigma \vdash R_l[B]\sub R_l[B']$. Using $\uax{5}$ it is then straightforward to show that $\Sigma \fmodels \sigma\Rightarrow\Sigma^* \fmodels \sigma^*$ and   $\Sigma^*\vdash \sigma^*\Rightarrow \Sigma\vdash \sigma$ where  $\Sigma^* \cup\{\sigma^*\}$ is the set of constraints obtained from $\Sigma\cup \{\sigma\}$ by replacing attributes in $R_l\cap\C$ with their equivalence classes.

Now, define $r_i:=r'_i(R_i\setminus \C)\times \{t_0(R_i\cap \C)\}$, for $i=1, \ldots ,n$. Since $t\in r_l$ and   $r_{l'}\sub r'_{l'}$ by items (2,3) and  the construction, we obtain that $d'\not\models R_l[X] \sub R_{l'}[Y]$.  It also easy to see by the construction that all IAs in $\Si$ remain true in $d$. Assume then that $R_i[X_1\ldots X_m]\sub R_j[Y_1\ldots Y_m]\in \Sigma$, and let $t\in r_i$. 
Since $Y_i\in \C$ implies $X_i\in \C$ and $t_0(X_i)=t_0(Y_i)$, we can assume that $Y_1, \ldots ,Y_m\not\in \C$. Hence, $r_{j}(Y_1 \ldots Y_m) =r'_{j}(Y_1 \ldots Y_m) $. Again, $r_i\sub r'_i$ by (2,3) and the construction, and $d'\models R_i[X_1\ldots X_m]\sub R_j[Y_1\ldots Y_m]$; hence we obtain that $d\models R_i[X_1\ldots X_m]\sub R_j[Y_1\ldots Y_m]$. This concludes case (ii) and the proof.
\end{proof}

The following lemma will be also helpful in the sequel.
\begin{lemma}\label{lemma:help3}
$XY\boto UV$ can be deduced from $XU\boto YV$, $X\boto U$, and  $Y\boto V$ by rules  $\tax{2},\tax{3},\tax{4}$.
\end{lemma}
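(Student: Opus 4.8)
The plan is to exhibit an explicit deduction that uses only $\tax{2}$, $\tax{3}$, $\tax{4}$ (symmetry, decomposition, exchange). The guiding observation is that the exchange rule $\tax{4}$, i.e.\ $\frac{X\boto Y\ \ XY\boto Z}{X\boto YZ}$, can absorb an \emph{entire} right-hand side into another independence atom; and since an application of $\tax{4}$ never enlarges the left-hand side of its conclusion, the target left-hand side $XY$ must first be assembled on the \emph{right} and only afterwards transposed by $\tax{2}$.

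Concretely, I would proceed in three short blocks. First, from the hypotheses $X\boto U$ and $XU\boto YV$ apply $\tax{4}$ (with $X,U,YV$ playing the roles of $X,Y,Z$) to get $X\boto UYV$; then $\tax{3}$ gives $X\boto UV$, and $\tax{2}$ gives $UV\boto X$. Second, from $XU\boto YV$ apply $\tax{2}$ to obtain $YV\boto XU$, then $\tax{4}$ on $Y\boto V$ and $YV\boto XU$ yields $Y\boto XUV$, and $\tax{2}$ gives $XUV\boto Y$. Third, reading $XUV$ as $UVX$, apply $\tax{4}$ to $UV\boto X$ and $UVX\boto Y$ to conclude $UV\boto XY$, and a final $\tax{2}$ delivers the desired $XY\boto UV$.

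Everything here is bookkeeping once the shape of the argument is fixed; the only things to verify are the trivial set-union identities $U(YV)=(UV)Y$, $V(XU)=XUV$, and $XUV=UVX$ that make the instances of $\tax{3}$ and $\tax{4}$ literally applicable. I do not expect a genuine obstacle. The one mildly subtle point worth flagging in the writeup is why decomposition $\tax{3}$ is used \emph{exactly once}: it is needed solely to strip the spurious $Y$ off $UYV$ and recover $UV\boto X$ before the symmetry step, so that the third exchange can splice $X$ and $Y$ together on the right-hand side rather than merely re-deriving one of the hypotheses. (As a sanity check, the corresponding semantic argument is immediate: given tuples $s,t$, pick a tuple $a$ witnessing $X\boto U$ with $a(X)=s(X),a(U)=t(U)$, a tuple $b$ witnessing $Y\boto V$ with $b(Y)=s(Y),b(V)=t(V)$, and then a tuple witnessing $XU\boto YV$ with $XU$-value $a(XU)$ and $YV$-value $b(YV)$; this tuple witnesses $XY\boto UV$ for $s,t$.)
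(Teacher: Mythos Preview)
Your proof is correct and follows essentially the same approach as the paper's: both derive $X\boto UYV$ via $\tax{4}$ on $X\boto U$ and $XU\boto YV$, derive $Y\boto XUV$ via $\tax{4}$ on $Y\boto V$ and $YV\boto XU$, and then combine the two with a final exchange step to produce $UV\boto XY$. The only cosmetic difference is on which side the single application of $\tax{3}$ lands---you decompose $X\boto UYV$ to $X\boto UV$, whereas the paper decomposes $Y\boto XUV$ to $Y\boto UV$---so the two derivations are mirror images of one another.
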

\begin{proof}
The following deduction shows the claim:\\

\infer[^{\tax{2}}]
	{XY \boto UV}
	{\infer[^{\tax{4}}]
     		{UV\boto XY}
     		{\infer[^{\tax{2}}]{UV\boto Y}{
     			\infer[^{\tax{3}}]{Y\boto UV}{
     				\infer[^{\tax{4}}]{Y\boto XUV}{Y\boto V &
     					\infer[^{\tax{2}}]{YV\boto XU}{XU\boto YV}}}}&
     		\infer[^{\tax{2}}]{YUV\boto X}{
     			\infer[^{\tax{4}}]{X\boto YUV}{X\boto U & XU\boto YV }}}}
\vspace{-2mm}
\end{proof}

Using the previous lemmata we can now state the completeness result.  The proof is divided into three subcases in which either CA, IND, or IA consequences are considered. By Lemma \ref{help} we may consider only IND+DIA-implication in the latter two cases. These cases are proved by a chase argument that  generalizes the completeness proof of IND-axioms presented in \cite{CasanovaFP84journal}.

\begin{theorem}\label{theorem:ind+ia}
The axiomatization $\mathfrak{I}\cup\mathfrak{B}\cup\mathfrak{C}$ is sound and complete for the unrestricted and finite implication problems of IA+IND.
\end{theorem}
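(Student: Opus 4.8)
The plan is to establish soundness (already done in Theorem~\ref{soundness}) and focus entirely on completeness. Given a set $\Sigma \cup \{\sigma\}$ of IAs and INDs with $\Sigma \models \sigma$ (or $\Sigma \fmodels \sigma$), I would first split into three cases according to the syntactic shape of $\sigma$: a constancy atom $R_i : A \boto A$, a general IND, or a general IA. The CA case should be handled more or less directly from the rules $\mathcal{UI}3$–$\mathcal{UI}5$ together with $\mathfrak{I}$, tracking how constancy propagates along INDs. For the other two cases, I would invoke Lemma~\ref{help} to reduce to an implication problem with \emph{no} constancy atoms among the derivable consequences, and then Lemma~\ref{lemma:help3} lets me further assume all IAs involved are disjoint (DIAs); so it suffices to prove completeness for IND+DIA-implication of a single IND or a single DIA.

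The core of the argument is a chase construction generalizing the IND-completeness proof of Casanova, Fagin, and Vardi~\cite{CasanovaFP84journal}. For the IND case, I would start from the two-tuple (or few-tuple) database that falsifies $\sigma$, then chase with the IND rules and the IA rules simultaneously: INDs force new tuples into referenced relations, while DIAs $Z \boto W$ in $\Sigma$ force the ``product completion'' of the relevant projections. The key invariant to maintain is that every tuple (or tuple fragment) appearing in the chase is ``justified'' by a derivable dependency — i.e., whenever a value of attribute $A$ in relation $R_i$ gets identified with, or copied to, attribute $B$ in $R_j$, we have $\Sigma \vdash R_i[A] \sub R_j[B]$ (or the appropriate concatenated IND), using $\mathcal{U}1$–$\mathcal{U}3$, $\mathcal{UI}1$, and $\mathcal{UI}5$. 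If the chase terminates with a database still falsifying $\sigma$, we are done; if instead $\sigma$ becomes satisfied at some stage, I would read off from the justifications a deduction $\Sigma \vdash \sigma$. The DIA case is analogous but the chase starts from the canonical two-row ``disagree everywhere'' relation for $R_i$, and the rule $\mathcal{UI}2$ (transfer) is what converts a pair of mutually-including INDs plus a DIA on the referenced side back into a DIA on the referencing side.

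Since the same finite chase works for both the finite and unrestricted versions, completeness for finite implication follows, and combined with the r.e.-ness observations from the preliminaries (both $\{(\Sigma,\sigma) : \Sigma \models \sigma\}$ and its finite-counterexample complement are r.e.) we conclude that finite and unrestricted implication coincide for IA+IND — this is the promised corollary. The main obstacle I anticipate is ensuring the chase actually terminates: INDs alone are well-behaved, but the interaction with IAs could in principle blow up the relation sizes, so I would need a careful bound on the chase (likely polynomial in $|\Sigma| + |\sigma|$ in the number of distinct ``value types'' that can appear, even if the relations themselves are large), and a careful bookkeeping argument that no genuinely new value-identifications arise that are not already sanctioned by $\mathfrak{I} \cup \mathfrak{B} \cup \mathfrak{C}$. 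A secondary subtlety is the no-empty-relations convention, which is exactly what makes $\mathcal{UI}3$ sound and must be respected when the chase introduces auxiliary relations; and one must double-check that the reductions in Lemma~\ref{help} are applied in the right order so that the constancy-free assumption is genuinely available in the chase cases.
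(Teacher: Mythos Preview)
Your plan is essentially the paper's own approach: case split on CA/DIA/IND, Lemma~\ref{help} to eliminate constancy, then a finite chase (with IND and IA rules) whose invariant records that every value pattern in a chase tuple is backed by a derivable IND. Termination is immediate because the chase never invents fresh values, so the relations live inside a fixed finite value set.

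Two corrections. First, Lemma~\ref{lemma:help3} is \emph{not} what gives you disjoint IAs; once Lemma~\ref{help} has removed all derivable constants, any $X\boto Y\in\Sigma$ with $A\in X\cap Y$ would yield the CA $A\boto A$ by $\tax{3}$, contradiction---so disjointness is automatic. Second, and more importantly, the step you gloss as ``read off from the justifications a deduction'' is the real work in the DIA case, and this is where Lemma~\ref{lemma:help3} actually enters. The invariant alone only gives you a derivable IND $R_l[A_1\ldots A_{h+k}]\sub R_l[A_1\ldots A_{h+k}]$, which is vacuous. What the paper does is argue by induction on the number of IA-chase steps in the sequence that produced the witnessing tuple: the last such step used some $X\boto Y\in\Sigma$ on tuples $t,t'$; via the invariant and $\uax{2}$ (transfer) this yields a derivable IA on $R_l$ whose two sides each mix left- and right-attributes of $\sigma$; the induction hypothesis applied separately to $t$ and $t'$ gives two smaller IAs, and Lemma~\ref{lemma:help3} is exactly the rearrangement lemma that combines these three into $A_1\ldots A_h\boto A_{h+1}\ldots A_{h+k}$. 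Without this induction the argument does not close.
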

\begin{proof}
 By Theorem \ref{soundness} the axiomatization is sound.
For completeness with respect to both implication problems, it suffices to show that finite implication entails derivability. For this, notice that unrestricted implication entails finite implication. Hence, assume that $\Sigma \models_{\rm FIN} \sigma$ for a finite set  $\Sigma \cup\{\sigma\}$ of IAs and INDs over database schema $R_1, \ldots ,R_n$. Let $\mathcal{C}:=\bigcup_{i=1}^n\{A\in R_i\mid \Sigma \vdash R_i:A \boto A\}$.  By $\tax{1}-\tax{3}$ we may assume without losing generality that $\sigma$ is either a CA, a DIA, or an IND. Next we show that $\Si\vdash \si$ in these three cases.\\
\textbf{1) $\sigma$ is a constancy atom.}  Assume that $\sigma$ is of the form $R_l: A \boto  A$, and assume to the contrary that $\Sigma \not\vdash \sigma$.  First let $\mathcal{I}$ be the set of attributes $B$ for which there is $i=1, \ldots ,n$  such that $\Sigma \vdash R_l[A]\sub R_i[B]$. Then let  $d=(r_1, \ldots ,r_n)$ be the database where $r_i:= {}^{R_i\cap \I}\{0,1\}\times {}^{R_i\setminus \I}\{0\}$. We show  that $d\models \Sigma\cup\{\neg \sigma\}$ which contradicts the assumption that $\Sigma \models_{\rm FIN} \sigma$. It is easy to see that $ d$ satisfies 
$\neg \sigma$. Furthermore, $d$ satisfies any $R_i[A_1\ldots A_m] \sub R_j[B_1\ldots B_m]$ from $ \Sigma$ 
because $A_i\in\I \Rightarrow B_i\in \I$ by  $\iax{2}$. 

Assume then that   $R_i:XZ\boto YZ\in \Sigma$ where $X$ and $Y$ are disjoint. By the construction, $d\models R_i:X\boto Y$, so it suffices to show that $d\models R_i:B\boto B$ for $B\in Z$. If $d\not\models R_i:B \boto B$, then by the construction $\Sigma \vdash R_l[A]\sub R_i[B]$. Moreover by $\uax{3}$, $\Sigma \vdash R_i[B]\sub R_l[A]$, and by $\uax{4}$, $\Sigma \vdash R_l:A\boto A$, contrary to the assumption. Hence $d\models R_i:B\boto B$ which concludes the proof of $d\models \Sigma \cup\{\neg\sigma\}$ and the case of $\sigma$ being a constancy atom.\\
\textbf{2) $\sigma$ is a disjoint independence atom.}
 Assume that $\sigma$ is a DIA of the form $R_l:A_1\ldots A_h \boto A_{h+1}, \ldots ,A_{h+k}$. By Lemma \ref{help} we may assume that $\Sigma$ is a set of  DIAs and INDs.
 Define first a database
 $d_0=(r_1[R_1], \ldots ,r_n[R_n])$
 such that
 \begin{itemize}
 \item $r_l=\{s,s'\}$ where $s$ and $s'$ map all attributes in $R_l$ to $0$ except that $s(A_i)=i$ for $i=1, \ldots ,h$ and  $s'(A_i) =i$ for  $i=h+1, \ldots ,h+k$;
 \item $r_i=\{u\}$ where $u$  maps all attributes in $R_i$ to $0$, for  $i\neq l$.
 \end{itemize}
The idea is to  extend $d_0$ to a database $d=(r_1,\ldots ,r_n)$ such that $d\models \Sigma$ and
\begin{itemize}
\item[*]  if $t\in r_i$ is such that $t(B_1)=i_1  , \ldots ,t(B_m)=i_m  $ and $0<  i_1< \ldots < i_m $, then $\Sigma \vdash R_l[A_{i_1}\ldots A_{i_m}]\sub R_i[B_1\ldots B_m]$.
\end{itemize}
We let $d$ be the result of chasing $d_0$ by $\Si$ over the following two chase rules, i.e., $d$ is obtained by applying rules (i-ii) to $d_0$ repeatedly until this is no more possible.
\begin{enumerate}[label=\emph{(\roman*)}]
\item Assume that $R[X]\sub R'[Y]\in \Sigma$ and $t\in r[R]$ is such that for no $t'\in r'[R']$, $t(X)=t'(Y)$. Then extend $r'$ with $t_{\rm new}$ that maps $Y$ pointwise to $t(X)$ and otherwise maps attributes in $R'$ to $0$.
\item Assume that $R:X \boto Y\in \Sigma$ and  $t,t'\in r[R]$ are such that for  no $t''\in r[R]$,  $t''(X) =t(X)$ and $t''(Y)=t'(Y)$. Then extend $r$ with $t_{\rm new}$ that agrees with $t$ on $X$, with $t'$ on $Y$, and maps every other attribute in $R$ to $0$.
\end{enumerate}
Note that since the range of the assigned values is finite, the process terminates. Hence, $d$ is a finite model of $\Si$, and therefore by the assumption  it satisfies $ \si$. It is  also straightforward to verify, using $\iax{1},\iax{3}$ at the initial stage, $\iax{2},\iax{3}$ in  items (i,ii), and $\tax{2},\tax{3},\uax{1}$ in item (ii), that $*$ is satisfied.

Since $d$ satisfies $\si$, we find a tuple from $r_l$  mapping $A_i$ to $i$ for $i=1, \ldots ,n$. 
It thus suffices to show that, given any sequence $\tuple d=(d_0, \ldots ,d_m)$, where $d_{i+1}$ is obtained from $d_i$ by applying (i) or (ii), and any $S\sub \{1, \ldots ,h+k\}$, if there is $t\in r_l$ for $d_m=(r_1, \ldots ,r_n)$ such that $t(A_i)=i$ for $i\in S$, then $\Si\vdash R_l: S\cap A_1\ldots A_h\boto S\cap A_{h+1}\ldots A_{h+k}$. We show this claim by induction on the number of applications of (ii) in $\tuple d$.

Assume first that no application of (ii) occurs. Then $t$ cannot  combine any values from both $s$ and $s'$ and hence $S$ cannot intersect both $\{1, \ldots ,h\}$ and $\{h+1, \ldots ,h+k\}$. Consequently, $R_l:S\cap A_1\ldots A_h\boto S\cap A_{h+1}\ldots A_{h+k}$ is derivable by  $\tax{1}$ and $\tax{2}$.

Let us then show the induction step. Assume that the claim holds for $j$ applications of (ii). We prove the claim for $S=\{1, \ldots ,h+k\}$ (the general case $S\subseteq \{1, \ldots ,h+k\}$ is analogous) and $\tuple d=(d_0, \ldots ,d_m)$ in which the number of applications of (ii) is $j+1$. Let $d_k$ be the database obtained by applying (ii) for the last time, say with regards to some $R: X\boto  Y$ and tuples $t,t',t_{\rm new}$. Without loss of generality we may assume that the sequence $(d_{k+1},\ldots ,d_m)$ is obtained by a chain of applications of (i) copying $1, \ldots ,h+k$ from $t_{\rm new}(B_1), \ldots ,t_{\rm new}(B_{h+k})$ to $t_0(A_1), \ldots ,t_0(A_{h+k})$, for some $B_1, \ldots  ,B_{h+k}\in  X  Y$ and $t_0$ from database $d_m$. Otherwise, step $k$ can be omitted and the claim follows by induction assumption. Now, using repeatedly $\iax{2},\iax{3}$ we obtain that
\begin{equation}\label{eq1}
\Sigma \vdash R_i[B_1\ldots B_{h+k}]\sub R_l[ A_1\ldots A_{h+k}].
\end{equation}
Moreover, since $*$ is satisfied with regards to $d_m$ we have
\begin{equation}\label{eq2}
\Sigma \vdash R_l[A_1\ldots A_{h+k}]\sub R_i[ B_1\ldots B_{h+k}].
\end{equation}
Let us then define another sequence of databases \[\tuple d'=(d_0, \ldots ,d_{k-1},d'_{k+1},\ldots ,d'_m)\] in which $t_{\rm new}$ does not appear and $(d'_{k+1},\ldots ,d'_m)$ is obtained by a chain of applications of (i) copying  $t(B_1), \ldots ,t(B_{h+k})$ to $t_1(A_1), \ldots ,t_1(A_{h+k})$ for some $B_1, \ldots  ,B_{h+k}\in  X  Y$ and $t_1$ from database $d'_m$. Let \[B_{i_1} \ldots  B_{i_a} B_{i_{a+1}} \ldots B_{i_b} B_{i_{b+1}}\ldots B_{i_{c}} B_{i_{c+1}} \ldots B_{i_{d}}\] relist  $B_1\ldots B_{h+k}$ so that
\begin{itemize}
 \item 
 $B_{i_1}, \ldots , ,B_{i_b}\in  X$ and $B_{i_{b+1}}, \ldots ,B_{i_d}\in  Y$
  \item
     $\{i_1, \ldots ,i_a, i_{b+1}, \ldots ,i_{c}\} = \{1, \ldots ,h\}$ and\\
      $\{i_{a+1}, \ldots ,i_b, i_{c+1}, \ldots ,i_{d}\} =\{ h+1, \ldots ,h+k\}$.
  \end{itemize}
Since $\Si \vdash R:B_{i_1} \ldots B_{i_b}\boto B_{i_{b+1}}\ldots B_{i_d}$ by $\tax{2}$ and $\tax{3}$, we obtain using $\eqref{eq1}$, $\eqref{eq2}$,  $\iax{3}$, and $\uax{2}$ that
\begin{equation}\label{eq3}
\Si \vdash R_l:A_{i_1} \ldots A_{i_b}\boto A_{i_{b+1}}\ldots A_{i_d}.
\end{equation}
 Furthermore, we observe that
 \[t_1(A_{i_1}, ,\ldots ,A_{i_b})=t(B_{i_1}, \ldots ,B_{i_b})=t_{\rm new}(B_{i_1}, \ldots ,B_{i_b})=(i_1, \ldots ,i_b).\]
 Since $\tuple d'$ contains only $j$ applications of (ii), it follows by induction assumption that $\Si\vdash R_l:A_{i_1}\ldots A_{i_{a}}\boto A_{i_{a+1}}\ldots A_{i_{b}}$. It can be shown by an analogous argument that $\Si\vdash R_l:A_{i_{b+1}}\ldots A_{i_{c}}\boto A_{i_{c+1}}\ldots A_{i_{d}}$. By Lemma \ref{lemma:help3} these two and \eqref{eq3} imply that $\Sigma \vdash R:A_1\ldots A_h\boto A_{h+1}\ldots A_{h+k}$. This concludes the induction proof and the case of $\sigma$ being a disjoint IA.\\

\textbf{3) $\sigma$ is an inclusion dependency.}
Assume that $\sigma$ is an IND of the form $R_l[X] \sub R_{l'}[ Y]$. By Lemma \ref{help} we may assume that $\Sigma$ is a set of  IAs and disjoint INDs. We let $d=(r_{1}, \ldots ,r_{n})$ where $r_1, \ldots ,r_{l-1},r_{l+1}, \ldots ,r_n$ are single rows of $0$'s, and $r_l=\{s\}$ for a tuple  $s:A_i\mapsto i$, where $R_l=\{A_1, \ldots ,A_m\}$. It suffices then to chase $d$ by $\Sigma$ with rules (i,ii), and show that the resulting database $d''$ satisfies *. Since this is analogous to the previous case, we  omit  the proof here.
\end{proof}

We obtain the following corollaries. For the first corollary, note that in the last two cases of the previous proof none of the rules $\tax{5},\uax{3},\uax{4},\uax{5}$ are applied. The second corollary follows directly from the previous theorem which shows that the same axiomatization  characterizes both implication problems.
\begin{corollary}\label{corollary:IA+IND}
The axiomatization $\{\tax{1},\tax{2},\tax{3},\tax{4}\}\cup \{\uax{1},\uax{2}\}\cup\mathfrak{B}$ is sound and complete for the unrestricted and finite implication problems of DIA+IND.
\end{corollary}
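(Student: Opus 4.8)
The plan is to read the corollary off from the proof of Theorem~\ref{theorem:ind+ia} by checking that, on inputs drawn from DIA+IND, that argument never invokes any of $\tax{5},\uax{3},\uax{4},\uax{5}$, so that the smaller system $\{\tax{1},\tax{2},\tax{3},\tax{4}\}\cup\{\uax{1},\uax{2}\}\cup\mathfrak{B}$ already suffices. Soundness is immediate, since this rule set is contained in $\mathfrak{I}\cup\mathfrak{B}\cup\mathfrak{C}$, which is sound for both the unrestricted and the finite implication problems by Theorem~\ref{soundness}.

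For completeness it again suffices to show that finite implication entails derivability in the smaller system, since unrestricted implication entails finite implication; establishing $\Sigma\fmodels\sigma\Rightarrow\Sigma\vdash\sigma$ then yields completeness in the unrestricted case as well. So fix a finite set $\Sigma\cup\{\sigma\}$ of DIAs and INDs with $\Sigma\fmodels\sigma$. First I would observe that $\Sigma$ entails no constancy atom: the database in which each $r_i$ is the full binary cube $\{0,1\}^{R_i}$ satisfies every DIA (its two sides being disjoint) and every IND (the referenced side having pairwise distinct attributes), yet no attribute is constant in it; hence by soundness of $\mathfrak{I}\cup\mathfrak{B}\cup\mathfrak{C}$ we get $\Sigma\not\vdash R_i:A\boto A$ for all $i$ and all $A\in R_i$. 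Consequently the set $\C$ of Lemma~\ref{help} is empty, that lemma's reduction is the identity on $\Sigma$ and on $\sigma$, and no appeal to Lemma~\ref{help} — which is where $\uax{5}$ would enter — is needed. Moreover, since $\sigma\in$ DIA+IND it is a disjoint IA or an IND, in particular never a constancy atom, so among the three cases of the proof of Theorem~\ref{theorem:ind+ia} only cases~2 and~3 can occur.

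It then remains to walk through cases~2 and~3 of the proof of Theorem~\ref{theorem:ind+ia} and confirm that every rule used there lies in the smaller system. The initial stage of the chase and the chase rules (i)--(ii) appeal only to $\iax{1},\iax{2},\iax{3}$ (all in $\mathfrak{B}$), to $\uax{1}$, and to $\tax{2},\tax{3}$; the base case of the induction uses $\tax{1},\tax{2}$; the inductive step that reads off property~$*$ uses $\iax{2},\iax{3}$, $\uax{2}$, $\tax{2},\tax{3}$, the induction hypothesis, and Lemma~\ref{lemma:help3}, whose deduction uses only $\tax{2},\tax{3},\tax{4}$. None of $\tax{5},\uax{3},\uax{4},\uax{5}$ ever appears — indeed the last three all manipulate or create constancy atoms, which by the above cannot arise. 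Since the chase terminates, $\Sigma\fmodels\sigma$ forces its output, a finite model of $\Sigma$, to satisfy $\sigma$, and reading off property~$*$ produces a deduction of $\sigma$ inside $\{\tax{1},\tax{2},\tax{3},\tax{4}\}\cup\{\uax{1},\uax{2}\}\cup\mathfrak{B}$, exactly as in Theorem~\ref{theorem:ind+ia}.

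The step I expect to need the most care is this last line-by-line audit of cases~2 and~3 (and of Lemma~\ref{lemma:help3}): it is the only place where a stray use of $\tax{5}$ or of an interaction rule other than $\uax{1},\uax{2}$ could slip in, and the preparatory observation $\C=\emptyset$ — which is what licenses skipping Lemma~\ref{help} and case~1 altogether — must be in place before that audit is meaningful.
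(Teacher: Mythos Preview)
Your proposal is correct and follows essentially the same route as the paper, which proves the corollary in one line by observing that cases~2 and~3 of Theorem~\ref{theorem:ind+ia} never invoke $\tax{5},\uax{3},\uax{4},\uax{5}$. Your binary-cube argument showing $\C=\emptyset$ is a clean way to make explicit why the appeal to Lemma~\ref{help} (and with it any use of $\uax{5}$) is vacuous when $\Sigma$ already lies in DIA+IND---a point the paper leaves implicit.
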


\begin{corollary}\label{cor:coincide}
The finite and unrestricted implication problems of IND+IA coincide.
\end{corollary}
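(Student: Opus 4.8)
The plan is to obtain the corollary as an immediate consequence of Theorem~\ref{theorem:ind+ia}, which exhibits a \emph{single} axiomatization $\mathfrak{I}\cup\mathfrak{B}\cup\mathfrak{C}$ that is simultaneously sound and complete for both the finite and the unrestricted implication problems of IND+IA. From this the coincidence of the two problems follows by a short diagram chase over the four soundness/completeness statements.

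Concretely, I would first record the trivial direction: every finite relation (resp.\ database) is in particular a relation (resp.\ database), so $\Sigma\models\sigma$ implies $\Sigma\fmodels\sigma$ for every finite set $\Sigma\cup\{\sigma\}$ of IAs and INDs. For the converse, suppose $\Sigma\fmodels\sigma$. By completeness of $\mathfrak{I}\cup\mathfrak{B}\cup\mathfrak{C}$ for finite implication (Theorem~\ref{theorem:ind+ia}) there is a deduction $\Sigma\vdash_{\mathfrak{I}\cup\mathfrak{B}\cup\mathfrak{C}}\sigma$; applying soundness of the same axiomatization for unrestricted implication (Theorem~\ref{soundness}) yields $\Sigma\models\sigma$. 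Combining the two directions gives $\Sigma\models\sigma$ iff $\Sigma\fmodels\sigma$, which is exactly the assertion of the corollary.

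There is essentially no obstacle left at this point, because all of the real work has already been carried out inside the proof of Theorem~\ref{theorem:ind+ia}. The place where the finite and unrestricted cases are reconciled is the chase-termination argument in cases 2) and 3) of that proof: since the range of values assigned by the chase rules (i) and (ii) remains finite, the model of $\Sigma$ falsifying $\sigma$ that is constructed there is automatically finite, so a single counterexample simultaneously witnesses failure of finite and of unrestricted implication. If one preferred a self-contained justification, one could instead invoke the observation from Section~\ref{sect:preli} that $\{(\Sigma,\sigma)\mid\Sigma\models\sigma\}$ and $\{(\Sigma,\sigma)\mid\Sigma\not\fmodels\sigma\}$ are both recursively enumerable, so that possessing one complete finite axiomatization for both notions forces their coincidence; but deriving the statement directly from Theorem~\ref{theorem:ind+ia} is the cleanest route.
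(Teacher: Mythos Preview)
Your proposal is correct and matches the paper's own justification: the corollary is stated to follow directly from Theorem~\ref{theorem:ind+ia}, since one and the same axiomatization is sound and complete for both implication problems. Your additional remarks on chase termination and recursive enumerability are accurate but go beyond what the paper spells out.
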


\subsection{Armstrong Databases}

Furthermore, it is straightforward to construct an Armstrong database based on all counterexample constructions. Given a class $\mathcal{C}$ of dependencies, an \emph{Armstrong database}  for a set $\Sigma$ of $\mathcal{C}$-dependencies is a relation that satisfies all $\mathcal{C}$-dependencies implied by $\Sigma$ and does not satisfy any $\mathcal{C}$-dependency not implied by $\Sigma$ \cite{DBLP:journals/jacm/Fagin82}. Armstrong databases are perfect sample databases as they reduce the implication problem for checking whether an arbitrary $\mathcal{C}$-dependency $\varphi$ is implied by a fixed set $\Sigma$ of $\mathcal{C}$-dependencies to checking whether $\varphi$ is satisfied in a $\mathcal{C}$-Armstrong database for $\Sigma$. This concept is useful for sample-based schema designs of databases and helps with acquisition of data dependencies that are meaningful for a given application domain \cite{DBLP:journals/is/LangeveldtL10,DBLP:journals/jcss/MannilaR86}. We say that some database is an Armstrong database with respect to finite implication if we replace above "implied" by "finitely implied". A uni-relational Armstrong database is called an \emph{Armstrong relation}.

\begin{theorem}
Let $\Sigma$ be a finite set of IA+IND. Then $\Sigma$ has a finite Armstrong database.
\end{theorem}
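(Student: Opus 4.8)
The plan is to follow the standard recipe for Armstrong databases: for every IA+IND dependency over the schema that is \emph{not} implied by $\Sigma$ we take a finite counterexample, and we then combine all these counterexamples into a single finite database by an operation that preserves both satisfaction and violation of independence atoms and inclusion dependencies.

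First, since the database schema $R_1,\ldots,R_n$ underlying $\Sigma$ is fixed and finite, there are only finitely many IA+IND dependencies over it (an IND is determined by two equally long sequences of distinct attributes, an IA by two attribute subsets of some $R_i$). List those that are not implied by $\Sigma$ as $\varphi_1,\ldots,\varphi_p$; then every IA+IND dependency over the schema is either implied by $\Sigma$ or equal to some $\varphi_j$. By Corollary~\ref{cor:coincide} none of the $\varphi_j$ is finitely implied either, so the chase-based counterexample constructions in the proof of Theorem~\ref{theorem:ind+ia} yield, for each $j$, a \emph{finite} database $d_j=(r^j_1,\ldots,r^j_n)$ with $d_j\models\Sigma$ and $d_j\not\models\varphi_j$. (If $p=0$, any single-tuple database over the schema already works.)

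Next I would define $d=(r_1,\ldots,r_n)$ as the direct product $d_1\otimes\cdots\otimes d_p$: each $r_i$ consists of the tuples assigning to $A\in R_i$ the value $(t_1(A),\ldots,t_p(A))$ for $t_1\in r^1_i,\ldots,t_p\in r^p_i$, which is finite and non-empty. The key lemma to prove is that for every IA or IND $\psi$ over the schema, $d\models\psi$ if and only if $d_j\models\psi$ for all $j$. The ``if'' direction is witnessed coordinatewise: an IND witness tuple, respectively an exchange tuple for an IA, chosen in each $d_j$ assembles into one in $d$. The ``only if'' direction is the contrapositive: a violating tuple (or pair of tuples) for $\psi$ in one factor $d_j$ can be padded with arbitrary values in the remaining coordinates to give a violating tuple (pair) in $d$, since already the $j$-th coordinate rules out any witness. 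Granting this, $d\models\Sigma$ because every $d_j$ does, while $d\not\models\varphi_j$ for every $j$; hence $d$ satisfies precisely the IA+IND dependencies implied by $\Sigma$ and is a finite Armstrong database, also with respect to finite implication by Corollary~\ref{cor:coincide}.

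I expect the only point needing a little care — rather than being a genuine obstacle — to be the ``only if'' direction of the product lemma, i.e. checking that violations really do propagate, especially for IAs whose two argument sets overlap and for INDs spanning two different relation schemata; in each case the obstruction to a witness is confined to the single offending coordinate, so the verification is routine bookkeeping.
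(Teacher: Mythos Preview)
Your proposal is correct and follows essentially the same approach as the paper: enumerate the finitely many non-implied IA+IND dependencies, take a finite counterexample for each (via the chase constructions of Theorem~\ref{theorem:ind+ia}), and form the direct product, verifying that satisfaction of IAs and INDs factors coordinatewise. The only cosmetic differences are that the paper reduces without loss of generality to the uni-relational case and leaves the product lemma as ``straightforward to verify,'' whereas you spell out the multi-relational product and the two directions of the lemma explicitly.
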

\begin{proof}
Without loss of generality we may consider only the uni-relational case.
We need to construct a finite relation $r$ which satisfies $\Sigma$ and falsifies any $\sigma$ not implied by $\Sigma$. Let $r_{\sigma}$ be a finite relation that satisfies $\Sigma$ and falsifies $\sigma$. We define an Armstrong relation $r$ as the set of tuples $t$ constructed as follows. First, from each $r_\sigma$ select some $t_\sigma$. Then $t$ maps $A$ to $(t_{\sigma_1}(A),\ldots ,t_{\sigma_n}(A))$ where $\sigma_1, \ldots ,\sigma_n$ is some enumeration of all IAs and INDs not implied by $\Si$. It is straightforward to verify that $r$ satisfies $\Sigma$ and falsifies each $\sigma_i$.
\end{proof}
This simple method can be extended to other dependency classes as well. However, in later sections we demonstrate how to combine Armstrong and counterexample constructions which has the effect of producing smaller Armstrong relations.

\subsection{Simplifying Implication}

Another consequence of Theorem \ref{theorem:ind+ia} is that the implication problem for UIND+CA by IND+IA can be determined by considering only interaction between UINDs and CAs.
For a set of dependencies $\Si$, define $\Sigma_{\rm CA}:=\{A\boto A\mid R:AX\boto AY\in \Si\}$ and  $\Sigma_{\rm UIND}:=\{R[A_i]\sub R'[B_i]\mid R[A_1 \ldots A_n]\sub R'[B_1\ldots B_n]\in \Si, i=1, \ldots ,n\}$. The following theorem now formulates this idea.
\begin{theorem}\label{theorem:uind+uca}
Let $\Sigma$ be a set of INDs and IAs, and let $\sigma$ be a UIND or a CA. The following are equivalent:
\begin{enumerate}[label=\emph{(\arabic*)}]
\item $\Sigma\models \sigma$,
\item $ \Sigma_{\rm UIND}\cup \Sigma_{\rm CA} \models  \sigma$,
\item $\sigma$ is derivable from $\Sigma_{\rm UIND}\cup \Sigma_{\rm CA}$ by $\iax{1},\iax{2},\uax{3},\uax{4}$.
\end{enumerate}
\end{theorem}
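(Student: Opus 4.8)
The plan is to prove the four implications $(3)\Rightarrow(2)$, $(2)\Rightarrow(1)$, $(1)\Rightarrow(2)$ and $(2)\Rightarrow(3)$; the first two are short and the last two carry the content. For $(3)\Rightarrow(2)$, Theorem~\ref{soundness} gives soundness of $\iax1,\iax2,\uax3,\uax4$, so a deduction of $\sigma$ from $\Sigma_{\rm UIND}\cup\Sigma_{\rm CA}$ with these rules already witnesses $\Sigma_{\rm UIND}\cup\Sigma_{\rm CA}\models\sigma$. For $(2)\Rightarrow(1)$, I would observe that $\Sigma$ implies every member of $\Sigma_{\rm UIND}\cup\Sigma_{\rm CA}$: each UIND in $\Sigma_{\rm UIND}$ is a projection of an IND of $\Sigma$ (rule $\iax3$), and each $R_i{:}\,A\boto A$ in $\Sigma_{\rm CA}$ stems from some $R_i{:}\,AX\boto AY\in\Sigma$, which implies it by decomposition $\tax3$ and symmetry $\tax2$; hence any model of $\Sigma$ is a model of $\Sigma_{\rm UIND}\cup\Sigma_{\rm CA}$, and $(2)$ transfers to $(1)$.

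\textbf{$(1)\Rightarrow(2)$.} I would prove the contrapositive. As $\mathrm{UIND}{+}\mathrm{CA}$ is a subclass of $\mathrm{IND}{+}\mathrm{IA}$, Corollary~\ref{cor:coincide} lets me take a \emph{finite} database $d$ with $d\models\Sigma_{\rm UIND}\cup\Sigma_{\rm CA}$ and $d\not\models\sigma$, and then chase $d$ by $\Sigma$ using the two chase rules from the proof of Theorem~\ref{theorem:ind+ia}, obtaining $d'$. The key is a column invariant: for every schema $R$ and attribute $C\in R$, the chase introduces in the $C$-column of the relation over $R$ only values that already occur there in $d$. Indeed, a tuple created by chasing an IND $R[P_1\ldots P_m]\sub R'[Q_1\ldots Q_m]$ takes the value $t(P_k)$ in the $Q_k$-column, which lies in the $P_k$-column of $d$, hence (since $R[P_k]\sub R'[Q_k]\in\Sigma_{\rm UIND}$ and $d\models\Sigma_{\rm UIND}$) in the $Q_k$-column of $d$; a tuple created by chasing an independence atom only reuses values of its two witnessing tuples. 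This invariant yields three things: the chase terminates, so $d'$ is finite; every column forced constant by $\Sigma_{\rm CA}$ (i.e.\ occurring on both sides of an IA of $\Sigma$) stays a singleton, so the independence-atom chase rule remains well defined on non-disjoint atoms and $d'\models\Sigma$; and $d'$ still violates $\sigma$, since the column value on which $d$ violated $\sigma$ is never added to the relevant target column. Hence $\Sigma\not\models\sigma$.

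\textbf{$(2)\Rightarrow(3)$.} It suffices to establish completeness of $\{\iax1,\iax2,\uax3,\uax4\}$ for the implication problem of an arbitrary finite set $\Gamma$ consisting of UINDs and CAs only, and then instantiate $\Gamma:=\Sigma_{\rm UIND}\cup\Sigma_{\rm CA}$. So assume $\Gamma\not\vdash_{\{\iax1,\iax2,\uax3,\uax4\}}\sigma$; I would build a finite $d\models\Gamma$ with $d\not\models\sigma$. Let $\mathcal K$ be the set of attributes $C$ with $\Gamma\vdash C\boto C$, and, for the distinguished attribute $A$ of $\sigma$ (in schema $R_l$), let $\mathcal S$ be the set of attributes $C$ with $\Gamma\vdash R_l[A]\sub R_{i_C}[C]$, where $R_{i_C}$ is the schema of $C$ (all derivations via the four rules). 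By $\iax1$, $A\in\mathcal S$; by $\iax2$, $\mathcal S$ is closed under following UINDs of $\Gamma$; by $\uax3$, $\mathcal S$ is closed under reversing a UIND of $\Gamma$ whose target lies in $\mathcal K$; and by $\uax4$, any attribute of $\mathcal S$ carrying a CA of $\Gamma$ forces $A$ itself to carry a CA. If $\sigma=R_l{:}\,A\boto A$, put $r_i:={}^{R_i\cap\mathcal S}\{0,1\}\times{}^{R_i\setminus\mathcal S}\{0\}$; if $\sigma=R_l[A]\sub R_{l'}[B]$, let $r_i$ consist of the tuples that are constantly $\star$ on $R_i\cap\mathcal S\cap\mathcal K$, arbitrary in $\{0,\star\}$ on $R_i\cap(\mathcal S\setminus\mathcal K)$, and $0$ off $\mathcal S$ (with $\star\neq 0$). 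In both cases $d\models\Gamma$: the UINDs of $\Gamma$ hold because $\mathcal S$ is closed under them and because, by the $\uax3$/$\uax4$ closure properties above, no UIND of $\Gamma$ points from a column outside $\mathcal S\cap\mathcal K$ (where $0$ is available) into a column inside $\mathcal S\cap\mathcal K$ (held at $\star$); the CAs of $\Gamma$ hold because an attribute carrying such a CA lies in $\mathcal K$, hence is either outside $\mathcal S$ (constant $0$) or in $\mathcal S\cap\mathcal K$ (constant $\star$), never in the free region $\mathcal S\setminus\mathcal K$. And $d\not\models\sigma$: in the CA case $A\in\mathcal S$ makes the $A$-column take both $0$ and $1$ (here $A\notin\mathcal K$, since $A\in\mathcal S\cap\mathcal K$ would by $\uax4$ make $\sigma$ derivable); in the UIND case $A\in\mathcal S$ puts a nonzero value into $r_l(A)$, while $B\notin\mathcal S$ --- which is exactly the hypothesis that $\sigma$ is not derivable --- forces $r_{l'}(B)=\{0\}$. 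Thus $\Gamma\not\fmodels\sigma$, so $\Gamma\vdash_{\{\iax1,\iax2,\uax3,\uax4\}}\sigma$.

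\textbf{Main obstacle.} I expect the hard part to be $(2)\Rightarrow(3)$: characterizing, purely through the four rules, which columns \emph{must} be constant ($\mathcal K$) and which values an attribute can reach ($\mathcal S$), and then exhibiting one finite relation that is simultaneously loose enough to violate $\sigma$ and tight enough to satisfy every UIND and CA of $\Gamma$. This is precisely where $\uax4$ (forcing constancy backwards along a UIND) and $\uax3$ (preventing a UIND from leaking the value $0$ into a forced-constant column reachable from $A$) are indispensable, so these steps need careful case analysis on the position of each UIND relative to $\mathcal S$ and $\mathcal K$. The remaining, more routine obstacle is verifying the column invariant in $(1)\Rightarrow(2)$: this is the step where satisfaction of $\Sigma_{\rm UIND}$ and $\Sigma_{\rm CA}$ by the starting database is actually used, and it is what keeps the chase finite and its independence-atom rule well defined on non-disjoint atoms.
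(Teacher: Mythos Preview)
Your argument is essentially correct, but it proceeds quite differently from the paper. The paper does not split $(1)\Rightarrow(2)\Rightarrow(3)$; instead it proves $(1)\Rightarrow(3)$ directly by a \emph{syntactic} argument: from Theorem~\ref{theorem:ind+ia} one obtains a deduction $(\sigma_1,\ldots,\sigma_m)$ of $\sigma$ from $\Sigma$ in the full system $\mathfrak{I}\cup\mathfrak{B}\cup\mathfrak{C}$, and one shows by induction on this deduction that every CA step and every unary projection of an IND step is already derivable from $\Sigma_{\rm UIND}\cup\Sigma_{\rm CA}$ using only $\iax1,\iax2,\uax3,\uax4$ (the only nontrivial rule to simulate is $\uax5$, and for UIND/CA conclusions it reduces to $\iax2,\uax3,\uax4$). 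This is considerably shorter than your route and makes the dependence on Theorem~\ref{theorem:ind+ia} explicit. Your route, by contrast, is semantic: a chase-based model transformation for $(1)\Rightarrow(2)$, and in $(2)\Rightarrow(3)$ a self-contained completeness proof for $\{\iax1,\iax2,\uax3,\uax4\}$ over UIND+CA via explicit countermodels. The latter is a nice by-product (it does not rely on Theorem~\ref{theorem:ind+ia}); your CA-case construction in fact coincides with Case~1 of the proof of Theorem~\ref{theorem:ind+ia}.

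One genuine technical gap in your $(1)\Rightarrow(2)$: the two chase rules from the proof of Theorem~\ref{theorem:ind+ia} assign the fixed value $0$ to the ``other'' attributes of a freshly created tuple. Your column invariant (``no new values enter any column'') is stated for all columns but is only argued for the target columns $Q_k$ of an IND step and the $X\cup Y$ columns of an IA step; on the remaining columns it can fail, since $0$ need not occur there in $d$. This matters twice: first, a non-disjoint IA $XZ\boto YZ\in\Sigma$ may have a $Z$-column polluted by such a $0$, breaking the constancy you need for the IA chase to be well defined; second, when $\sigma=R_l[A]\sub R_{l'}[B]$ and the witnessing value $a\in r_l(A)\setminus r_{l'}(B)$ happens to equal $0$, an IND chase step targeting $R_{l'}$ with $B$ as an ``other'' attribute would insert $0$ into $r_{l'}(B)$ and destroy the violation. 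The fix is easy---use any value already present in the respective column (relations are nonempty) instead of a fixed fresh $0$---but as written, referencing the paper's rules verbatim, the invariant and hence the argument do not go through.
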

\begin{proof}
It is clear that $(3)\Rightarrow (2)\Rightarrow (1)$. We show that $(1)\Rightarrow (3)$. By Theorem \ref{theorem:ind+ia}, there is a deduction $(\sigma_1, \ldots ,\sigma_m)$ from $\Sigma$ by  $\mathfrak{I}\cup\mathfrak{B}\cup\mathfrak{C}$ such that $\sigma_m=\sigma$. It is a straightforward induction to show that for all $i=1, \ldots ,m$:
\begin{itemize}
\item If $\sigma_i$ is   $R:A \boto A$, then $\sigma_i$ satisfies (3).
\item If $\sigma_i$ is   $R[A_1\ldots A_n]\sub R'[B_1\ldots B_n]$, then $\sigma_j:=R[A_j]\sub R'[B_j]$ satisfies (3), for $j=1, \ldots ,n$.
\end{itemize}
It is worth noting that every application of $\uax{5}$, where $\sigma(R:A\mapsto B)$ is a UIND or CA, can be simulated by $\iax{2},\uax{3},\uax{4}$. All the other cases are straightforward and left to the reader.
\end{proof}

\section{Independence Atoms and Functional Dependencies}\label{sect:FD+IAgen}

In this section we consider the interaction between functional dependencies and independence atoms. Already keys and IAs combined form a somewhat intricate class: Their finite and unrestricted implication problems differ and the former lacks a finite axiomatization \cite{DBLP:conf/cikm/HannulaKL14,HannulaKL16}. In Section \ref{sect:FD+IA} we will extend these results to the classes FD+IA and $2$FD+UIA. However, the interaction between unary FDs and IAs is less involved. In Section \ref{sect:UFD+IA} we will show that for UFD+IA unrestricted and finite implication coincide and the axiomatization $\mathfrak{A}^*$, depicted in Table \ref{tab-rules}, forms a sound and complete axiomatization.

For notational clarity we restrict attention to the uni-relational case from now on. That is, we  consider only those cases where databases consist of a single relation.

\subsection{Implication problem for FDs and IAs}\label{sect:FD+IA}

The following theorem enables us to separate the finite and unrestricted implication problems for FD+IA as well as for FD+UIA.
\begin{theorem}[\cite{HannulaKL14}]
The unrestricted and finite implication problems for keys and UIAs differ.
\end{theorem}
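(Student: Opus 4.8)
The plan is to exhibit a concrete finite set $\Sigma$ of keys and unary IAs together with a single dependency $\varphi$ (a key or a UIA) such that $\Sigma \fmodels \varphi$ but $\Sigma \not\models \varphi$; that is, some infinite relation satisfies $\Sigma$ yet violates $\varphi$, while no finite relation does. Since the statement is attributed to \cite{HannulaKL14}, I would first recall the flavour of the classical separating examples for FD+IND (e.g. the Cosmadakis--Kanellakis--Vardi cyclic examples), because the mechanism here should be similar: a unary IA $A \boto B$ forces, on a finite relation, that the projection onto $AB$ is a full Cartesian product $r(A)\times r(B)$, hence $|r(AB)| = |r(A)|\cdot|r(B)|$, a strong counting constraint that has no analogue on infinite relations. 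Combining such product constraints with keys (which bound how values on some attributes determine values on others) should let one set up a counting contradiction that only bites in the finite case.

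Concretely, I would look for a small schema, say with attributes among $\{A,B,C\}$, and a set $\Sigma$ containing an IA like $A \boto B$ and a few keys such as $C \to A$, $C \to B$, together with a ``feedback'' constraint that links $C$ back to a pair drawn from $A$ and $B$ — the point being that $\Sigma$ forces an injection from $r(AB)$ into $r(C)$ in one direction and, via the IA, $r(AB) = r(A)\times r(B)$, while a further key or IA forces $|r(C)| \le |r(A)|$ or some similar inequality, making $|r(A)|\cdot|r(B)| \le |r(A)|$ impossible once $|r(B)| \ge 2$ on a finite relation. On an infinite relation the cardinality arithmetic collapses ($\aleph_0 \cdot \aleph_0 = \aleph_0$), so the same $\Sigma$ is satisfiable while $\varphi$ (chosen to be, e.g., $B \boto B$, i.e.\ constancy of $B$, which would force $|r(B)|=1$) fails. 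Thus $\Sigma \fmodels \varphi$ (finitely, the only escape from the contradiction is $|r(B)|=1$) but $\Sigma \not\models \varphi$.

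The steps, in order, would be: (1) fix the schema and write down $\Sigma$ and $\varphi$ explicitly; (2) prove $\Sigma \not\models \varphi$ by constructing an explicit countably infinite relation satisfying $\Sigma$ but not $\varphi$ — typically indexing tuples by $\N$ or $\N\times\N$ and checking each dependency directly; (3) prove $\Sigma \fmodels \varphi$ by the counting argument sketched above: assume a finite $r \models \Sigma$, translate the IAs into the product equalities $|r(XY)| = |r(X)|\cdot|r(Y)|$, translate the keys into the appropriate functional bounds, chain the resulting (in)equalities on finite cardinalities, and conclude that $r$ must also satisfy $\varphi$. The main obstacle is step (1)/(3): getting the algebra of keys and unary IAs to produce exactly one finite-only cardinality obstruction that is genuinely not derivable by the (known, finite) axiomatization for unrestricted implication — one must be careful that the keys are strong enough to force the injection yet weak enough that no purely unrestricted implication already yields $\varphi$, and that the IA is unary (so one cannot simply decompose a larger IA). Verifying $\Sigma \not\models\varphi$ is routine once the infinite model is written down; verifying $\Sigma\fmodels\varphi$ cleanly, without hidden case distinctions on which attributes happen to be constant, is where the care is needed.
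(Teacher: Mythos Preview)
Your high-level strategy is right and matches the paper's: exhibit $\Sigma,\varphi$ with $\Sigma\fmodels\varphi$ via a finite cardinality-arithmetic argument, and $\Sigma\not\models\varphi$ via an explicit countable model. The gap is in the concrete instance. Your three-attribute sketch does not close, and the choice of a constancy atom as the separating $\varphi$ is the wrong target.

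First, ``keys such as $C\to A$, $C\to B$'' are unary FDs, not keys; on $\{A,B,C\}$ a key for $C$ is $C\to AB$. More seriously, the step ``a further key or IA forces $|r(C)|\le |r(A)|$'' has no instantiation in keys+UIAs on three attributes that holds \emph{only} in the finite. If you enforce it via $A\to C$ (hence $A$ is a key), then $A\to B$ together with $A\boto B$ already gives $\emptyset\to B$ in the unrestricted sense, so nothing is separated. If you try richer combinations (e.g.\ $A\boto B$, $B\boto C$, $A\boto C$ with all two-element keys), the counting collapses to a Latin-square condition which admits finite models with $|r(B)|>1$, so $B\boto B$ is not finitely implied either. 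In short, on three attributes there is no cycle of inequalities that is tight only in the finite.

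The paper's separating instance uses four attributes and two UIAs in a cycle with two keys:
\[
\Sigma=\{\,A\boto B,\ C\boto D,\ BC\to ABCD,\ AD\to ABCD\,\},\qquad \sigma=AB\to ABCD.
\]
Here the target $\sigma$ is a \emph{key}, not a constancy atom. The finite argument is exactly your counting idea, but it needs the cyclic shape: from $A\boto B$ and key $BC$ one gets $|r(A)|\cdot|r(B)|=|r(AB)|\le |r|=|r(BC)|\le |r(B)|\cdot|r(C)|$, hence $|r(A)|\le|r(C)|$; symmetrically $C\boto D$ and key $AD$ give $|r(C)|\le|r(A)|$. Equality then forces all the displayed inequalities to be equalities, in particular $|r(AB)|=|r|$, so $AB$ is a key. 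For $\Sigma\not\models\sigma$ one builds a countable $r$ (indexed essentially by $\N\times\N$) satisfying $\Sigma$ in which $AB$ is not a key; the cardinal arithmetic $\aleph_0\cdot\aleph_0=\aleph_0$ is exactly what makes the chain of inequalities slack. So: keep your three-step plan, but replace the schema by the four-attribute cyclic one and take $\varphi$ to be the key $AB$ rather than a constancy atom.
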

This theorem was proved by showing that $\Si\fmodels\si$ and $\Si \not\models \si$, for $\Si:=\{A\boto B,C\boto D, BC\to AD,AD\to BC\}$ and $\si:=AB\to CD$. Next we show how this counterexample  generalizes to a non-axiomatizability result for the finite implication problem of FD+IA. For $n\geq 2$, define $R_n:=\{A_i,B_i:i=1, \ldots ,n\}$ and $\Si_n:=\{A_i\boto B_i:i=1, \ldots ,n\}\cup \{A_{\suc(i)}B_i\to R_n:i=1, \ldots ,n\} $ where $\suc(n)=1$ and $\suc(i)=i+1$, for $i<n$. We say that $\si$ follows from $\Si$ by $k$-ary (finite) implication, written $\Si\models^k \si$ ($\Si\fmodels^k \si$),  if $\Si'\models \si$ for some $\Si'\sub\Si$ of size at most $k$. We say that an inference rule of the form $\si_1, \ldots ,\si_k\Rightarrow \si$ is $k$-ary. An axiomatization is called $k$-ary if it consists of at most $k$-ary rules.
In  \cite{HannulaKL14} it was shown that, for $n\geq 2$,
\begin{enumerate}[label=\emph{(\arabic*)}]
\item $\Si_n\fmodels A_1B_1 \to R_n$;
\item $\Si_n\fmodels^{2n-1} X\boto Y$ iff $X,Y$ are disjoint and such that $XY=A_iB_i$, $X=\emptyset$, or $Y=\emptyset$;
\item given  $\Si'\sub\Si_n$ of size $2n-1$ and $X\sub R$ such that $A_{\suc(i)}B_i \not\sub X$ for all $i=1, \ldots ,n$, one finds  a relation $r$ satisfying $\Si'$ and  tuples $t,t'\in r$ such that $t(A)=t'(A)$ iff $A\in X$.
\end{enumerate}
It follows from (3) that
\begin{enumerate}[label=\emph{(\arabic*)}]
  \setcounter{enumi}{3}
\item $\Si_n\fmodels^{2n-1} X\to Y$ iff $Y \sub X$ or $A_{\suc(i)}B_i \sub X$ for some $i=1, \ldots ,n$.
\end{enumerate}
Note that all  FDs and IAs described in (2) and (4) follow from $\Si_n$ by unary finite implication. Consequently, the set of FDs and IAs described in (2) and (4) is closed under $(2n-1)$-ary finite implication. Hence, the finite implication problem for FDs and IAs cannot have any $(2n-1)$-ary sound and complete  axiomatization. Since this holds for arbitrary $n\geq 2$, and since all IAs in $\Si_n$ are unary, we obtain the following theorem.
\begin{theorem}\label{thm:noaxioms}
The finite implication problem for FD+IA (2FD+UIA) is not finitely axiomatizable.
\end{theorem}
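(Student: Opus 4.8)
The plan is to derive a contradiction from a hypothetical finite axiomatization, using the family $(\Si_n)_{n\ge 2}$ over $R_n$ together with facts (1)--(4) recalled above. Suppose $\mathfrak{R}$ were a finite, sound and complete axiomatization for the finite implication problem of FD+IA, and let $k$ be the maximum arity of a rule in $\mathfrak{R}$. Fix any $n\ge 2$ with $2n-1\ge k$ and work with $\Si_n$. Let $\Gamma$ be the set of all FDs and IAs over $R_n$ of the forms described in items (2) and (4) --- equivalently, the set of all FDs/IAs $\sigma$ with $\Si_n\fmodels^{2n-1}\sigma$. One first checks $\Si_n\subseteq\Gamma$: each IA $A_i\boto B_i$ is disjoint with $XY=A_iB_i$, hence covered by (2), and each FD $A_{\suc(i)}B_i\to R_n$ has $A_{\suc(i)}B_i\sub A_{\suc(i)}B_i$ on its left-hand side, hence covered by (4).

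The heart of the argument is that $\Gamma$ is closed under $(2n-1)$-ary finite implication. Here I would invoke the highlighted observation that every member of $\Gamma$ in fact follows from $\Si_n$ by \emph{unary} finite implication, i.e.\ from a subset of $\Si_n$ of size at most one. Given FDs/IAs $\sigma_1,\dots,\sigma_j\in\Gamma$ with $j\le 2n-1$ and $\{\sigma_1,\dots,\sigma_j\}\fmodels\sigma$, pick for each $i$ a witnessing $\Delta_i\sub\Si_n$ with $|\Delta_i|\le 1$; then $\Delta:=\bigcup_{i}\Delta_i\sub\Si_n$ satisfies $|\Delta|\le j\le 2n-1$ and, by monotonicity, $\Delta\fmodels\sigma$, so $\Si_n\fmodels^{2n-1}\sigma$ and therefore $\sigma\in\Gamma$ by (2) (if $\sigma$ is an IA) or (4) (if $\sigma$ is an FD).

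Soundness of $\mathfrak{R}$ means the conclusion of every rule is finitely implied by its (at most $2n-1$ many) premises; combining this with $\Si_n\sub\Gamma$ and the closure just established, a straightforward induction on deduction length shows that every FD or IA derivable from $\Si_n$ via $\mathfrak{R}$ lies in $\Gamma$. But by (1) we have $\Si_n\fmodels A_1B_1\to R_n$, while $A_1B_1\to R_n\notin\Gamma$: by (4) this would require $R_n\sub A_1B_1$ (false since $n\ge 2$) or $A_{\suc(i)}B_i\sub A_1B_1$ for some $i$ (which forces $\suc(i)=1$ and $B_i=B_1$, i.e.\ $i=n$ and $i=1$, impossible for $n\ge 2$). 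Hence $\mathfrak{R}$ does not derive $A_1B_1\to R_n$, contradicting completeness. Since $n\ge 2$ with $2n-1\ge k$ was arbitrary, no finite axiomatization exists. The same argument proves the claim for 2FD+UIA: all IAs in $\Si_n$ are unary, and replacing each FD $A_{\suc(i)}B_i\to R_n$ by the equivalent set $\{A_{\suc(i)}B_i\to A: A\in R_n\}$ of binary FDs, one may use $A_1B_1\to A_2$ as the separating dependency, which is still finitely implied by $\Si_n$ by (1) but, by (4), not in $\Gamma$.

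I expect no genuine obstacle in this reduction: the only delicate point is that the passage from $\sigma_i\in\Gamma$ to a size-$\le 1$ witness $\Delta_i$ (rather than merely a size-$\le 2n-1$ one) is what keeps $|\Delta|$ bounded by $2n-1$, and this is exactly the content of the ``unary finite implication'' remark. All the substantive combinatorics --- in particular fact (3), which supplies, for each size-$(2n-1)$ subset $\Si'\sub\Si_n$ and each admissible agreement set $X$, a finite relation realizing $X$ as the equality set of two of its tuples --- is imported from \cite{HannulaKL14}, so once (1)--(4) are granted the remaining work is routine bookkeeping.
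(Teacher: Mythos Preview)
Your proposal is correct and follows essentially the same approach as the paper: use facts (1)--(4) to show that the set $\Gamma$ described in (2) and (4) is closed under $(2n-1)$-ary finite implication (via the unary-implication observation you highlight) while missing the finitely implied FD $A_1B_1\to R_n$, ruling out any $(2n-1)$-ary sound and complete axiomatization. Your treatment of the 2FD+UIA case---replacing each $A_{\suc(i)}B_i\to R_n$ by its binary decomposition and using $A_1B_1\to A_2$ as the separating dependency---makes explicit what the paper compresses into the single remark that all IAs in $\Si_n$ are unary.
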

To the best of our knowledge, decidability is open for both FD+IA and FD+UIA with respect to their finite and unrestricted implication problems. It is worth noting here that the
unrestricted (finite) implication problem for FD+UIA is as hard as that for FD+IA. For this, we demonstrate a simple reduction from the latter to the former. Let $\Si\cup\{\si\}$ be a set of FDs and IAs, and let $\Si'$ denote the set of FDs and IAs where each IA of the form $X\boto Y$ is replaced with  dependencies from $\{A\boto B,X\to A,A\to X,Y\to B,B\to Y\}$ where $A$ and $B$ are fresh attributes. If $\si$ is an FD, then $\Si$ (finitely) implies  $\si$ iff $\Si'$ (finitely) implies $ \si$. Also, if $\si$ is of the form $X\boto Y$, then we have  $\Si\models \si$ iff $\Si''\models \si'$, where
\[
\Si'':=\Si'\cup \{X\to A,A\to X,Y\to B,B\to Y\},
\]
$\si':=A\boto B$, and $A$ and $B$ are fresh attributes.


\subsection{Implication problem for UFDs and IAs}\label{sect:UFD+IA}
Next we turn to the class UFD+IA. We show that $\mathfrak{A}^*$ (see Table \ref{tab-rules}) forms a sound and complete axiomatization for UFD+IA in both the finite and unrestricted cases. Hence, compared to UIAs and FDs, the interaction between IAs and UFDs is relatively tame. Combined, however, these two may entail new restrictions to column sizes. For instance, in the finite $A\to B_1$, $A\to B_2$, and $B_1\boto B_2$  imply $|r(B_1)|\cdot |r(B_2)|\leq |r(A)|$.

We also show that the class UFD+IA admits Armstrong relations.  We start with the following auxiliary lemma which shows existence of finite Armstrong relations in the absence of CAs. 

\begin{lemma}\label{lemma:IAhelp}
Let $\Sigma$ be a finite set of UFDs and IAs that is closed under $\mathfrak{A}^*$ and such that it contains no CAs. Then $\Sigma$ has a finite Armstrong relation. %
\end{lemma}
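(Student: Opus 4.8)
The goal is to build, from a set $\Sigma$ of UFDs and IAs that is closed under $\mathfrak{A}^*$ and contains no CAs, a single finite relation $r$ that satisfies exactly those UFDs and IAs implied by $\Sigma$. The natural strategy is the ``direct product of counterexamples'' approach already used in the Armstrong-database theorem for IA+IND earlier in the paper: for each dependency $\sigma$ not derivable from $\Sigma$ by $\mathfrak{A}^*$, produce a finite relation $r_\sigma$ satisfying $\Sigma$ but violating $\sigma$, then take a suitable product. The subtlety compared with the IA+IND case is that FDs are \emph{not} preserved under arbitrary direct products (a product of two relations each satisfying $X\to Y$ need not satisfy it, since equality in a product means equality in both coordinates), so the product construction has to be done with more care, or the single relations $r_\sigma$ themselves must already be rich enough.

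\textbf{Key steps.} First I would reduce the set of ``bad'' dependencies to be falsified: by the decomposition/symmetry rules for IAs and reflexivity/transitivity for FDs it suffices to falsify (a) every UFD $A\to B$ with $A\to B\notin\Sigma$ (and every FD $\emptyset\to B$ not in $\Sigma$, but there are none since $\Sigma$ has no CAs), and (b) every disjoint IA $X\boto Y$ not derivable from $\Sigma$. Second, for the IA part I would invoke or adapt the completeness construction for IAs alone (the chase over rules $\mathfrak{I}$, or rather $\mathfrak{A}^*$ restricted to IAs together with the $\fiax{}$-interaction): starting from two rows disagreeing everywhere and chasing by the IAs in $\Sigma$ gives a finite relation; one then checks the UFDs in $\Sigma$ survive because $\Sigma$ is closed under $\fiax{1}$ (constancy) and $\fiax{2}$ (composition) and contains no CAs, so no forced equalities collapse a UFD. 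Third — the heart of the argument — I would show how to merge all these $r_\sigma$ into one relation. The cleanest route is: pick an attribute ordering, take $r$ to be the union (not product) of the individual $r_\sigma$'s after renaming their value domains to be pairwise disjoint \emph{except} on attributes the relevant FDs care about; then argue that a UFD $A\to B$ holds in $r$ iff it holds in every $r_\sigma$ simultaneously, which forces $A\to B\in\Sigma$ by the closure hypothesis, while each IA is still violated in the appropriate summand and all IAs of $\Sigma$, being preserved under disjoint union, still hold. One must verify that no IA of $\Sigma$ is accidentally broken and no unwanted IA is accidentally created in the union — this uses that $\Sigma$ is closed under decomposition and that the summands are ``as disjoint as possible'' on values.

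\textbf{Main obstacle.} The delicate point is reconciling the FD semantics with whatever gluing operation combines the $r_\sigma$. A plain direct product kills FDs; a plain disjoint union can break IAs of $\Sigma$ unless the summands are chosen compatibly, and can also fail to give the needed FD-violations if the violating pair gets separated across summands. I expect the real work is in designing the summands $r_\sigma$ and the value-renaming so that simultaneously: (i) each required UFD-violation and IA-violation is witnessed inside a single summand; (ii) every UFD and IA of $\Sigma$ is globally preserved; and (iii) the closure of $\Sigma$ under $\mathfrak{A}^*$ — in particular the interaction rules $\fiax{1},\fiax{2}$ and the absence of CAs — is exactly what is needed to rule out spurious implied dependencies. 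Getting the bookkeeping of shared versus fresh values right across all attributes and all bad dependencies at once, so that the union relation is finite and has precisely the intended dependency theory, is where the care goes; the individual ingredients (the IA-chase, the FD survival check, the product trick) are each routine.
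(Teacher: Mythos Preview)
Your plan diverges from the paper's proof, and the divergence is at the point you yourself flag as the main obstacle. The paper does \emph{not} build separate counterexample relations $r_\sigma$ and then try to glue them. Instead it constructs a single initial relation $r_0$ over values $\{0,1,2,\ldots,n,b\}$: two constant rows $t_0\equiv 0$, $t_1\equiv 1$, and for each attribute $A_i$ a row $t_i$ mapping $A_i^+:=\{B:A_i\to B\in\Sigma\}$ to $0$ and the rest to $i$. This $r_0$ already violates every UFD not in $\Sigma$. It then chases $r_0$ by the IAs of $\Sigma$, using a single fresh value $b$ outside $XY$ for each new tuple; termination is immediate because the value set is fixed. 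Finally it applies the transformation $s_t(A):=t(A^+)$ to every tuple. This last step is the idea you are missing: after the transformation, a UFD $A\to B\in\Sigma$ holds automatically (agreement on $A$ means agreement on $A^+\supseteq B^+$), and an IA $X\boto Y\in\Sigma$ survives because $X$ and $Y$ are closed under $(\cdot)^+$ by $\fiax{2}$. Non-membership is witnessed by the original $t_0,t_1,t_i$ for UFDs and by a short induction on the chase for IAs.

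Your union/product route runs into exactly the wall you describe, and I do not see a way through without reinventing the $A^+$-transformation. One remark in your plan is actually false: IAs are \emph{not} preserved under disjoint union (take $t$ in one summand and $t'$ in another; no witness exists), so the sentence ``all IAs of $\Sigma$, being preserved under disjoint union, still hold'' cannot stand. You later acknowledge this, but the fix you propose --- choose summands and value-renamings ``compatibly'' --- is not a mechanism. Any repair that re-chases the union to restore the IAs of $\Sigma$ will need a separate argument that the chase preserves the UFDs of $\Sigma$ and does not accidentally validate the IAs you meant to violate; that argument is essentially the content of the paper's proof, and it goes through cleanly only because of the $s_t(A)=t(A^+)$ device.
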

\begin{proof}
For an attribute $A$ define $A^+$ as the set of  attributes $B$ such that $A\to B \in \Sigma$. Enumerating the underlying relation schema of $\Sigma$ as $R=\{A_2, \ldots ,A_n\}$ we first let $r_0$ to consist of two constant tuples $t_0$ and $t_1$ that respectively map all attributes to $0$ and $1$, and tuples $t_i$ that map attributes in $A_i^+$ to $0$ and those in $R\setminus A_i$ to $i$. By transitivity we observe that the condition (i) is satisfied by $r_0$. Next we extend $r_0$ to a relation satisfying both conditions. First we chase $r_0$ by all IAs in $\Si$ with the following rule where $b$ is a new symbol.

\begin{itemize}
\item Assume that $X\boto  Y\in \Si$ and $t,t'\in r$ are such that for no $t''\in r$, $t''(X)=t(X)$ and $t''(Y)=t'(Y)$. Then extend $r$ with $\tn$ such that $\tn(X)=t(X)$, $\tn(Y)=t'(Y)$, and $\tn(R\setminus XY)=b$.
\end{itemize}
Note that $X$ and $Y$ are disjoint for all $ X\boto Y\in \Si$, and thus the rule is well defined. Assuming $X$ and $Y$ share an attribute $A$ generates a deduction of $\emptyset \to A$ using decomposition, reflexivity, and constancy. This, in turn, contradicts our assumption.

Since each $\tn$ assigns attributes  only to $n+1$ different values, the chase procedure terminates and generates a unique $r_1$ satisfying all IAs in $\Si$. 
Let $r$ be obtained from $r_1$ by replacing each $t\in r$ with the tuple $s_t$ 
which maps $A$ to the tuple $t( A^+)$. We claim that $r$ is a finite Armstrong relation. First note that $r$ is of the size of $r_1$ and hence  finite. In the following we consider the cases of UFDs and IAs.

 Let us show first that $r$ satisfies an arbitrary UFD $A\to B$ from $\Si$. Let $s_t,s_{t'}$ be tuples in $r$ that agree on $A$. Then $t$ and $t'$ agree on $A^+$, and thus on $B^+$, from which it follows that $s_t$ and $s_{t'}$ agree on $B$.

Suppose then $A_i\to B$ is not in $\Si$. Then  $t_0$ and $t_i$ map $A_i$ to $0$ but $B$ to $0$ and $i$, respectively. By definition, $s_{t_0}$ and $s_{t_i}$ agree on $A_i$ and disagree on $B$, and thus $r$ does not satisfy $A_i \to B$. Furthermore, note that any $\emptyset \to A$ is falsified by $s_{t_0}$ and $s_{t_1}$.


 Let us then show  that $r$ satisfies an arbitrary  $X\boto Y$ in  $\Si$. Let $s_t,s_{t'}\in r$, and let $t''\in r$ be such that it agrees with $t$ on $X$ and with $t'$ on $Y$. We claim that $s_{t''}$ agrees with $s_t$ on $X$ and with $s_{t'}$ on $Y$. Let $A\in X$; we show that $s_t(A)=s_{t''}(A)$. 
 By symmetry and composition IA components are closer under UFDs which means that we have $A^+\sub X$. Hence, $t$ agrees with $t''$ on $ A^+$, and therefore $s_t$ agrees with $s_{t''}$ on $A$. Consequently, $s_t(X)=s_{t''}(X)$, and analogously we obtain that $s_{t'}(Y)=s_{t''}(Y)$. This shows that $r$ satisfies $X\boto Y$.

To show  that $r$ satisfies no  $X\boto Y\notin\Si$, it suffices to show that  $ X\boto Y\in \Si$ if for some $t\in r$ we have $t(X)=0$ and $t(Y)=1$. Proving this is a straightforward induction on the chase construction. For $t_0$ and $t_1$ it suffices to employ $\tax{1}$. For the induction step it suffices to use Lemma \ref{lemma:help3} and rules $\tax{2},\tax{3}$ since  no new tuple $\tn$, obtained by an application of the chase rule to $X\boto Y$, maps attributes in $R\setminus XY$ to $0$ or $1$.
\end{proof}

\begin{theorem}
\label{thm:cc+uind}
The axiomatization $\mathfrak{A}^*$ is sound and complete for the unrestricted and finite implication problems of UFD+IA. Furthermore, any finite set of UFDs and IAs has a finite Armstrong relation.
\end{theorem}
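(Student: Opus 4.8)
Soundness of $\mathfrak{A}^*$ is immediate from Theorem~\ref{soundness}, as $\mathfrak{A}^*\subseteq\mathfrak{A}$. For the remaining claims the plan is to prove one purely syntactic statement and read everything off it: \emph{if $\Sigma$ is a finite set of UFDs and IAs closed under $\mathfrak{A}^*$, then there is a finite relation $r$ such that $r\models\tau$ iff $\tau\in\Sigma$, for every UFD or IA $\tau$}. Granting this, apply it to the (finite) set of all UFDs and IAs derivable by $\mathfrak{A}^*$ from a given finite set; call it $\Sigma$. If $\Sigma\fmodels\sigma$, then the finite $r$ satisfies $\Sigma$ and hence $\sigma$, so $\sigma\in\Sigma$, i.e.\ the given set derives $\sigma$: this is completeness for finite implication. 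Since $\Sigma'\models\sigma$ trivially entails $\Sigma'\fmodels\sigma$ for any $\Sigma'$, completeness for unrestricted implication follows too, and the two implication problems coincide. Finally, by soundness and completeness $r$ satisfies exactly the dependencies (finitely, equivalently unrestrictedly) implied by the given set, so $r$ is the required finite Armstrong relation.

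So fix $\Sigma$ closed under $\mathfrak{A}^*$. If $\Sigma$ contains no constancy atom, the relation is furnished by Lemma~\ref{lemma:IAhelp}. In general let $\C:=\{A\mid\emptyset\to A\in\Sigma\}$ be the set of attributes forced constant; using $\tax{1},\tax{2},\fax{1},\fiax{1},\fiax{2}$ one checks $\C=\{A\mid A\boto A\in\Sigma\}$. Put $\Sigma_0:=\Sigma\upharpoonright(R\setminus\C)$. The first order of business is two bookkeeping facts. (a) For $X,Y\subseteq R\setminus\C$: $X\boto Y\in\Sigma_0$ iff $X\boto Y\in\Sigma$; and for single attributes $A,B\in R\setminus\C$: $A\to B\in\Sigma_0$ iff $A\to B\in\Sigma$. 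The non-trivial directions amount to stripping constant attributes off an IA --- by $\tax{2},\tax{3}$ --- and off the right-hand side of an FD --- by the FD-decomposition rule derivable from $\fax{1},\fax{2}$. (b) $\Sigma_0$ is again closed under $\mathfrak{A}^*$, and contains no constancy atom. Closedness follows from (a): an $\mathfrak{A}^*$-rule with premises in $\Sigma_0$ lifts, via (a), to the same rule with premises in $\Sigma$, whose conclusion is over $R\setminus\C$ and hence already in $\Sigma_0$. That $\Sigma_0$ has no constancy atom is where the interaction rules bite: an IA $X\boto Y\in\Sigma_0$ with $A\in X\cap Y$ gives $A\boto A\in\Sigma$ by symmetry and decomposition, hence $\emptyset\to A\in\Sigma$ by $\fax{1},\fiax{1}$, contradicting $A\notin\C$; and $\emptyset\to A\in\Sigma_0$ lifts to $\emptyset\to A\in\Sigma$ by the reasoning of (a), again impossible.

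Now apply Lemma~\ref{lemma:IAhelp} to $\Sigma_0$ to get a finite relation $r_0$ over $R\setminus\C$ satisfying precisely the UFDs and IAs in $\Sigma_0$, and let $r$ be $r_0$ with the constant $0$ appended on every attribute of $\C$. It remains to verify $r\models\tau$ iff $\tau\in\Sigma$. That $r\models\Sigma$ is routine: a constant column is transparent to an IA (a witness for $(X\setminus\C)\boto(Y\setminus\C)$ in $r_0$ extends to a witness for $X\boto Y$ in $r$) and only helps an FD (an FD of $\Sigma$ with right-hand side in $\C$ holds trivially, otherwise it restricts to an FD of $\Sigma_0$). For the converse: $\emptyset\to A\notin\Sigma$ forces $A\notin\C$, and $A$ is already non-constant in $r_0$; a UFD $C\to A\notin\Sigma$ again forces $A\notin\C$, and then either $C\in\C$, so $C$ is constant in $r$ while $A$ is not, or $C\notin\C$ and $C\to A\notin\Sigma_0$ is falsified by $r_0$; an IA $X\boto Y\notin\Sigma$ gives $(X\setminus\C)\boto(Y\setminus\C)\notin\Sigma_0$ by (a), falsified by $r_0$ and hence --- the $\C$-columns being constant --- by $r$.

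The routine parts (soundness, transparency of constant columns, and extracting completeness, coincidence, and the Armstrong property from the syntactic statement) carry no difficulty. The main obstacle is the constancy-atom bookkeeping of the second paragraph: showing $\Sigma_0$ remains constancy-atom-free and $\mathfrak{A}^*$-closed, and that non-membership of UFD+IA dependencies survives restriction to $R\setminus\C$. This is precisely where $\fiax{1}$ and $\fiax{2}$ are indispensable, and it is what makes the analysis more delicate than the corresponding argument for FDs and MVDs, where the two dependency families do not interact. Once Lemma~\ref{lemma:IAhelp} is available, no further relation construction is needed.
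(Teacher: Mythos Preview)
The proposal is correct and follows essentially the same approach as the paper: reduce to the constancy-free case handled by Lemma~\ref{lemma:IAhelp} by restricting $\Sigma$ to the non-constant attributes, then reinsert the constant columns. The only cosmetic difference is that you argue directly that the restriction $\Sigma_0$ is already $\mathfrak{A}^*$-closed, whereas the paper takes its closure $\Sigma'$ and checks $\Sigma'$ is derivable from $\Sigma$; since these amount to the same set, the arguments coincide.
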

\begin{proof}
By Theorem \ref{soundness} the axiomatization is sound.  To show completeness and existence of Armstrong relations, it suffices to show that there is a finite Armstrong relation for any finite set $\Sigma$ of UFDs and IAs that is closed under $\mathfrak{A}^*$.
Let $R'=R\setminus C$ for $C:=\{A\in R\mid  \emptyset \to A\in \Sigma\}$. First we note that $\Sigma \upharpoonright R'$ is derivable from $\Sigma$. For an IA $X\boto Y \in \Sigma$,  we may derive $X \setminus C\boto Y\setminus C$ by symmetry and decomposition. For a UFD $A\to B \in \Sigma$ we consider the only non-trivial case: $B$ intersects with $R'$ but $A$ does not. In this case $\emptyset \to A\in \Sigma$ and thus we obtain by transitivity that $\emptyset \to B\in \Sigma$, which contradicts our assumption.

Let $\Sigma'$ be the closure of $\Sigma \upharpoonright R'$ under $\mathfrak{A}^*$. Having concluded that $\Sigma'$ is derivable from $\Sigma$, we observe that it cannot contain any CAs for otherwise we would derive from $\Sigma$ some $\emptyset \to A$ where $A\in R'$, and thus $A\in C$ which is a contradiction. Hence, we may apply the previous lemma to obtain a finite relation $r_0$ satisfying exactly those UFDs and IAs that belong to $\Sigma'$. Define $r:={}^{C}\{0\}\times r_0$. We claim $r$ that is an Armstrong relation for $\Sigma$.

 Assume first $X\boto Y \in\Sigma$, and consider $X\setminus C\boto Y\setminus C \in \Sigma'$. Since $r_0$ satisfies $X\setminus C\boto Y\setminus C$, it follows that $r$ satisfies $X\boto Y$. Assume then that $U\to B\in \Si$. If $U\in C$ or $U$ is the empty set, then $B\in C$ by transitivity from which it follows that $r$ satisfies $U\to B$ trivially. The same happens if $B\in C$. If $U$ is non-empty and $U,B\not\in C$, then $U\to B$ is  satisfied by $r_0$ and thus by  $r$, too.

 Suppose then $X\boto Y \notin\Sigma$. 
 By  symmetry and composition we note that $X\setminus C\boto Y\setminus C\notin\Sigma'$. Hence,  $r_0$ does not satisfy $ X\setminus C \boto Y\setminus C$ and thus $r$ does not satisfy $ X\boto Y$. Suppose then $U\to B\notin \Sigma$. If $U$ is the empty set, then $B$ is not in $C$, and thus $U\to B$ does not hold in $r_0$ nor in $r$. If $U$ is an attribute, then again by reflexivity and transitivity $B$ cannot be in $C$. If $U$ is in $C$, then $U\to B$ is false in $r$ because $U$ is a constant and $B$ is not. If $U$ is not in $C$, then $U\to B$ is false in $r_0$ and consequently in $r$ as well. This concludes the proof.
   \end{proof}

As the same axiomatization characterizes both finite and unrestricted implication, we obtain the following corollary.
\begin{corollary}\label{cor:cc+uind}
The finite and unrestricted implication problems coincide for UFD+IA.
\end{corollary}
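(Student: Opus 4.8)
The plan is to obtain this corollary as an immediate formal consequence of Theorem~\ref{thm:cc+uind}, which has already established that the \emph{single} axiomatization $\mathfrak{A}^*$ is sound and complete for \emph{both} the finite and the unrestricted implication problems of UFD+IA. Given a finite set $\Sigma\cup\{\sigma\}$ of UFDs and IAs, one direction is free: every finite database is a database, so $\Sigma\models\sigma$ trivially entails $\Sigma\fmodels\sigma$. For the converse, I would argue: if $\Sigma\fmodels\sigma$, then by completeness of $\mathfrak{A}^*$ for finite implication there is a deduction $\Sigma\vdash_{\mathfrak{A}^*}\sigma$; by soundness of $\mathfrak{A}^*$ for unrestricted implication (Theorem~\ref{soundness}), that very deduction witnesses $\Sigma\models\sigma$. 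Hence $\Sigma\fmodels\sigma$ iff $\Sigma\models\sigma$, which is the claim. In the write-up this collapses to one or two sentences citing Theorem~\ref{thm:cc+uind}.

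It is worth recording in the proof where the actual content sits, namely in Theorem~\ref{thm:cc+uind} and, underneath it, in Lemma~\ref{lemma:IAhelp}: the point is that the Armstrong relation constructed there is \emph{finite}, so it simultaneously witnesses completeness for the finite problem and, being in particular an unrestricted Armstrong database, witnesses completeness for the unrestricted problem. Because the same object does both jobs, there is nothing left to prove for the unrestricted case separately, and this is precisely the structural reason the two implication problems coincide.

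Since all the real work has been done upstream, there is no genuine obstacle for the corollary itself. The only subtlety I would flag is the role of the constancy-atom elimination inside the proof of Theorem~\ref{thm:cc+uind} (the passage from $\Sigma$ to $\Sigma\upharpoonright R'$ and the product ${}^{C}\{0\}\times r_0$): this is what ensures the finite Armstrong relation behaves correctly on CAs of the form $\emptyset\to A$ and $A\boto A$, so that no hidden gap appears where finiteness might fail or where finite and unrestricted behaviour could diverge. Beyond noting this, the corollary is a one-line deduction from Theorem~\ref{thm:cc+uind}.
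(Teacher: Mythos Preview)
Your proposal is correct and matches the paper's own argument: the corollary is stated immediately after Theorem~\ref{thm:cc+uind} with the one-line justification that the same axiomatization characterizes both finite and unrestricted implication. Your additional remarks about the finite Armstrong relation and the constancy-atom elimination are accurate commentary on why Theorem~\ref{thm:cc+uind} delivers this, but they go beyond what the paper records for the corollary itself.
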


\section{Independence atoms, unary functional dependencies and unary inclusion dependencies}\label{sect:UFD+UIND+IA}

In this section we consider the combined class of FDs, INDs, and IAs. In the previous section we noticed that the finite implication problem for binary FDs and unary IAs is not finitely axiomatizable. On the other hand, both the finite and unrestricted implication problems for unary FDs and binary INDs are undecidable \cite{mitchell83}. Hence, in this section we will restrict our interest to unary FDs and unary INDs, a class for which finite and unrestricted implication problems already deviate \cite{CosmadakisKV90}. It turns out that the combination of unary functional dependencies, unary inclusion dependencies, and arbitrary independence atoms can be axiomatized with respect to finite and unrestricted implication. However, in the finite case the axiomatization is infinite as one needs to add so-called cycle rules for UFDs and UINDs.

\subsection{Unrestricted implication}

Let us first consider unrestricted implication for the class UFD+UIND+IA. Based on Section \ref{sect:UFD+IA} and \cite{CosmadakisKV90} we show that $\mathfrak{A}^*\cup \{\iax{1},\iax{2},\uax{3},\uax{4}\}$ forms a sound and complete axiomatization. An axiomatization for the unrestricted implication problem  of UINDs and embedded implicational dependencies was shown in \cite{CosmadakisKV90}. In the uni-relational case, an
\emph{embedded implicational dependency} (EID) is a first-order sentence  $\forall \tuple x \exists \tuple y(\phi(\tuple x)\to  \psi(\tuple x, \tuple y))$  over a relational vocabulary $\{R\}$ where
\begin{itemize}
\item $\phi(\tuple x)$  is a non-empty finite conjunction of relational atoms and the
variables occurring in this conjunction are exactly all the variables listed in $\tuple x$;
\item $\psi(\tuple x,\tuple y)$ is a non-empty finite conjunction of relational atoms and
equality atoms and the variables occurring in this conjunction are exactly all the variables listed in $\tuple y$ and some of the variables listed in $\tuple x$;
\item each variable in $\tuple x\tuple y$ associates with a single relation position, and for each occurrence of an equality atom $x=y$, $x$ and $y$ associate with the same relation position.
\end{itemize}
Note that EIDs include all FDs and IAs but exclude all non-trivial INDs.
The following presentation of Theorem \ref{thm:cosma} is a reformulation from  \cite{CosmadakisKV90}.

\begin{definition}[\cite{CosmadakisKV90}]\label{span} For any set $\Delta$ of EIDs and set $\Si$ of UINDs, we define the set $Y$ called the singlevalued span of $\Delta$ and $\Si$ to be the minimum set of attributes $Y$ that satisfies the two conditions:
\begin{enumerate}[label=\emph{(\arabic*)}]
\item if 
 $\Delta\cup\{Y \boto Y\}
\models A\boto A$, then add $A$ to $Y$,
\item if attribute $B$ is in $Y$ and $A\sub  B$ is in $\Si$, then add $A$ to $Y$.
\end{enumerate}
\end{definition}

\begin{definition}[\cite{CosmadakisKV90}] For any set $A$ of EIDs, any set $Z$ of UINDs, and $Y$ the singlevalued span of $\Delta$ and $\Si$, we define the sets $\Delta''$ and $\Si''$, called the unrestricted extensions of $\Delta$ and $\Si$, by $\Delta'' = \Delta \cup \{ Y\boto Y\}$ and $\Sigma'' = \Sigma \cup \{A \sub B : B \sub A \textrm{ in }\Si, B \textrm{ in } Y\}$.
\end{definition}
\begin{theorem}[\cite{CosmadakisKV90}]\label{thm:cosma} Let $\Delta$ be a set of EIDs, $\Si$ set of UINDs, $Y$ the singlevalued span, and $\Delta'',\Si''$ the unrestricted extensions of $\Delta, \Si$. For any ED $\delta$ and any UIND $\sigma$, we have
\begin{itemize}
\item $\Delta\cup \Si\models \sigma \Leftrightarrow \Sigma''\models \sigma$,
\item $\Delta\cup \Si\models \delta \Leftrightarrow \Delta''\models \delta$.
\end{itemize}

\end{theorem}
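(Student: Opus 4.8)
Both equivalences have the shape $\Gamma\models\varphi\iff\Gamma^\ast\models\varphi$, and I would prove each by splitting off an easy ``soundness of the extension'' direction from a harder ``repair'' direction, working throughout with counterexample relations (recall that $\Gamma\not\models\varphi$ means some relation satisfies $\Gamma\cup\{\neg\varphi\}$). First I would establish the two soundness lemmas $\Delta\cup\Sigma\models\Delta''$ and $\Delta\cup\Sigma\models\Sigma''$. Both follow from the defining properties of the singlevalued span $Y$ by induction on the stage at which an attribute enters $Y$: clause~(1) of Definition~\ref{span} guarantees that, once the attributes already in $Y$ are known to be constant in every model of $\Delta\cup\Sigma$, the freshly added attribute $A$ is constant there too (since $\Delta\cup\{Y\boto Y\}\models A\boto A$), and clause~(2) propagates constancy along a unary inclusion dependency. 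Hence every attribute of $Y$ is constant in every model of $\Delta\cup\Sigma$, i.e. $\Delta\cup\Sigma\models Y\boto Y$, which is the content of $\Delta''$; and for $\Sigma''$ one adds the observation that a unary inclusion dependency one side of which has been forced constant becomes bidirectional in every such model (a one-element column is contained in another column exactly when the latter contains its unique value), which is precisely the set of extra inclusions placed in $\Sigma''$. From these lemmas the directions $\Sigma''\models\sigma\Rightarrow\Delta\cup\Sigma\models\sigma$ and $\Delta''\models\delta\Rightarrow\Delta\cup\Sigma\models\delta$ are immediate, since every model of $\Delta\cup\Sigma$ is already a model of $\Sigma''$, resp. of $\Delta''$.

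For the converse (harder) directions I would argue contrapositively and use the chase. Given a relation $r$ satisfying $\Sigma''\cup\{\neg\sigma\}$, chase $r$ by the embedded implicational dependencies of $\Delta$ only (an FD step identifies two values, an IA step adjoins the missing ``combination'' tuple, and so on). In the unrestricted setting this chase, even when it fails to terminate, yields in the limit a relation $r^\ast\models\Delta$ --- this is the precise point where finiteness is lost, which is why only the unrestricted statement holds here and why the finite case later needs cycle rules. I would then verify two preservation claims: (i) every unary inclusion dependency of $\Sigma$ stays true under every chase step, because an EID step only identifies values or adds tuples that copy existing ones on any fixed column, so it cannot pull a column out of a containment, the delicate case of attributes adjacent to $Y$ being exactly what the extra, now bidirectional, inclusions of $\Sigma''$ were designed to handle; and (ii) the witness falsifying $\sigma$ is never destroyed, i.e. no FD- or IA-triggered identification merges the offending value into the forbidden column. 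Dually, starting from a relation satisfying $\Delta''\cup\{\neg\delta\}$ I would chase by the unary inclusion dependencies of $\Sigma$, interleaving the relevant EID steps whenever an IND step injects a fresh value into a column governed by an FD or IA; here the key point is that since $Y$ is already closed under clauses~(1)--(2), no inclusion-chase step can force a constancy outside $Y$, so $\Delta$ --- and with it the falsification of $\delta$ --- is preserved.

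The main obstacle is exactly these preservation claims: showing that chasing by $\Delta$ keeps all of $\Sigma$ true and $\sigma$ false on a model of $\Sigma''$, and dually that chasing by $\Sigma$ keeps all of $\Delta$ true and $\delta$ false on a model of $\Delta''$. Both rest on a careful analysis of which attributes are driven to become constant during the chase and on bookkeeping showing that an identification triggered by an FD or an IA can neither collapse the witness separating the two sides of $\sigma$ nor create a constancy outside the span. The definitions of the singlevalued span and of $\Delta''$ and $\Sigma''$ are engineered precisely so that this bookkeeping closes; carrying it out in full is the bulk of the (fairly technical) argument, for which I would follow~\cite{CosmadakisKV90}.
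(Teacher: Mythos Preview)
The paper does not prove Theorem~\ref{thm:cosma}; it is quoted verbatim from~\cite{CosmadakisKV90} as a black-box tool and then applied to obtain Theorem~\ref{thm:unraxioms}. There is thus no proof in this paper to compare your proposal against.

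As to the proposal itself: the soundness direction is fine and matches the evident argument. Your converse direction, however, has a real gap. You claim that chasing a model of $\Sigma''$ by the EIDs of $\Delta$ preserves every UIND of $\Sigma$ because ``an EID step only identifies values or adds tuples that copy existing ones on any fixed column.'' That is not true of EIDs in general: an EID $\forall\tuple{x}\,\exists\tuple{y}\,(\phi\to\psi)$ with nonempty $\tuple{y}$ introduces \emph{fresh} values when chased, and a fresh value may appear in a column $A$ with no matching value in $B$, breaking $A\sub B$. So UINDs are not preserved step-by-step under the EID chase, and the sentence ``the delicate case of attributes adjacent to $Y$ being exactly what the extra, now bidirectional, inclusions of $\Sigma''$ were designed to handle'' does not rescue the argument---$\Sigma''$ only reverses inclusions touching $Y$, it says nothing about fresh existential values on attributes outside $Y$. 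The actual construction in~\cite{CosmadakisKV90} does not rely on step-wise preservation: roughly, one chases under $\Delta''$ to obtain a possibly infinite model, and then builds on top of it a separate structure (copying and gluing along the UIND graph) that restores $\Sigma$ without disturbing $\Delta$ or $\neg\sigma$. This is the step your sketch is missing.
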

By Theorems \ref{soundness}, \ref{thm:cc+uind},  \ref{thm:cosma}, and since  $\iax{1},\iax{2}$ form a complete axiomatization for UINDs, we  obtain the following theorem. 
\begin{theorem}\label{thm:unraxioms}
The axiomatization  $\mathfrak{A}^*\cup \{\iax{1},\iax{2},\uax{3},\uax{4}\}$ is sound and complete for the unrestricted implication problem of UFD+UIND+IA.
\end{theorem}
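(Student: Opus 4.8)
The plan is to reduce the problem to known results, exactly as the preamble announces: combine the completeness result for UFD+IA (Theorem~\ref{thm:cc+uind}) with the Cosmadakis--Kanellakis--Vardi characterization (Theorem~\ref{thm:cosma}), noting that the combined rule set $\mathfrak{A}^*\cup\{\iax{1},\iax{2},\uax{3},\uax{4}\}$ is sound by Theorem~\ref{soundness}. So only completeness needs to be shown: if $\Sigma\models\sigma$ for a finite set $\Sigma\cup\{\sigma\}$ of UFDs, UINDs, and IAs, then $\Sigma\vdash\sigma$ under this axiomatization. Write $\Delta$ for the EIDs in $\Sigma$ (the UFDs and IAs -- note FDs and IAs are EIDs by the remark after Definition~\ref{span}, while non-trivial UINDs are not) and $\Sigma_{\mathrm{IND}}$ for the UINDs in $\Sigma$. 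Let $Y$ be the singlevalued span of $\Delta$ and $\Sigma_{\mathrm{IND}}$, and let $\Delta'',\Sigma''$ be the unrestricted extensions.

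First I would handle the case where $\sigma$ is an EID (a UFD or an IA). By Theorem~\ref{thm:cosma}, $\Sigma\models\sigma$ iff $\Delta''\models\sigma$, and since $\Delta''$ is a set of UFDs and IAs, Theorem~\ref{thm:cc+uind} gives $\Delta''\vdash_{\mathfrak{A}^*}\sigma$. So it remains to show $\Sigma\vdash\Delta''$, i.e. $\Sigma\vdash Y\boto Y$ (where $Y\boto Y$ abbreviates the conjunction of $A\boto A$ for $A\in Y$, each being a CA). This I would prove by induction on the inductive construction of $Y$ in Definition~\ref{span}: attributes entering $Y$ via clause~(2) are handled by $\iax{1}$-reachability together with $\uax{4}$ (constancy: $R[A]\sub R[B]$ and $R:B\boto B$ give $R:A\boto A$); attributes entering via clause~(1), i.e. those $A$ with $\Delta\cup\{Y\boto Y\}\models A\boto A$, are handled by first noting (induction hypothesis) $\Sigma\vdash$ the relevant part of $Y\boto Y$, then invoking completeness of $\mathfrak{A}^*$ for UFD+IA (Theorem~\ref{thm:cc+uind}) on the sub-instance to get $\Sigma\cup\{Y\boto Y\}\vdash A\boto A$, hence $\Sigma\vdash A\boto A$. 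Some care is needed because clause~(1) refers to $A\boto A$ being implied using the \emph{full} $Y$, so this is a simultaneous/fixed-point induction; I would phrase it as: the set $\{A : \Sigma\vdash A\boto A\}$ is closed under both clauses, hence contains $Y$ by minimality.

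Next, the case where $\sigma$ is a UIND. By Theorem~\ref{thm:cosma}, $\Sigma\models\sigma$ iff $\Sigma''\models\sigma$, and since $\iax{1},\iax{2}$ are complete for UINDs, $\Sigma''\vdash_{\{\iax{1},\iax{2}\}}\sigma$. So it suffices to show $\Sigma\vdash\Sigma''$, that is, $\Sigma\vdash R[A]\sub R[B]$ for each $A\sub B$ in $\Sigma_{\mathrm{IND}}$ with $B\in Y$. But $B\in Y$ gives $\Sigma\vdash R:B\boto B$ by the previous paragraph, and then $\uax{3}$ (symmetry of INDs in the presence of a CA: from $R[A]\sub R[B]$ and $R:B\boto B$ derive $R[B]\sub R[A]$) yields the reversed IND. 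Finally, since $\Sigma'' \vdash \sigma$ uses only $\iax{1},\iax{2}$ and $\Sigma \vdash \Sigma''$, we get $\Sigma\vdash\sigma$.

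The main obstacle is the first case, specifically making the fixed-point induction for $\Sigma\vdash Y\boto Y$ rigorous: clause~(1) of Definition~\ref{span} is genuinely circular (it quantifies over $A\boto A$ derivable \emph{using $Y$ itself}), so one cannot naively order the attributes by a linear induction. The clean fix is to argue by minimality: define $Y^\vdash := \{A\in R \mid \Sigma\vdash R:A\boto A\}$, show directly (using $\uax{3},\uax{4},\iax{2}$ for clause~(2), and soundness plus Theorem~\ref{thm:cc+uind}'s completeness applied to $\Sigma$ augmented with the CAs for $Y^\vdash$ for clause~(1), together with the observation that $\Delta\cup\{Y^\vdash\boto Y^\vdash\}\models A\boto A$ whenever $\Delta\cup\{Y\boto Y\}\models A\boto A$ is \emph{not} immediate and instead one shows $Y^\vdash$ already satisfies both closure conditions) that $Y^\vdash$ satisfies conditions~(1) and~(2) of Definition~\ref{span}, hence $Y\subseteq Y^\vdash$ by minimality of $Y$. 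Everything else is routine bookkeeping with the inclusion-dependency rules and the already-established completeness theorems.
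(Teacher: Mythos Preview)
Your proposal is correct and follows exactly the route the paper indicates: the paper's own proof is the one-line citation of Theorems~\ref{soundness}, \ref{thm:cc+uind}, \ref{thm:cosma} together with completeness of $\iax{1},\iax{2}$ for UINDs, and you have supplied the missing details, including the minimality argument showing $Y\subseteq Y^\vdash=\{A:\Sigma\vdash A\boto A\}$. One small slip: in the UIND case you write ``$\Sigma\vdash R[A]\sub R[B]$ for each $A\sub B$ in $\Sigma_{\mathrm{IND}}$ with $B\in Y$'' when what you actually need (and then correctly derive via $\uax{3}$) is the \emph{reversed} IND $R[B]\sub R[A]$; this matches the paper's own usage in the proof of Theorem~\ref{thm:uind+ca+linear}, where the unrestricted extension is written as $\{A\sub B: B\sub A\in\Sigma_{\rm UIND},\,A\in Y\}$.
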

\subsection{Finite Implication}
For the finite implication problem of UFD+UIND+IA we obtain a complete axiomatization by extending the axioms in Theorem \ref{thm:unraxioms} with the so-called cycle rules \cite{CosmadakisKV90} (see Table \ref{tab-rules3}) and  by removing $\uax{3},\uax{4}$ which become redundant.
The idea of the completeness proof is to combine the chase-based approach of the proof of Theorem \ref{thm:cc+uind} with a graph-theoretic approach from \cite{CosmadakisKV90} that was used to prove a complete axiomatization for the finite implication problem of UFD+UIND. In this graphical approach a given constraint set $\Sigma$ is first closed under the inference rules and then interpreted as a graph with edges of two different colors.

\begin{table}

\[\fbox{\scalebox{.96}[.96]{$\begin{array}{c}

\cfrac{A_1 \to A_2\hspace{.3cm} A_2\supseteq A_3\hspace{.3cm} \ldots \hspace{.3cm} A_{2n-1}\to A_{2n} \quad A_{2n}\supseteq A_1
}{A_1 \leftarrow A_2\hspace{.3cm} A_2\sub A_3\quad \ldots \hspace{.3cm} A_{2n-1}\leftarrow A_{2n} \hspace{.3cm} A_{2n}\sub A_1 }

\\
\text{(cycle rule for $n$, $\mathcal{C}_n$)}

\end{array}$}}\]
\caption{Cycle rules for finite implication \label{tab-rules4}}
\vspace{-2mm}
\end{table}

\begin{definition}[\cite{CosmadakisKV90}]\label{def:multigraph}
For each set $\Si$ of UINDs and UFDs over $R$, let $G(\Sigma)$ be the multigraph that consists of nodes $R$, red directed edges $(A,B)$, for $A\to B \in \Si$, and black directed edges $(A,B)$, for $B\sub A\in \Si$. If $G(\Si)$ has red (black) directed edges from $A$ to $B$ and vice versa, then these edges are replaced with an undirected edge between $A$ and $B$.
\end{definition}
Given a multigraph $G(\Si)$, we then topologically sort its strongly connected components which form a directed acyclic graph \cite{Kahn62}. That is, each component is assigned a unique \emph{scc-number}, greater than the scc-numbers of all its descendants. For an attribute $A$, denote by $\scc(A)$ the scc-number of the component node $A$ belongs to. Note that $\scc(A)\geq \scc(B)$ if $(A,B)$ is an edge in $G(\Si)$.  Denote also by $\scc_i$ the set of attributes $A$ with $\scc(A)=i$, and let $\scc_{\leq i}:=\bigcup_{j\leq i} \scc_j$ and define $\scc_{ \geq i}$, $\scc_{<i}$, and $\scc_{>i}$  analogously. The following lemma is a simple consequence of the definition.\footnote{Lemma \ref{lem:cosma0} is a reformulation of Lemma 4.2.  in \cite{CosmadakisKV90} where the same claim is proved for a set of FDs and UINDs that is closed under $\{\fax{1},\fax{2},\fax{3},\iax{1},\iax{2}\}\cup\{\C_k:k\in \N\}$. We may omit $\fax{3}$ here since, when restricting attention to UFDs, $\fax{3}$ is not needed in the proof.}

\begin{lemma}[\cite{CosmadakisKV90}]\label{lem:cosma0}
 Let $\Si$ be a set of UFDs and UINDs that is closed under \[\{\fax{1},\fax{2},\iax{1},\iax{2}\}\cup\{\C_k:k\in \N\}.\] Then every node in $G(\Si)$ has a red and a black self-loop. The red (black) subgraph of $G( \Si)$ is transitively closed. The subgraphs induced by the strongly connected components of $G(\Si)$ are undirected. In each strongly connected component, the red (black) subset of undirected edges forms a collection of node-disjoint cliques. Note that the red and black partitions of nodes could be different.
\end{lemma}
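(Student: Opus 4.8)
The plan is to translate everything into the dictionary between edges of $G(\Sigma)$ and members of $\Sigma$: after the convention that merges two oppositely directed edges of the same colour into one undirected edge, there is a red edge from $A$ to $B$ in $G(\Sigma)$ precisely when $A\to B\in\Sigma$, and a black edge from $A$ to $B$ precisely when $B\subseteq A\in\Sigma$. Under this dictionary the first two assertions are immediate. Every node carries a red self-loop, since $A\to A$ is an instance of reflexivity $\fax{1}$, and a black self-loop, since $A\subseteq A$ is an instance of $\iax{1}$; closure of $\Sigma$ under these rules puts both dependencies in $\Sigma$. Transitive closedness of the red subgraph is exactly the assertion of $\fax{2}$ (from $A\to B$ and $B\to C$ deduce $A\to C$), and transitive closedness of the black subgraph is exactly the assertion of $\iax{2}$.

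The core of the argument is the third assertion: every edge of $G(\Sigma)$ lying inside a strongly connected component $S$ comes together with its reverse, of the same colour. Take such an edge; without loss of generality suppose it is a red edge $(A,B)$ with $A,B\in S$ (the case $A=B$ is trivial, and the black case is entirely analogous), and let us aim to derive $B\to A\in\Sigma$. Since $A$ and $B$ lie in the same SCC there is a directed walk $W$ in $G(\Sigma)$ from $B$ to $A$, so $W$ followed by the edge $(A,B)$ forms a closed walk $\Gamma$. I would first splice a black self-loop at $A$ immediately in front of the edge $(A,B)$ in $\Gamma$ and a black self-loop at $B$ immediately after it (both exist by the self-loop claim), which makes $(A,B)$ a monochromatic red run of length one bracketed by black edges. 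Next I collapse every maximal run of consecutive same-colour edges of $\Gamma$ to a single edge of that colour, which is legitimate by the transitivity established above; the bracketing self-loops guarantee that the red edge $(A,B)$ survives this collapsing as a single edge and that it is preceded and followed by black edges. The result is a closed walk $w_0\to w_1\to\ldots\to w_{2m}=w_0$ of positive even length whose edge colours strictly alternate; rotate it so that it begins with the edge $(A,B)$, i.e.\ $w_0=A$, $w_1=B$ and the first edge is red. This walk is exactly the hypothesis of the cycle rule $\C_m$ under the assignment $A_j:=w_{j-1}$, and the conclusion of $\C_m$ adds every reversed edge with its colour preserved; in particular it adds the red edge $(w_1,w_0)=(B,A)$, that is, $B\to A\in\Sigma$. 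Hence every directed edge inside $S$ is accompanied by its reverse, so the subgraph induced by $S$ is undirected.

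For the last assertion, fix an SCC $S$ and consider the relation ``$A$ and $B$ are joined by an undirected red edge'' on the nodes of $S$. It is reflexive by the self-loop claim, symmetric by definition, and transitive: if $A$--$B$ and $B$--$C$ are undirected red edges in $S$ then all of $A\to B, B\to A, B\to C, C\to B$ are in $\Sigma$, so $A\to C$ and $C\to A$ are in $\Sigma$ by $\fax{2}$, and by the previous step the edge $(A,C)$ is undirected. Being an equivalence relation, it partitions the nodes of $S$ into node-disjoint red cliques; the black case is identical, using $\iax{2}$ in place of $\fax{2}$, and the resulting red and black partitions need not be related. The one genuinely delicate point is the normalisation in the middle step: collapsing monochromatic runs must keep the walk closed, must yield a strictly alternating colour sequence, and --- the real subtlety --- must not swallow up the very edge whose reverse we want. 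The black self-loop bracketing together with the rotation is precisely what secures this, and it is also why the whole family $\{\C_k:k\in\N\}$ of cycle rules, and not merely $\C_1$, is needed.
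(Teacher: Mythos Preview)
Your proof is correct. The paper does not actually prove this lemma: it is quoted from \cite{CosmadakisKV90} (as the attached footnote indicates) with only the remark that it is ``a simple consequence of the definition,'' so there is no in-paper argument to compare against. Your normalisation step---bracketing the distinguished edge by opposite-colour self-loops, collapsing maximal monochromatic runs, and rotating so that the resulting alternating closed walk matches the premise pattern of some $\C_m$---is precisely the standard way to see why closure under $\{\fax{1},\fax{2},\iax{1},\iax{2}\}\cup\{\C_k:k\in\N\}$ forces every edge inside a strongly connected component to be undirected, and your equivalence-relation argument for the clique decomposition is clean.

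One small point is worth making explicit in a final write-up: the collapsing of monochromatic runs must be performed on $\Gamma$ viewed as a \emph{cyclic} sequence of edges, not a linear one, since otherwise the first and last surviving edges of the linear collapse could share a colour and the result would fail to match the $\C_m$ premise pattern. You clearly intend this (you assert even length and strictly alternating colours), but it is the one place a reader might stumble.
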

Using this lemma we prove the following result which essentially shows how to construct a counterexample model for the finite implication problem of UFD+UIND+IA.  The construction is somewhat intricate, mainly because of the requirements for balancing conditions (i) and (iii) against one another.

\begin{lemma}\label{lem:help1}
Let  $\Si=\Sig{UFD}\cup\Sig{UIND}\cup\Sig{IA}$ be a set of UFDs, UINDs, and IAs that is closed under $\mathfrak{A}^*\cup\{\iax{1},\iax{2}\}\cup\{\C_k:k\in \N\}$ and such that it contains no CAs.
Let $0, \ldots ,n$ be some $\scc$-numbering of $G(\Sig{UFD}\cup\Sig{UIND})$, and let $M_1, \ldots ,M_n$ be a sequence of positive integers. Then there exists a finite relation $r$ and a sequence of positive integers $N_0, \ldots ,N_n$ such that $N_{i}\geq N_{i-1}+M_i$ and
\begin{enumerate}[label=\emph{(\roman*)}]
\item $r\models A\to B $ iff $A\to B \in \Si$,
\item $r\models X\boto Y$ iff $X\boto Y \in \Si$, and
\item  $|r(A)| =N_i$ for $A\in \scc_i$ where $i\geq 1$.
\end{enumerate}
\end{lemma}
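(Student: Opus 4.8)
The plan is to combine, as announced before the lemma, the chase-based construction behind Lemma~\ref{lemma:IAhelp} (equivalently, Theorem~\ref{thm:cc+uind}) with the finite counterexample construction of~\cite{CosmadakisKV90} for UFD+UIND, using the $\scc$-numbering to pin down the column sizes. I would first fix all cardinalities in advance: let $N_0$ be a sufficiently large constant and, for $i\ge 1$, set $N_i:=N_{i-1}+M_i+s_i$, where $s_i$ is a slack term depending only on $\Si$ whose role is explained below; in particular $N_i\ge N_{i-1}+M_i$. Fix nested value pools $W_0\sub\dots\sub W_n$ with $|W_i|=N_i$, say $W_i=\{1,\dots,N_i\}$. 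The relation is then produced in three stages. \emph{(a)} A \emph{skeleton} relation $p$ over $R$ whose columns encode the UIND structure and whose eventual collapse realises exactly the UFDs of $\Si$ and the prescribed column sizes; here I would import the UFD+UIND construction of~\cite{CosmadakisKV90}, laid out block by block in increasing $\scc$-order. \emph{(b)} A chase of $p$ by the IAs of $\Si$, using the rule from the proof of Lemma~\ref{lemma:IAhelp}, except that the filler values on $R\setminus XY$ are \emph{recycled} from the already allocated pools rather than a fresh symbol, so that no column of $\scc_i$ ever grows past $N_i$. \emph{(c)} The collapse $t\mapsto s_t$ with $s_t(A)=t(A^+)$ of Lemma~\ref{lemma:IAhelp}, which (since $A\to B\in\Si$ implies $B^+\sub A^+$) kills the unwanted FDs while preserving the IAs, whose sides are FD-closed; this last relation is $r$.

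For stage~\emph{(a)} the key input is Lemma~\ref{lem:cosma0}: in $G(\Sig{UFD}\cup\Sig{UIND})$ every node has red and black self-loops, the red and black subgraphs are transitively closed, the sccs induce undirected subgraphs, and inside each scc the red edges (resp.\ the black edges) form node-disjoint cliques. So in the block for $\scc_i$ I would treat each red clique as a single FD-equivalence class (its columns being in bijective correspondence after the collapse) and each black clique as a set of attributes forced to carry a common value set; $\Si$-unrelated black cliques of $\scc_i$ are given distinct---hence, being equicardinal, mutually non-included---size-$N_i$ subsets of $W_i$, while the red and black edges of $\Si$ running from $\scc_i$ into $\scc_{<i}$ determine how the $\scc_i$-columns must project onto the already built blocks (realised by a suitable product with the tuples already present). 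The component with $\scc$-number $0$ is a sink of the condensation, so every UIND of $\Si$ incident to it stays inside it, and applying the construction of Lemma~\ref{lemma:IAhelp} to $\Si\upharpoonright\scc_0$ handles that block; condition~(iii) imposes nothing at level $0$. After stages~\emph{(a)}--\emph{(c)} I would verify (i)--(iii): the UFDs hold exactly because of the collapse together with the scatter built into $p$ (non-$\Si$ FDs failing after the collapse, exactly as in Lemma~\ref{lemma:IAhelp}); the UINDs of $\Si$ hold by the nesting of the pools, and a non-$\Si$ UIND $A\sub B$ fails because either $\scc(A)>\scc(B)$, forcing $|r(A)|>|r(B)|$, or $A,B$ sit in $\Si$-unrelated black cliques of the same scc; the IAs of $\Si$ hold by the chase, and a non-$\Si$ IA fails by the argument of Lemma~\ref{lemma:IAhelp} via Lemma~\ref{lemma:help3} and rules $\tax{2},\tax{3}$. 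Soundness of every rule used is Theorem~\ref{soundness}, and the cycle rules $\C_k$ enter only through Lemma~\ref{lem:cosma0}.

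The main obstacle is precisely the tension flagged in the remark before the lemma, between condition~(i) and condition~(iii). The IA chase wants to introduce product structure, hence new values, which enlarges the $\scc_i$-columns past $N_i$; but recycling values too aggressively risks accidentally validating an FD $A\to B$ with $A,B\in\scc_i$ not in $\Si$ (two tuples forced equal on $A$ ending up equal on $B$ after the collapse), or making two $\Si$-unrelated black cliques coincide. Two things are needed to resolve this. First, one must bound---by a counting argument depending only on $\Si$, not on the $M_j$---the number of distinct values the full IA chase can force onto a single column of $\scc_i$, so that the slack $s_i$ can be taken large enough (while still only claiming $N_i\ge N_{i-1}+M_i$) for the chase to complete within $W_i$; this is the genuinely delicate part. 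Second, the skeleton $p$ must be built with enough independence among distinct red cliques---again following~\cite{CosmadakisKV90}---that the collapse $t\mapsto s_t$ leaves every non-$\Si$ FD and IA over $R$ violated and every prescribed cardinality exactly met. The remaining checks for (i)--(iii) are routine bookkeeping.
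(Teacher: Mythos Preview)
Your plan diverges from the paper's in a way that creates exactly the obstacle you flag and then leaves it unresolved. The paper does not build a skeleton with prescribed column sizes and then chase by IAs; it goes in the opposite order. It starts from the finite Armstrong relation for $\Sig{UFD}\cup\Sig{IA}$ supplied by Theorem~\ref{thm:cc+uind}, which already satisfies (i) and (ii), and then inductively \emph{pumps} the column sizes level by level to obtain (iii) while preserving (i) and (ii). The pumping step for level $i$ is: rename values so that each column is an initial segment of $\N$ and columns in the same maximal red clique of $G(\Sig{UFD}\cup\Sig{UIND})$ are identical; then replace every occurrence of the maximum value $|r_{i-1}(A)|$ in each $A\in\scc_i$ by an arbitrary value from $\{|r_{i-1}(A)|,\dots,N_i\}$, choosing consistently within each red clique; finally re-apply the collapse $t\mapsto s_t$ with $s_t(A)=t(A^+)$. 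Because only the single top value in each $\scc_i$-column is fanned out, and consistently across each red clique, the UFD and IA structure survives and each $\scc_i$-column grows to exactly $N_i$.

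This sidesteps your ``genuinely delicate part'' entirely: there is no IA chase after the sizes are fixed, so nothing can overflow. In your route the counting argument you call for is never supplied, and the harder issue you do not address is that after the collapse $t\mapsto t(A^+)$ the size of $r(A)$ equals the number of distinct $A^+$-profiles, which need not be exactly $N_i$ unless the skeleton and the chase fillers have been engineered for it---a constraint you would have to maintain through an unbounded number of chase steps. A further simplification you are missing: UIND correctness is not part of this lemma at all. Condition~(iii) only fixes column \emph{cardinalities}; the UIND Armstrong property is established afterwards in Theorem~\ref{thm:finaxioms} by a separate value-renaming on top of the relation produced here. Importing the full UFD+UIND construction of~\cite{CosmadakisKV90} into stage~(a) is therefore unnecessary and is precisely what makes your stages (a)--(c) hard to reconcile.
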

\begin{proof}
By Theorem \ref{thm:cc+uind} there is a finite relation $r$ satisfying (i-ii). We show how to extend  $r$ to a finite relation satisfying (i-iii). 
We construct inductively, for each $i=0, \ldots ,n$,  a relation $r_i$  such that it
  satisfies (i-iii) with the alleviation that (iii) holds over $A\in \scc_{\leq i}$.

Without loss of generality $\Si$ contains some CAs in which case $\scc_0$ consists of all derivable constants and we set  $r_0= r$ and $N_0=1$. 

 For the induction step, assuming   $r_{i-1}$ satisfies the induction claim we show how to define a relation $r_i$ satisfying the induction claim.
First observe that we may rename the values of $r_{i-1}$ 
 such that for all attributes $A$, the set $r_{i-1}(A)$ is the initial segment $\{1, \ldots ,|r_{i-1}(A)|\}$ of $\N$. 
  Note that satisfaction of typed dependencies is invariant under such renaming. Furthermore, 
   this renaming can be done in such a way that all columns in the same maximal red clique are identical.
   In what follows we now construct $r_i$.

Let $A^+$ be the set of all $B$ such that $A\to B \in \Si$. 
  First we define an auxiliary relation $r$ whose tuples are obtained from those of $r_{i-1}$
 by replacing each value $|r_{i-1}(A)|$ of $A\in \scc_i$ with any value from \[\{|r_{i-1}(A)| ,\ldots ,N_i\},\]
 where $N_i$ is the maximun of  $\max\{|r_{i-1}(A)|: A\in \scc_i\}$ and $N_{i-1}+M_{i}$.
 Furthermore, we place the restriction that the same value must be picked for each attribute from the same maximal red clique.
 That is, we replace all maximal attribute values at the $i$th $\scc$ level with some new value consistently with other attributes in the same maximal red clique, and the new maximal value is at least $N_{i-1}+ M_i$. 
We then define $r_i:=\{ s_t\mid t\in r\}$ where $s_t$ maps each attribute $A$ 
 to the tuple $t(A^+)$.

Let us consider the items in the following.
\begin{enumerate}[label=\emph{(\roman*)}]
\item Let $A\to B\in \Si$. 
Let $s_t(A)=s_{t'}(A)$ for $s_t$ and $s_{t'}$ from $r_i$. By definition $t(A^+)=t'(A^+)$ which entails by $B^+\sub A^+$ that $t(B^+)=t'(B^+)$, and hence $s_t(B)=s_{t'}(B)$.

Suppose $A\to B\notin \Si$. By induction assumption we find $t,t'$ from $r_{i-1}$ which agree on $A$ but disagree on $B$. Since $r_{i-1}$ satisfies all UFDs of $\Si$ we note that they agree on $A^+$ but disagree on $B^+$. This means $s_t$ and $s_{t'}$ witness that $A \to B$ is false in $r_i$.


\item Suppose $X\boto Y\in \Si$. Without loss of generality $X$ and $Y$ are disjoint.
 Consider tuples $s_{t},s_{t'}$ in $r_i$. By induction assumption we find $t''$ from $r_{i-1}$ which agrees with $t$ on $X$ and with  $t'$  on $Y$ when restricted to only those attributes $A\in XY\cap \scc_i$ with values  at most $|r_{i-1}(A)|$. Copying the remaining attribute values of $X$ and $Y$ from $t$ and $t'$ we obtain a modification of $t''$ which agrees with $t$ on $X$ and with  $t'$  on $Y$. Since $X$ and $Y$ are closed under the UFDs of $\Si$, $s_{t''}$ which agrees with $s_{t}$ on $X$ and with $s_{t'}$ on $Y$.  

 Suppose $X\boto Y\notin \Si$. By induction assumption $X\boto Y$ is not satisfied in $r_{i-1}$, that is, we find $t$ and $t'$ from $r_{i-1}$ such that  no $t''$ from $r_{i-1}$ agrees both on $X$ with $t$ and on $Y$ with $t'$. Considering $r_i$, finding a tuple $s_{t''}$ which agrees on $X$ with $s_t$ and on $Y$ with $s_{t'}$ leads to a contradiction with the previous statement. Thus $X\boto Y$ remains not satisfied in $r_i$.

\item 
Since $r_{i-1}$ is closed under the UFDs of $\Si$, it follows by induction assumption that $|r_i(A)|=|r_{i-1}(A)| =N_i$ for $A \in \scc_{<i}$. To show that this holds also for $A\in \scc_i$, first note that 
 the restrictions of $r_{i-1}$ to $A$ and $A^+$ are of the same size. Furthermore,  $r$ is obtained by replacing only the values $|r_{i-1}(A)|$ with new values simultaneously with all other attributes in the same maximal red clique. It follows that $r(A^+)$, and thus $r_i(A)$, is of size $N_i$.
\end{enumerate}
\end{proof}

Using the above counterexample construction we may now prove the completeness theorem for finite implication of UFD+UIND+IA. As a consequence of our proof technique the finite implication of UFD+UIND+IA enjoys finite Armstrong relations.

\begin{theorem}
\label{thm:finaxioms}
The axiomatization $\mathfrak{A}^*\cup \{\iax{1},\iax{2}\}\cup\{\cax{n}: n\in \N\}$ is sound and complete for the finite implication problem of UFD+UIND+IA. Furthermore, any finite set of UFDs, UINDs, and IAs has a finite Armstrong relation with respect to finite implication.
\end{theorem}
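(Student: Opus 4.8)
The plan is to first dispose of soundness and then reduce the completeness and Armstrong claims to a single construction. Soundness of $\mathfrak{A}^*\cup\{\iax{1},\iax{2}\}$ for finite implication is Theorem~\ref{soundness}, and the cycle rules $\cax{n}$ are sound for finite implication by~\cite{CosmadakisKV90}. As in the proof of Theorem~\ref{thm:cc+uind}, it then suffices to build, for an arbitrary finite set $\Sigma$ of UFD+UIND+IA that is closed under $\mathfrak{A}^*\cup\{\iax{1},\iax{2}\}\cup\{\cax{n}:n\in\N\}$, a finite relation that satisfies exactly the dependencies in $\Sigma$: together with soundness this gives $\Sigma\fmodels\sigma\Leftrightarrow\sigma\in\Sigma\Leftrightarrow\Sigma\vdash\sigma$ and exhibits an Armstrong relation with respect to finite implication.

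To construct this relation I would mirror the proof of Theorem~\ref{thm:cc+uind}. First handle the constancy atoms: put $C:=\{A\mid\emptyset\to A\in\Sigma\}$ and $R':=R\setminus C$; the restriction of $\Sigma$ to $R'$ is derivable from $\Sigma$, its closure $\Sigma'$ under the axiomatization contains no CA, and $\Sigma$ contains no UIND linking $C$ and $R'$ beyond those forced by constancy (the $\uax{4}$-style reasoning that would be needed here being subsumed, in the finite case, by the cycle rules). It is then enough to Armstrong-ise $\Sigma'$ and take a Cartesian product with a one-valued column for each attribute of $C$. For $\Sigma'$, fix an $\scc$-numbering of the UFD+UIND subgraph $G(\Sigma')$ and apply Lemma~\ref{lem:help1} with the growth parameters $M_1,\dots,M_n$ chosen sufficiently large; this yields a finite relation $r$ that satisfies exactly the UFDs and exactly the IAs of $\Sigma'$ and whose column sizes are stratified by $\scc$-level, $|r(A)|=N_i$ for $A\in\scc_i$ with $N_0<N_1<\dots<N_n$. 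The remaining obligation is that $r\models A\sub B$ iff $A\sub B\in\Sigma'$.

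This last verification is the crux and uses the structure of $G(\Sigma')$ recorded in Lemma~\ref{lem:cosma0}. If $A\sub B\in\Sigma'$ then transitive closedness of the black subgraph, together with the fact that column values in the Fagin-style relation of Lemma~\ref{lem:help1} are assigned consistently with the red and black cliques, gives $r(A)\sub r(B)$. Conversely, if $A\sub B\notin\Sigma'$: when $\scc(A)>\scc(B)$ the strict inequality $|r(A)|>|r(B)|$ already prevents $r(A)\sub r(B)$; when $\scc(A)=\scc(B)$, $A$ and $B$ lie in distinct black cliques of that component (black edges within a component form node-disjoint cliques by Lemma~\ref{lem:cosma0}) and, the columns of the component having equal size, the construction keeps $r(A)\neq r(B)$; and when $\scc(A)<\scc(B)$ the missing black edge $(B,A)$ together with the value tagging of Lemma~\ref{lem:help1} produces a value of column $A$ absent from column $B$. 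Reattaching the constant columns does no harm, so $r$ (resp.\ the product) is the desired Armstrong relation. I expect the genuine difficulty to sit in Lemma~\ref{lem:help1} and, within the present proof, in this UIND equivalence: the relabelings of column values forced by the cycle-rule-induced size coincidences within and across $\scc$-levels must be carried out without destroying satisfaction of the FDs and IAs --- the ``balancing of conditions (i) and (iii)'' flagged before Lemma~\ref{lem:help1}, here extended to (ii) as well.
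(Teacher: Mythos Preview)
Your overall strategy matches the paper's: soundness via Theorem~\ref{soundness} and the cycle rules, then reduce completeness and the Armstrong claim to constructing a finite Armstrong relation for a closed $\Sigma$, invoking Lemma~\ref{lem:help1} to handle the UFD and IA parts. The CA elimination you sketch is harmless (and indeed implicit in the paper's appeal to Lemma~\ref{lem:help1}, which assumes no CAs).

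There is, however, a gap in your UIND verification. Lemma~\ref{lem:help1} only controls column \emph{sizes} (condition~(iii)): $|r(A)|=N_i$ for $A\in\scc_i$. It does not provide the ``value tagging'' you appeal to; the construction in its proof aligns values only within maximal \emph{red} cliques, never black ones. After the without-loss-of-generality normalization of columns to initial segments $\{1,\dots,N_i\}$---which the paper also performs---one has $r(A)\subseteq r(B)$ whenever $\scc(A)\leq\scc(B)$, \emph{regardless} of whether $A\sub B\in\Sigma$. So your cases $\scc(A)=\scc(B)$ and $\scc(A)<\scc(B)$ fail as written: all columns at a given level are identical as sets, and nothing in Lemma~\ref{lem:help1} separates distinct black cliques.

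The paper closes this gap \emph{in the proof of the theorem itself}, not inside Lemma~\ref{lem:help1}. It fixes a secondary scc-numbering $\scc'$ of the black-only subgraph $G(\Sig{UIND})$ compatible with $\scc$, chooses the parameters $M_i$ in Lemma~\ref{lem:help1} to dominate the number of maximal black cliques at levels $\leq i$, and then for every black edge from $A$ to $B$ replaces the value $N_{i-1}+\scc'(B)$ in column $A$ (where $i=\scc(A)$) by a fresh global tag $N+\scc'(B)$. Being a per-column renaming of values, this leaves all typed dependencies (UFDs and IAs) untouched; and the injected tags now witness exactly black reachability, giving $r\models A\sub B\Leftrightarrow A\sub B\in\Sigma$. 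Your closing remark correctly anticipates that such a relabeling is the crux, but it is additional work layered on top of Lemma~\ref{lem:help1}, not something that lemma already delivers.
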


\begin{proof}
By Theorem \ref{soundness} and by soundness of the cycle rule the axiomatization is sound.  To show completeness and existence of Armstrong relations, it suffices to show that there is a finite Armstrong relation for any finite set $\Sigma$  of UFDs, UINDs, and IAs that is closed under the axiomatization.  Let $r_0$ be a finite relation obtained by the previous lemma. By condition (iii) $r_0$ satisfies also all UINDs of $\Sigma$ and is thus a model of $\Sigma$. Without loss of generality we have $r_0(A)=\{1, \ldots ,N_i\}$ for each attribute $A$ at level $i$ of the $\scc$ numbering of $G(\Sig{UIND}\cup\Sig{UFD})$. We may assume $N_i\geq  N_{i-1}+M_i$ where $M_i$ is the number of maximal black cliques at levels at most $i$. Consider the graph $G(\Sig{UIND})$, i.e., the subgraph of $G(\Sig{UIND}\cup\Sig{UFD})$ obtained by removing all the red edges. We may define an scc-numbering $\scc'$ for $G(\Sig{UIND})$ such that $\scc'(A) \leq \scc'(B)$ implies $\scc(A)\leq \scc(B)$.

We now modify $r_0$ as follows. If  there is a black edge from an attribute $A$ to another attribute $B$, then on $A$ replace  $N_{i-1}+\scc'(B)$ with $N+\scc'(B)$ where $i=\scc(A)$. By transitivity of UINDs and since $M_i$ was chosen large enough this is operation is well defined. We claim that the relation $r$ obtained by these modifications is an Armstrong relation for $\Si$. Again, satisfaction of typed dependencies is invariant under renaming of attribute values, and thus $r$ is Armstrong with respect to all UFD and UIND consequences. Assume then $C\sub D \notin \Sigma$. If $\scc(D) < \scc(C)$, then this UIND is false because the cardinality of $r(C)$ is strictly greater than that of $r(D)$. Otherwise, there is no black egde from $D$ to $C$ but by reflexivity $C$ has a black self-loop. Thus $N+\scc'(C)$ appears in $r(C)$ but not in $r(D)$. Suppose then $C\sub D \in \Si$, and let $a\in r(C)$. If $a$ is $N+\scc'(B)$ for some attribute $B$, then there is a black edge from $C$ to $B$ and by transitivity from $D$ to $B$ which implies that $N+\scc'(B)$ appears in $r(D)$ as well. Otherwise, $a$ is not greater than $N_{i}$ for $\scc(C)=i$. Now either $D$ and $C$ are in the same maximal black clique or $\scc(D)>i$.  In the first case $a$ must appear in $r(D)$, and in the second case $r(D)$ contains all positive integers from $1$ to $N_{i}$. We conclude that $r$ is also Armstrong with respect to UIND consequences.
\end{proof}


\section{Polynomial-Time Conditions for Non-Interaction}\label{sect:poly}

Naturally, the implication problems for a combined class are more difficult than the corresponding implication problems for the individual classes. While finite and unrestricted implication problems for FDs coincide and are PTIME-complete and finite and unrestricted implication problems for INDs coincide and are PSPACE-complete, finite and unrestricted implication problems for FDs and INDs deviate and are undecidable. However, the literature has brought forward tractable conditions that are sufficient for the non-interaction of these classes~\cite{DBLP:journals/ipl/LeveneL99,DBLP:journals/tcs/LeveneL01}. That is, whenever these conditions are met, then the implication for FDs and INDs by a given set of FDs and INDs can be determined by the restriction of the given set to FDs and INDs alone, respectively. Hence, non-interaction is a desirable property. In this section we examine the frontiers for tractable reasoning about the classes of FD+IA and IND+IA, respectively, in both the finite and unrestricted cases. The idea is to establish sufficient criteria for the non-interaction between IAs and FDs, and also between IAs and INDs. There is a trade-off between the simplicity and generality of such criteria. While simple criteria may be easier to apply, more general criteria allow us to establish non-interaction in more cases. Our focus here is on generality, and the criteria are driven by the corresponding inference rules. We define non-interaction between two classes as follows.
\begin{definition}
Let $\Si_0$ and $\Si_1$ be two sets of dependencies from classes $\C_0$ and $\C_1$, respectively. We say that $\Si_0,\Si_1$ have \emph{no interaction with respect to unrestricted (finite) implication} if
\begin{itemize}
\item for $\si$ from $\C_0$, $\si$ is (finitely) implied by $\Si_0$ iff $\si$ is (finitely) implied by $\Si_0\cup \Si_1$.
\item for $\si$ from $\C_1$, $\si$ is (finitely) implied by $\Si_1$ iff $\si$ is (finitely) implied by $\Si_0\cup \Si_1$.
\end{itemize}
\end{definition}
Let us now define two syntactic criteria for describing non-interaction. We say that an IA $X\boto Y$ \emph{splits} an FD $U\to V$ (or an  IND $Z \sub  W$) if both $(X\setminus Y)\cap U$ and $(Y\setminus X)\cap U$ (or $ X\cap W $ and $ Y\cap  W $) are non-empty. The IA $X\boto Y$ \emph{intersects} $U\to V$ ($ Z\sub  W$) if $XY\cap U$ ($XY\cap  W$) is non-empty. Notice that both these concepts give rise to possible interaction between two different classes. We show that lacking splits implies non-interaction for IND+IA, and for FD+IA in the unrestricted case. Non-interaction for FD+IA in the finite is guaranteed by the stronger condition in terms of lacking intersections.

The proof for IND+IA is straightforward by using its complete axiomatization.
\begin{theorem}
Let $\Sig{IND}$ and $\Sig{IA}$ be respectively sets of INDs and IAs. If no IA in $\Sig{IA}$ splits any IND in $\Sig{IND}$, then $\Sig{IND}$ and  $\Sig{IA}$ have no interaction with respect to unrestricted (finite) implication.
\end{theorem}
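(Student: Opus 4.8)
The plan is to leverage the completeness of the axiomatization $\mathfrak{I}\cup\mathfrak{B}\cup\mathfrak{C}$ (Theorem~\ref{theorem:ind+ia}) together with Corollary~\ref{cor:coincide} which tells us that finite and unrestricted implication coincide for IND+IA, so it suffices to argue about derivations rather than about models, and we need only treat one notion of implication. First I would observe that one direction of each biconditional in the definition of non-interaction is trivial: if $\sigma$ is implied by $\Sig{IND}$ alone (resp.\ $\Sig{IA}$ alone), then it is certainly implied by the larger set $\Sig{IND}\cup\Sig{IA}$ by monotonicity. So the content is the converse: assuming $\Sig{IND}\cup\Sig{IA}\vdash\sigma$, show that $\Sig{IND}\vdash\sigma$ when $\sigma$ is an IND, and $\Sig{IA}\vdash\sigma$ when $\sigma$ is an IA.

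The key step is a structural analysis of deductions. I would take a deduction $(\sigma_1,\ldots,\sigma_m)$ of $\sigma$ from $\Sig{IND}\cup\Sig{IA}$ using $\mathfrak{I}\cup\mathfrak{B}\cup\mathfrak{C}$ and prove by induction on $i$ that: (a) every IND $\sigma_i$ appearing in the deduction is already derivable from $\Sig{IND}$ alone using $\mathfrak{B}$, and (b) every IA $\sigma_i$ appearing is already derivable from $\Sig{IA}$ alone using $\mathfrak{I}$. The crucial point is the hypothesis that no IA in $\Sig{IA}$ splits any IND in $\Sig{IND}$: I expect this non-splitting condition to be exactly what is needed to rule out the genuinely ``mixed'' rules in $\mathfrak{C}$ (the rules $\mathcal{UI}1$ through $\mathcal{UI}5$ in Table~\ref{tab-rules3}) from ever producing a new consequence. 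Concretely, rules like concatenation ($\mathcal{UI}1$) and transfer ($\mathcal{UI}2$) take an IND $R[Z]\subseteq R'[W]$ type premise together with an IA $R'[Z\boto W]$; for these to apply nontrivially one needs the IA to relate the two halves of an IND's attribute list, which is precisely a split. One must be careful that we are working uni-relationally here, or more precisely that the ``splits'' condition is about the attribute sets $X\cap W$ and $Y\cap W$ being simultaneously nonempty, and then check each rule of $\mathfrak{C}$: if the IA premise involved does not split the IND premise involved, the conclusion is either trivial (derivable without the other class), or already follows from premises of the same class.

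The main obstacle will be the bookkeeping for rules $\mathcal{UI}3$, $\mathcal{UI}4$, and $\mathcal{UI}5$, which involve constancy atoms $R':Y\boto Y$ (a degenerate IA) interacting with INDs — here ``splitting'' is vacuous or delicate since $Y$ is a single attribute or $X\cap Y$ may be nonempty, so I would need to argue separately that any constancy atom derivable from $\Sig{IND}\cup\Sig{IA}$ and relevant to the derivation is already derivable from the appropriate single class, possibly invoking Lemma~\ref{help} and Theorem~\ref{theorem:uind+uca} to reduce the role of constancy atoms, or simply noting that under the non-splitting hypothesis no new constancy atoms arise that could feed these rules. A clean way to organize this is to first dispose of the constancy-atom sub-case (showing the set of derivable CAs is unaffected), and then run the main induction on deductions with the CA behavior already pinned down. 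Once the induction is set up correctly, each rule is a short case check, so the proof is ``straightforward by using its complete axiomatization'' as the paper promises, with the non-splitting hypothesis doing all the real work.
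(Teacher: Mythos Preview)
Your approach is exactly the paper's: use completeness (Theorem~\ref{theorem:ind+ia}) together with Corollary~\ref{cor:coincide} to reduce to derivations, then observe that under the non-splitting hypothesis no rule of $\mathfrak{C}$ can be applied, so any deduction uses only $\mathfrak{B}$ (producing INDs) and $\mathfrak{I}$ (producing IAs). The paper carries this out in three sentences and needs none of your proposed detour through Lemma~\ref{help}, Theorem~\ref{theorem:uind+uca}, or a separate constancy-atom analysis---for every rule in $\mathfrak{C}$, including $\mathcal{UI}3$--$\mathcal{UI}5$, a nontrivial instance already exhibits a split among its own premises (e.g.\ the constancy atom $Y\boto Y$ splits any IND whose right-hand side contains $Y$), so the hypothesis blocks all of $\mathfrak{C}$ uniformly.
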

\begin{proof}
Assume that $\si$ is implied by $\Sig{IND}\cup\Sig{IA}$ (recall that here finite and unrestricted implication  coincide). By Theorem \ref{theorem:ind+ia}, $\si$ can be deduced from $\Sig{IND}\cup\Sig{IA}$ by
$\mathfrak{I}\cup\mathfrak{B}\cup\mathfrak{C}$. Given the condition, no rules from $\mathfrak{C}$ can be applied in the deduction. Since only rules in $\mathfrak{B}$ (in $\mathfrak{I}$) produce fresh INDs (IAs), the claim follows.
\end{proof}
 The non-interaction results for FD+IA require more work.
For unrestricted implication the idea is to first apply the below polynomial-time algorithm which transforms an assumption set $\Si$ to an equivalent set $\Si^*$. The set $\Si^*$ is such that it has no interaction between FDs and IAs provided that none of its FDs split any IAs. Substituting $\Si$ for $\Si^*$ the same holds for finite implication.. These claims will be proven using a graphical version of the chase procedure.

Let us first consider the computation of $\Si^*$.  For a set of FDs $\Si$ we denote by $\textrm{Cl}(\Si,X)$ the closure set of all attributes $A$ for which $\Si\models X\to A$. This set can be computed in linear time by the Beeri-Bernstein algorithm \cite{BeeriB79}.  The non-interaction condition for unrestricted implication is now  formulated using $\Sig{IA}^*=\{X_1\boto Y_1, \ldots ,X_n\boto Y_n\}$ and $\Sig{FD}^*=\Sig{FD}\cup\{\emptyset \to Z\}$ where $Z,X_iY_i$ are computed using the following algorithm  that takes an FD set $\Sig{FD}$ and an IA set $\Sig{IA}=\{U_1\boto V_1, \ldots , U_n\boto V_n\}$ as an input.

\begin{algorithm}
			\caption{Algorithm for computing $Z,X_i,Y_i$}
			\label{algZ}
		\begin{algorithmic}[1]
\Require $\Si_{\rm FD}$ and $\Si_{\rm IA}=\{U_i\boto V_i\mid i=1, \ldots ,n\}$
 \Ensure $Z$ and $\Si^*_{\rm IA}=\{X_i\boto Y_i\mid i=1, \ldots ,n\}$
 \State \textbf{Initialize:}   $V\gets \emptyset, X_i \gets U_i,Y_i\gets V_i $
			\Repeat
			\State $Z\gets V$
			\For{$i=1, \ldots ,n$}
				\State  $X_i\gets \textrm{Cl}(\Sig{FD},X_iV)$
				\State  $Y_i\gets\textrm{Cl}(\Sig{FD},Y_iV)$
				\State $V\gets V\cup( X_i\cap Y_i)$
			\EndFor

			\Until{Z=V}

		\end{algorithmic}
	\end{algorithm}

From the construction we obtain that $\Sig{FD}^*\cup \Sig{IA}^*$ is equivalent to $\Sig{FD}\cup\Sig{IA}$ and that
\begin{enumerate}[label=\emph{(\arabic*)}]
\item for   $Z_1\boto Z_2\in \Sig{IA}^*$ and $i=1,2$, $\Sig{FD}^*\models Z_i\to X$ implies $X\sub Z_i$;
\item $\Sig{FD}^*\cup\Sig{IA}^*\models \emptyset \to A$ iff $ A \in Z  $.
\end{enumerate}
Recall that the closure set $\textrm{C}(\Sig{FD},X)$ can be computed in linear time by the Beeri-Bernstein algorithm. Now, at stage 5 (or stage 6) the computation of the closure set is resumed  whenever $V$ introduces attributes that are new to $X_i$ ($Y_i$). Since the number of the closures considered is $2 |\Sig{IA}|$, we obtain a quadratic time bound for the computation of $Z,X_i,Y_i$.

Let us then present the chase construction for FD+IA.
  The idea is to chase via graphs constructed from vertices and undirected edges,  labeled by sets of attributes (see also \cite{HannulaK14,HannulaK16}). Compared to the traditional tableau chase vertices now represent tuples and labeled edges represent equalities between tuple values. For unrestricted implication the obtained graphical chase procedure is sound and complete; for finite implication it is sound but incomplete.

 \begin{definition}\label{definition:graph}
Let $\Sig{FD}\cup\Sig{IA}\cup\{\si\}$ be a set of FDs and IAs over a relation schema $R$, and let $\Si=\Sig{FD}\cup\Sig{IA}$. Let $G$ be a graph that consists of two vertices $v_0,v_1$ and a single edge $(v_0,v_1)$ that is labeled by $\{A\in R: \Sig{FD}\cup\{X\boto X:XY\boto XZ\in \Sig{IA}\}\models Y \to A\}$ where $Y$ is $U$ if $\si$ is $U\to V$ and otherwise $Y$ is  $\emptyset$. For an attribute set $X$, vertices $v$ and $v'$ are $X$-connected 
 if $v=v'$ or for all $A\in X$ there is a sequence of edges $(v,v_0), \ldots ,(v_n,v')$, each labeled by a set including $A$.  We then denote by $G_{\Si,\si}$ any (possibly infinite) undirected edge-labeled graph that is obtained by chasing $G$ with the following rule as long as possible.
\begin{itemize}
\item Assume that $X\boto Y\in \Si$ and $v,v'$  are two vertices such that no vertex $v''$ is $X$-connected to $v$ and $Y$-connected to $v'$. Then apply (1) once, and thereafter apply (2) as long as possible.
\begin{enumerate}[label=\emph{(\arabic*)}]
\item Add fresh vertex $\vn$ and fresh edges $(v,\vn)$ and $(v',\vn)$ labeled respectively by $X$ and $Y$.
\item For all $X'\to Y'\in \Si$ and $v,v'\in G$ that are $X'$-connected but not $Y'$-connected, add a fresh edge $(v,v')$  labeled by $Y'$.
\end{enumerate}
\end{itemize}
\end{definition}


It is easy to describe non-trivial interaction between FDs and IAs using the above graph construction. Let us illustrate this with an example. Note that in the following example both FDs split IAs from the same assumption set. As a result of this, the interaction between FDs and IAs is already so intricate that all consequences are not derivable using the rules in Table \ref{tab-rules}.

\begin{example}\label{example}
Let $\Si$ be the set
$$\{B\boto CD,D\boto AE,BC\boto ADE,AB\to X,CDE\to X\}$$ and define $\si$ as $A\to X$. Then the basis for $G_{\Si,\si}$ consists of two vertices $v_0,v_1$ that are connected by an $A$-labeled edge. The graph depicted in Fig. \ref{example} is obtained by applying once each IA in $\Si$. Notice that $v_4$ is $AB$-connected to $v_0$ and $CDE$-connected to $v_1$. By the FDs in $\Si$, the next step would be to add $X$-labeled edges $(v_0,v_4)$ and $(v_1,v_4)$.

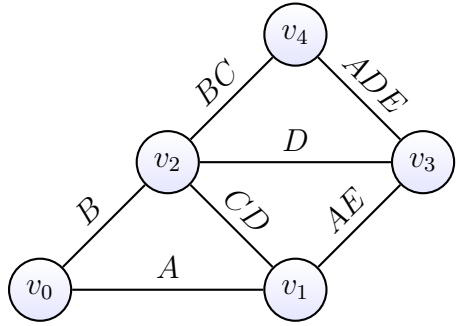
\begin{figure}
\begin{center}
\begin{tikzpicture}[-,>=stealth',auto,node distance=2.4cm,
  thick,main node/.style={circle,top color=white, bottom color=blue!10,draw,font=\sffamily\normalsize\bfseries}]

   \node[main node] (2) {$v_2$};
  \node[main node] (0) [below left of=2] {$v_0$};
  \node[main node] (1) [below right of=2] {$v_1$};
  \node[main node] (3) [above right  of=1] {$v_3$};
  \node[main node] (4) [above right  of=2] {$v_4$};


\path[every node/.style={sloped,anchor=south,auto=false}]

    (0) edge node  {$A$} (1)
    (1) edge node {$CD$}  (2)
    (0) edge node {$B  $}  (2)
    (1) edge node {$AE$}  (3)
    (2) edge node {$D$}  (3)
    (2) edge node {$BC$}  (4)
    (3) edge node {$ADE$}  (4)
;

\end{tikzpicture}
\caption{Start for $G_{\Si,\si}$-construction}
\end{center}
\end{figure}
\end{example}
Since $v_0$ and $v_1$ are $X$-connected in our example $G_{\Si,\si}$,  by the following lemma we obtain that $\Si\models A\to X$. Interestingly, $A\to X$ is not derivable from $\Si$ by the rules depicted in Table \ref{tab-rules}.
\begin{lemma}\label{lemma:help2}

Let $\Si\cup\{\si\}$ be a set of FDs and IAs. Then the following holds:
\begin{enumerate}[label=\emph{(\arabic*)}]
\item if $\si $ is  $U\to V$, then $\Si\models \si$ iff $v_0$ is $V$-connected to $v_1$ in $G_{\Si,\si}$;
\item if $\si $ is $U\boto V$, then $\Si\models \si$ iff there exists $v_2$ that is $U$-connected to $v_0$ and $V$-connected to $v_1$ in $G_{\Si,\si}$.
\end{enumerate}

\end{lemma}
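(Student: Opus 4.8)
The statement equates semantic implication with a reachability condition in the (possibly infinite) chased graph $G_{\Si,\si}$. I would prove soundness and completeness of the graph chase separately, in each of the two cases.

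**Soundness ($\Leftarrow$).** First I would verify the invariant that $G_{\Si,\si}$ represents a genuine (possibly infinite) relation: interpret each vertex $v$ as a tuple $t_v$, where two tuples are forced to agree on an attribute $A$ exactly when the vertices are $A$-connected. Concretely, for each attribute $A$ define an equivalence relation on vertices generated by the edges whose label contains $A$, and let $t_v(A)$ be the $A$-class of $v$. The chase rules are designed precisely so that this relation satisfies every $\si'\in\Si$: rule (1) enforces the witness required by an IA $X\boto Y$, and rule (2) propagates FD-equalities to a fixpoint, so no FD in $\Si$ is violated; the base edge is labeled by an FD-closure which only adds forced equalities. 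Hence $G_{\Si,\si}$ yields a model $r$ of $\Si$. Now if $\si$ is $U\to V$ and $v_0$ is $V$-connected to $v_1$, then $t_{v_0}$ and $t_{v_1}$ — which by construction agree exactly on $U$ (plus the FD-closure forced by the CAs, all of which hold) — agree on $V$, but wait: we need a model where $\si$ \emph{fails}. So the argument runs the other way: assume $v_0$ is \emph{not} $V$-connected to $v_1$; then $t_{v_0}$ and $t_{v_1}$ witness a violation of $U\to V$ in the model $r\models\Si$, so $\Si\not\models\si$. The IA case is symmetric: if no vertex is simultaneously $U$-connected to $v_0$ and $V$-connected to $v_1$, then $t_{v_0},t_{v_1}$ have no join witness, so $r\not\models U\boto V$, giving $\Si\not\models\si$. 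This establishes the contrapositive of soundness.

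**Completeness ($\Rightarrow$).** This is the standard direction: the chase is universal. I would show that $G_{\Si,\si}$ (read as the model $r$ above) fails $\si$ unless the connectivity condition holds, and conversely that the model it produces is the ``most general'' one, so any counterexample to $\Si\models\si$ maps homomorphically onto it. Precisely: suppose the connectivity condition fails — say $\si$ is $U\to V$ and $v_0$ is not $V$-connected to $v_1$. The base graph was seeded so that $v_0,v_1$ agree on $U$ and on all attributes FD-forced from $U$ together with the constancy atoms extracted from the IAs. One checks $r\models\Si$ and $r\not\models\si$, so $\Si\not\models\si$ — this is the contrapositive of completeness and coincides with what was argued above. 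For the genuine completeness content one must argue that \emph{if} the connectivity holds then \emph{every} model of $\Si$ satisfies $\si$; this follows because any model $r'\models\Si$ containing tuples $s_0,s_1$ agreeing on $U$ admits a homomorphism from $G_{\Si,\si}$ sending $v_0\mapsto s_0$, $v_1\mapsto s_1$, built by induction on the chase steps (rule (1) uses that $r'$ satisfies the relevant IA to supply the image of $\vn$; rule (2) uses that $r'$ satisfies the FDs), and this homomorphism respects all edge labels, so $V$-connectivity of $v_0,v_1$ forces $s_0(V)=s_1(V)$. The IA case is handled the same way, with the join witness in $r'$ supplying the image of the required vertex $v_2$.

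**Main obstacle.** The delicate point is that the chase need not terminate, so $G_{\Si,\si}$ may be infinite, and one must make the ``read off a relation'' step and the homomorphism induction work over an infinite, transfinitely-constructed object — in particular checking that the interleaving of rule (1) and the inner fixpoint of rule (2) really does reach a limit graph satisfying \emph{all} of $\Si$ (no FD left unsatisfied ``at infinity''), and that $X$-connectivity is preserved under the construction so that the homomorphism extension goes through at limit stages. This is where the analogy with the tableau chase for FDs and the graph chases of \cite{HannulaK14,HannulaK16} is invoked; the new wrinkle here is purely the edge-labeling bookkeeping. The base-graph seeding (closing $U$ under FDs together with the constancy atoms $X\boto X$ induced by $XY\boto XZ$) also needs a short separate justification via rules $\fiax{1},\fiax{2}$ that this really captures exactly the equalities forced on the two initial tuples.
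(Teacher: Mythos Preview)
Your proposal is correct and follows essentially the same route as the paper: the paper proves the $\Leftarrow$ direction by the homomorphism/extension argument you describe (extending $\{(v_0,t),(v_1,t')\}$ to a label-preserving map into an arbitrary model of $\Si$), and the $\Rightarrow$ direction by reading off a countermodel from $G_{\Si,\si}$ with $t_v(A)$ defined as the $A$-connectivity class of $v$, exactly as you do. Your write-up is a bit tangled---you label the model-reading argument ``Soundness'' before realizing it actually delivers the contrapositive of the other direction, and then repeat it under ``Completeness''---but the two substantive ideas are the paper's ideas; the paper is in fact even terser than you about the non-termination issue, simply noting that the counterexample may be infinite and calling both directions straightforward.
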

\begin{proof}

Assume first that the right-hand side condition holds; we show how to prove $\Si \models \si$. Let $r$ be a relation satisfying $\Si$, and $t,t'\in r$ be two tuples that agree on $U$ in case (1) and are arbitrary in case (2). It is then an easy induction to show that $\{(v_0,t),(v_1,t')\}$  can be extended to a mapping $f$ from vertices to attributes such that $f(v)(X)=f(v')(X)$ whenever $v$ and $v'$ are $X$-connected in $G_{\Si,\si}$. It follows that $r\models \si$ which shows that $\Si\models \si$.

Assume then that the righ-hand side condition fails.  Then for each vertex $v$ in $G_{\Si,\si}$, construct a tuple $t$ that maps each attribute $A$ to the set that consists of all vertices $v'$ in $G_{\Si,\si}$ that are $A$-connected to $v$. It is  straightforward  to show how this construction gives rise to a (possibly infinite) counter-example for $\Si\models \si$. This completes the proof of the lemma.
\end{proof}

We are now ready to state the non-interaction theorem for FD+IA. 

\begin{theorem}\label{thm:noninter}
Let $\Sig{FD}$ and $\Sig{IA}$ be respectively sets of FDs and IAs, and let $\Sig{FD}^*$ and $\Sig{IA}^*$ be obtained from $\Sig{FD}$ and $\Sig{IA}$ by Algorithm \ref{algZ}.
Then the following holds:
\begin{itemize}
\item if no IA in $\Sig{IA}^*$ splits any  FD in $\Sig{FD}^*$,
then $\Sig{FD}^*$ and  $\Sig{IA}^*$ have no interaction with respect to unrestricted implication;
\item if no IA in $\Sig{IA}$ intersects any  FD in $\Sig{FD}$, then $\Sig{FD}$ and  $\Sig{IA}$ have no interaction with respect to finite implication.
\end{itemize}
\end{theorem}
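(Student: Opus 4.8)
The two ``if''-directions in the definition of non-interaction are immediate from monotonicity of implication, so in each bullet the content is the converse: an FD-consequence of the combined set already follows from its FD-part, and an IA-consequence already follows from its IA-part (in the first bullet for $\Sig{FD}^*,\Sig{IA}^*$ and unrestricted implication, in the second for $\Sig{FD},\Sig{IA}$ and finite implication). The plan is to treat both via the graphical chase of Definition~\ref{definition:graph} together with Lemma~\ref{lemma:help2}, using as the only ``global'' facts about $\Sig{FD}^*,\Sig{IA}^*$ the two properties extracted from Algorithm~\ref{algZ}: each side $X_i,Y_i$ of an IA in $\Sig{IA}^*$ is closed under $\Sig{FD}^*$ (so in particular contains the constant set $Z$), and $\Sig{FD}^*\cup\Sig{IA}^*$ makes exactly the attributes in $Z$ constant.

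For the unrestricted bullet I would fix $\Si=\Sig{FD}^*\cup\Sig{IA}^*$ and an arbitrary target $\si$ and analyse $G_{\Si,\si}$. The heart of the argument is a structural lemma, proved by induction on the chase: under the no-split hypothesis, every edge produced by chase-rule~(2) --- an FD-edge $(v,v')$ labelled $Y'$ coming from $X'\to Y'\in\Sig{FD}^*$ --- is redundant, i.e.\ $v$ and $v'$ were already $Y'$-connected (using edges already present together with the $Z$-labelled edges that $\emptyset\to Z$ forces everywhere). The reason is that a genuinely new $X'$-connection can arise only through a freshly introduced IA-vertex $\vn$ obtained by applying some $X\boto Y\in\Sig{IA}^*$, and the only edges incident to $\vn$ carry the labels $X$ and $Y$; for $\vn$ to help along an $X'$-path one then needs $X'$ (up to the attributes of $Z$, which connect everything) to lie inside $X$ or inside $Y$, and then, since $X$ resp.\ $Y$ is $\Sig{FD}^*$-closed, $Y'$ lies in the same side, so the existing $X$- resp.\ $Y$-edge already provides the $Y'$-connection --- unless $X'$ meets both $X\setminus Y$ and $Y\setminus X$, which is precisely $X\boto Y$ splitting $X'\to Y'$ and is excluded. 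Hence $G_{\Si,\si}$ has, relative to $\si$, the same connectivity as the chase that uses $\Sig{IA}^*$ alone (with the base edge still labelled as in Definition~\ref{definition:graph}). Combining this with the observation that IA-applications never make $v_0$ and $v_1$ more connected than through their common constant core $Z$ (the two sides of an IA intersect only inside $Z$), Lemma~\ref{lemma:help2} yields: for an FD $\si=U\to V$, $\Si\models\si$ iff $V\subseteq\textrm{Cl}(\Sig{FD}^*,U)$ iff $\Sig{FD}^*\models\si$; for an IA $\si$, $\Si\models\si$ iff the purely-$\Sig{IA}^*$ chase witnesses $\si$, i.e.\ iff $\Sig{IA}^*\models\si$.

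For the finite bullet, the stronger hypothesis that no IA \emph{intersects} any FD means every FD left-hand side avoids every IA-attribute, so in $G_{\Si,\si}$ (now with $\Si=\Sig{FD}\cup\Sig{IA}$) the IA-edges, whose labels lie inside $W:=\bigcup\{XY:X\boto Y\in\Sig{IA}\}$, can never lie on an $X'$-path for a nonempty FD left-hand side $X'$; thus chase-rule~(2) only ever fires among $v_0,v_1$ and the FD-layer and IA-layer decouple entirely. I would exploit this to convert a finite counterexample for one part into one for the combined set: given a finite $r_1\models\Sig{IA}$ violating an IA $\si$ and a finite relation Armstrong for $\Sig{FD}$, splice them along $R=W\cup(R\setminus W)$ (essentially a product of $r_1\upharpoonright W$ with a finite model of $\Sig{FD}\upharpoonright(R\setminus W)$), and verify that the $W$-projection still satisfies $\Sig{IA}$, that every FD holds, and that $\si$ remains violated; the FD-consequence direction is symmetric and simpler.

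The main obstacle is twofold. First, the structural lemma for the unrestricted bullet requires a delicate induction: one must track \emph{all} paths through newly created IA-vertices, including those that pass back through the two old endpoints, and argue that $\Sig{FD}^*$-closedness of the IA components together with the no-split condition kills every new FD-edge --- this is exactly where the preprocessing of Algorithm~\ref{algZ} (which guarantees the closedness and pushes all constants into $Z$) is indispensable. Second, in the finite bullet the subtle case is FDs whose \emph{right}-hand sides reach into $W$: even under no-intersection these couple the two layers through constancy, so the splicing must be arranged --- using that such forced equalities are already recorded among the derivable constants --- so that no such FD is violated while $\si$ still is, and one must check the spliced relation stays finite; this balancing act is the technical crux of that case.
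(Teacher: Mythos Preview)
Your proposal is correct and takes essentially the same route as the paper. The unrestricted bullet is argued identically: under the no-split hypothesis no application of chase-rule~(2) is ever triggered, by exactly your case analysis on whether a new $X'$-connection runs through the fresh IA-vertex $\vn$ (using that each IA side is $\Sig{FD}^*$-closed and that $X\cap Y\subseteq Z$), whence $G_{\Si,\si}=G_{\Sig{IA}^*,\si}$ for IA targets and $(v_0,v_1)$ receives no new label for FD targets. For the finite bullet the paper likewise abandons the chase (Lemma~\ref{lemma:help2} only characterises unrestricted implication, so your opening remark there is heuristic, as you implicitly acknowledge) and builds explicit finite counterexamples; your ``splicing'' is realised concretely by, for the FD direction, setting $r(U^+)$ constant, $r(I\setminus U^+)=\{0,1\}^{I\setminus U^+}$, and each remaining column key-like, and for the IA direction, extending a finite IA-counterexample over $I$ with key-like columns on $R\setminus I$ --- which dissolves the right-hand-side-in-$W$ difficulty you flag, since any nonempty FD left side then contains a key attribute.
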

\begin{proof}
We consider first unrestricted implication and then finite implication.\\
\textbf{Unrestricted case.}  Assume that $\Si\models \si$ where $\Si =\Sig{FD}^*\cup \Sig{IA}^*$, and assume that no IA in $\Sig{IA}^*$ splits any FD in $\Sig{FD}^*$; we show that $\Sig{IA}^*\models \si$ or $\Sig{FD}^*\models \si$ holds. Let $G_{\Si,\si}$ be as in Definition \ref{definition:graph}; we show that the construction of $G_{\Si,\si}$ does not use any application of (2). 
Assume to the contrary that some introduction of a new vertex $\vn$ and new edges $(v,\vn),(v',\vn)$ by (1) renders for the first time some vertices $w$ and $w'$  $X'$-connected but not $Y'$-connected, for some  $X'\to Y'\in \Si$. Assume that the new edges $(v,\vn)$ and $(v',\vn)$ are respectively labeled by $X$ and $Y$ in which case we have $X\boto Y\in \Si$, and assume first that $w'$ is $\vn$. Since $X\boto Y$ does not split $X'\to Y'$, it must be the case that $X'$ is included in either $X$ or $Y$. We may assume by symmetry that this holds for $X$ in which case $X'Y'\sub X$ by (i). Then it follows that $w$ and $v$ are $X'$-connected but not $Y'$-connected since the same holds for $w$ and $\vn$ by the assumptions. Now $w\neq \vn$, and hence this contradicts Definition \ref{definition:graph} if $\vn$ is the first added vertex, and otherwise the assumption that (2) has not been applied previously. Hence, $w'$ and $\vn$ are different vertices, and so by symmetry are $w$ and $\vn$. However, since $X$ and $Y$ share only constant attributes, $w$ and $w'$ must have been $X'$-connected already before the introduction of $\vn$. Hence by Definition \ref{definition:graph} they have also been $Y'$-connected  which contradicts the assumption. This completes the proof that no  application of (2) occurs in the construction of $G_{\Si,\si}$.

Assume  that $\si$ is an IA. By the construction of $\Sig{IA}^*$,  $(v_0,v_1)$ has the same label in the initial graph $G$ of $G_{\Si,\si}$ and $G_{\Sig{IA}^*,\si}$. Since  the construction of $G_{\Si,\si}$  contains no application of (2),  $G_{\Si,\si}$ equals $G_{\Sig{IA}^*,\si}$. Hence by Lemma \ref{lemma:help2},  $\Sig{IA}^*\models \si$.
Assume then that $\si$ is an FD. Since the graph construction of $G_{\Si,\si}$ does not introduce any new labels for $(v_0,v_1)$, we obtain that $\Sig{FD}^*\cup\{X\boto X:XY\boto XZ\in \Sig{IA}^*\}\models \si$ by Definition \ref{definition:graph} and Lemma \ref{lemma:help2}. Therefore, and since $Z$ satisfies (ii) and $\emptyset \to Z \in \Sig{FD}^*$, it follows  that $\Sig{FD}^*\models \si$. \\
\textbf{Finite case.}  Assume  that no IA in $\Sig{IA}$ intersects any FD in $\Sig{FD}$, and assume first that $\Sig{FD}\not\fmodels U\to V$; we show that $\Sig{FD}\cup\Sig{IA}\not\fmodels U\to V$.  Let $U^+$ be the set of  attributes $A$ such that $\Sig{FD}\fmodels U\to A$, and let $I$ be the set of  attributes  that appear in $\Sig{IA}$. Then we let $r$ be a relation where $r(U^+)$ is ${}^{U^+}\{0\}$, $r(I\setminus U^+)$ is ${}^{I\setminus U^+}\{0,1\}$, and $r(A)$ takes values $1, \ldots ,2^{|U^+\setminus I|}$ for $A\in R\setminus (U^+\cup I)$, where  $R$ is the underlying relation schema. Clearly $r$ satisfies $\Sig{IA}$ and violates $U\to V$. If $X\to Y\in \Sig{FD}$, then by the assumption   $X\cap I=\emptyset $, and hence it follows by the construction that $r$ satisfies $X\to Y$.

Assume then that $\Sig{IA}\not\fmodels U\boto V$; we show that $\Sig{FD}\cup\Sig{IA}\not\fmodels U\boto V$. Let $I$ be as in the previous paragraph. If $UV\sub I$, let $r$ be a finite relation over $I$ satisfying $\Sig{IA}$ and violating $ U\boto V$; otherwise define $r$ as ${}^{I}\{0,1\}$. Then let $r'$ be the extension of $r$ where, for each $A\in R$, $r'(A)$ takes values $1, \ldots ,2^{|r|}$. Given the non-interaction assumption, we notice that $r'$ is a witness of $\Sig{FD}\cup\Sig{IA}\not\fmodels U\boto V$.  This completes the proof.
\end{proof}
Note that the finite relation constructions in the proof are possible already from the assumptions that $\Sig{FD}\not\models U\to V$ or $\Sig{IA}\not\models U\boto V$; for the latter  recall that finite and unrestricted implication coincide for IAs. Hence, the proof entails  that the finite and unrestricted implication problems coincide for FD+IA provided that the non-intersection assumption holds.

To illustrate the necessity for a stronger condition in the finite case, recall from Section \ref{sect:FD+IA} that $AB \to CD$ is finitely implied by $\{A\boto B, C\boto D, BC\to AD, AD\to BC\}$, and notice  that $AB \to CD$ is not finitely implied by  $\{ BC\to AD, AD\to BC\}$. However, Algorithm \ref{algZ} does not produce any fresh  assumptions, and neither $A\boto B$ nor $C\boto D$ splits any FD assumption. Therefore, lackness of splits is not sufficient for non-interaction in the finite case.

\section{Complexity Results}\label{sect:comp}

In this section we examine the computational complexity of the various implication problems we have studied. We first show that both implication problems for UFD+UIND+IA can be solved in low-degree polynomial time, even though the problems differ from one another. Then we focus on the class IND+IA for which the two implication problems coincide. Recall that the combination of FDs and INDs is undecidable with regards to their implication problem. However, the same cannot be true for the combination of INDs and IAs, as already witnessed by our finite axiomatization. In this section we proceed even further by showing that adding IAs to the class of INDs involves no trade-off in terms of losing desirable computational properties. Indeed, Theorem  \ref{comp:IND+IA} shows that, alike for INDs \cite{CasanovaFP84journal}, the  implication problem for IND+IA is $\PSPACE$-complete. We start by analyzing the complexity of UFD+UIND+IA implication.

\begin{theorem}
\label{thm:compunary}
Let $\Sig{UFD},\Sig{UIND},\Sig{IA}$ be respectively sets of UFDs, UINDs, and IAs over a relation schema $R$. The unrestricted and finite implication problems for $\si$ by $\Sig{UFD}\cup\Sig{UIND}\cup\Sig{IA}$ can be decided in time:
\begin{itemize}
\item $O(|\Sig{IA}|\cdot  |\Sig{UFD}|+|\Sig{UIND}|)$ if $\si$ is an UFD or UIND;
\item $O(|\Sig{IA}|\cdot ( |\Sig{UFD}|+|R|^2)+ |\Sig{UIND}|)$ if $\si$ is a IA.
\end{itemize}
\end{theorem}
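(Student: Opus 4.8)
The plan is to turn the axiomatic characterisations of Theorems~\ref{thm:unraxioms} and~\ref{thm:finaxioms} into explicit algorithms, exploiting the fact that --- as in \cite{CosmadakisKV90} --- the three classes interact almost exclusively through the attributes that are forced to be constant. The common preprocessing step computes the set $\C:=\{A\in R\mid \Si\vdash\emptyset\to A\}$ as a least fixpoint: seed $\C$ with every attribute lying in both components of some atom of $\Sig{IA}$ (such an attribute is constant by $\tax{2},\tax{3}$), and then close $\C$ under (a) $A\in\C,\ A\to B\in\Sig{UFD}\Rightarrow B\in\C$; (b) $R[A]\sub R[B]\in\Sig{UIND},\ B\in\C\Rightarrow A\in\C$ (rule $\uax{4}$); and (c) for each $U\boto V\in\Sig{IA}$, if $A\in U$ and $A\in\textrm{Cl}(\Sig{UFD},V)\cup\textrm{Cl}(\Sig{UFD},\C)$ then $A\in\C$, together with the symmetric rule ($\fiax{2}$ enlarges the component $V$ up to $A$, after which $\tax{2},\tax{3}$ extract the constancy atom $\emptyset\to A$). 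Because the FDs are unary, attribute closures are additive, so the closures $\textrm{Cl}(\Sig{UFD},V_i)$ are precomputed once by reachability searches and $\textrm{Cl}(\Sig{UFD},\C)$ is maintained incrementally; together with the $O(|\Sig{UIND}|)$ propagation of~(b) this keeps the computation of $\C$ within $O(|\Sig{IA}|\cdot|\Sig{UFD}|+|\Sig{UIND}|)$. One then has to verify that this fixpoint really equals $\{A\mid\Si\vdash\emptyset\to A\}$, by inspecting which rules of the relevant axiomatisation can introduce a constancy atom, and that the cycle rules $\cax{n}$ add no further constants in the finite case.

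If $\si$ is an UFD or an UIND it involves the IAs only through $\C$: among the rules of the axiomatisations that produce an UFD or a UIND, the only IA-premises are constancy atoms (in $\uax{3},\uax{4}$ in the unrestricted case), and the conclusion of $\fiax{1}$ is a constancy atom too, so all of this is recorded in $\C$. Hence one adjoins the constancy atoms $\{\emptyset\to A\mid A\in\C\}$ and decides the resulting pure UFD+UIND implication problem --- this is linear time in both the finite and the unrestricted case \cite{CosmadakisKV90}, the finite case being exactly the $\scc$/cycle-rule test employed in the proof of Theorem~\ref{thm:finaxioms} --- the few remaining cases in which $\si$ mentions a constant attribute (e.g.\ $A\to B$ with $A\in\C$, which holds iff $B\in\C$) being settled directly. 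This gives the bound $O(|\Sig{IA}|\cdot|\Sig{UFD}|+|\Sig{UIND}|)$.

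If $\si$ is an IA $X\boto Y$, the UINDs can influence IA-derivations in two ways only: through $\C$, and --- in the finite case --- by producing, via the cycle rules, further UFDs that may feed $\fiax{2}$. One therefore first runs the linear-time UFD+UIND machinery of \cite{CosmadakisKV90} on $\Sig{UFD}\cup\Sig{UIND}$ to obtain $\C$ together with the set $\Sig{UFD}'$ of all (finitely) implied UFDs --- in the unrestricted case $\Sig{UFD}'=\Sig{UFD}$ --- and is left with the $\Sig{UFD}'$+IA implication problem for $X\boto Y$ from $\Sig{UFD}'\cup\Sig{IA}\cup\{\emptyset\to A\mid A\in\C\}$, for which finite and unrestricted implication coincide by Corollary~\ref{cor:cc+uind} and which is decided by the graph chase $G_{\Si,\si}$ of Definition~\ref{definition:graph} through Lemma~\ref{lemma:help2}. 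Since $|\Sig{UFD}'|=O(|R|^2)$ and, the FDs being unary, every component enlargement along the chase is a reachability computation, the chase creates at most polynomially many vertices and testing the required $X$- and $Y$-connectivity costs $O(|\Sig{IA}|\cdot|R|^2)$ overall; adding the preprocessing yields $O(|\Sig{IA}|\cdot(|\Sig{UFD}|+|R|^2)+|\Sig{UIND}|)$.

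The hard part will be the IA-target case. One must prove, first, that the interaction of IAs with UFDs and UINDs is genuinely confined to $\C$ and to the derived FD set $\Sig{UFD}'$, so that deciding $X\boto Y$ reduces cleanly to one $\Sig{UFD}'$+IA implication problem --- this rests on the completeness halves of Theorems~\ref{thm:unraxioms} and~\ref{thm:finaxioms} and on Corollary~\ref{cor:cc+uind} --- and, second, that $G_{\Si,\si}$ together with the connectivity tests can be arranged to run within $O(|\Sig{IA}|\cdot|R|^2)$, which hinges on a good bound on the number of vertices the chase can create (one per IA application) and on the additivity of unary-FD closures, which turns every closure and every $X$-connectivity query into a bounded graph search. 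The worklist bookkeeping --- recomputing a closure only when $\C$ actually grows, exactly as in the analysis of Algorithm~\ref{algZ} --- is then routine.
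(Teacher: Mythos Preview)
Your high-level decomposition --- compute the set $\C$ of forced constants, handle the UFD/UIND target via the linear-time \cite{CosmadakisKV90} machinery, and for an IA target reduce to a UFD+IA problem --- matches the paper's strategy. For UFD/UIND targets your plan is essentially what the paper does: it builds a single two-coloured digraph $G$ (Algorithm~\ref{algZ2}) encoding both $\Sig{UFD}$ and $\Sig{UIND}$, in the finite case closes it under the cycle rules by adding reverse edges inside strongly connected components, computes the constant set $Z$ and the FD-closures $X_i,Y_i$ of each IA component, and then reads off UFD/UIND consequences as path existence in $G$.

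The IA case, however, has a genuine gap. You propose to decide $X\boto Y$ by running the graph chase $G_{\Si,\si}$ of Definition~\ref{definition:graph} and invoking Lemma~\ref{lemma:help2}. But $G_{\Si,\si}$ is explicitly allowed to be infinite in that definition, and you give no argument that it is polynomial (let alone of size $O(|\Sig{IA}|\cdot|R|^2)$) in the UFD+IA case. Your assertion that ``the chase creates at most polynomially many vertices (one per IA application)'' is unjustified: each IA can be applied to many pairs of vertices, and every new vertex creates new pairs. The cubic-time decidability of pure IA implication in \cite{KontinenLV13} is \emph{not} obtained via a terminating chase but via a closure computation on attribute sets, so the chase does not inherit that bound.

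The paper sidesteps this completely. Having computed, in step~7 of Algorithm~\ref{algZ2}, the FD-closures $X_i,Y_i$ of every IA component, it sets $\Sig{IA}^*:=\{X_i\boto Y_i\mid i=1,\ldots,n\}\cup\{Z\boto Z\}$ and argues directly from the axiomatisation that $\Si$ (finitely) implies $X\boto Y$ iff $\Sig{IA}^*\vdash_{\mathfrak{I}} X\boto Y$: once the IA components are FD-saturated, the UFDs and UINDs play no further role and one is left with a \emph{pure} IA implication problem. After stripping the constants $Z$ this is an implication problem for disjoint IAs, which is decided in time $O(|\Sig{IA}^*\upharpoonright R'|\cdot|R'|^2)=O(|\Sig{IA}|\cdot|R|^2)$ by the algorithm of \cite{KontinenLV13}. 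It is this reduction to pure IA implication --- not any bound on the size of $G_{\Si,\si}$ --- that delivers the stated complexity.
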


\begin{proof}
 Algorithm~\ref{algZ2} extends an algorithm for UFD+UIND-implication in \cite{CasanovaFP84journal}. It generates a graph $G$ and sets $Z,X_i,Y_i$, for $i=1, \ldots ,|\Sig{IA}|$, and takes $\Sig{UFD},\Sig{UIND}$ and $\Sig{IA}=\{U_1\boto V_1, \ldots ,U_n\boto V_n\}$ as an input. Note that steps 2-4 are to be omitted in the unrestricted case.

\begin{algorithm}[t]
\caption{Algorithm for computing $Z,X_i,Y_i$}
\label{algZ2}

\begin{algorithmic}[1]
\Require $\Si_{\rm UFD}$, $\Si_{\rm UIND}$, and $\Si_{\rm IA}=\{U_i\boto V_i\mid i=1, \ldots ,n\}$
 \Ensure A digraph $G$ and sets $Z$ and $X_i, Y_i$ for  $i=1, \ldots ,n$
 \State \textbf{Initialize:}   $Z,X_i,Y_i$ are empty and $G$ is a digraph that consists of vertices $R$, red edges $(A,B)$ for $A\to B\in \Sig{UFD}$, and black edges $(A,B)$ for $B\sub A\in \Sig{UIND}$
			\State  \textbf{compute} all strongly connected components of $G$ (\emph{only in the finite case})
			\For{each red (black) edge $(A,B)$ with $A,B$ in the same component (\emph{only in the finite case})} 		 \State \textbf{add} red (black) edge $(B,A)$ to $G$
			\EndFor
			\For{ $\emptyset \to A \in \Sig{FD}$}
				\State \textbf{add} $A$ to $Z$
			\EndFor
			\For{$n=1, \ldots ,n$}
				\State \textbf{add} $A$ to $X_i$ ($Y_i$) if there is a red path from $X_i$ ($Y_i$) to $A$
				\State \textbf{add} each attribute in $X_i\cap Y_i$ to $Z$
			
			\EndFor
			\State \textbf{add} to $Z$ all attributes that are reachable from $Z$ in $G$, ignoring edge colors
			\For{each red (black) edge $(A,B)$ with $A,B$ in $Z$}
				\State \textbf{add} red (black) edge $(B,A)$ to $G$
			\EndFor
\end{algorithmic}
\end{algorithm}

Let $\Si:=\Sig{UFD}\cup\Sig{UIND}\cup\Sig{IA}$. Let $\Si^*$ be the set of dependencies that consists of all trivial UFDs, UINDs, and IAs over $R$, and:
\begin{enumerate}[label=\emph{(\roman*)}]
\item $A\to B$ iff  $B\in Z$ or $A$ is connected to $B$ by a red path;
\item $A\sub B$ iff there is a black path from $B$ to $A$;
\item $X\boto Y$ iff $\Sig{IA}^*\vdash_{\mathfrak{I}} X\boto Y$;
\end{enumerate}
where $\Sig{IA}^*:=\{X_i\boto Y_i\mid i=1, \ldots ,n\}\cup\{Z\boto Z\}$.

Let us first consider the case for finite implication. In what follows, we show that $\Si\fmodels \si$ iff $\si\in \Si^*$. By Theorem \ref{thm:finaxioms}, $\Si\fmodels\si$ iff  $\si$ can be deduced from $\Si$ by rules $\mathfrak{A}^*\cup \{\iax{1},\iax{2}\}\cup\{\cax{k}: k\in \N\}$. Therefore, the claim follows if $\Si^*$ is the deductive closure of $\Si$ under these rules. It is straightforward to check that each dependency in $\Si^*$ can be deduced by the rules. Note that item 3 of the graph construction can be simulated with the cycle rules and transitivity rules for FDs and INDs.
Next we show that $\Si^*$ is deductively closed. The only non-trivial cases are the cycle rules and $\fiax{1},\fiax{2}$. For $\fiax{2}$, assume that $X\boto YA,A\to B\in \Si^*$; we show that $X\boto YAB\in \Si^*$. The cases where $A$ or $B$ is empty are trivial. Assume that both are single attributes.  If $B\in Z$, then the claim follows by the definition of $\Sig{IA}^*$. Otherwise, $B\not\in Z$ and there exists a red path from $A$ to $B$. 
 Since $Z$ is closed under red arrows, this path stays outside $Z$ and has  hence existed already at step $7$. Therefore by the construction $B$ is in $X_i$ ($Y_i$)  whenever $A$ is. It is now an easy induction on the length of a deduction to show that whenever $\Sig{IA}^*\vdash_{\mathfrak{I}} V_1\boto V_2$, then $\Sig{IA}^*\vdash_{\mathfrak{I}} V'_1\boto V'_2$ where $V'_i=V_iB$ if $A\in V_i$, and otherwise $V'_i=V_i$. From this it follows that $\Sig{IA}\vdash_{\mathfrak{I}} X\boto YAB$, and therefore $ X\boto YAB\in \Si^*$. For $\fiax{1}$ and the cycle rules, the reasoning is analogous. This concludes the proof of the claim for finite implication.

Let us then turn to unrestricted implication. It suffices to show that $\Si\models \si$ iff $\si\in \Si^*$, where  $\Si^*$ is now defined over graph $G$ obtained from steps 1,5-12 of the algorithm. Proving this is analogous to the finite case (with the exception that rules $\uax{3},\uax{4}$ are to be considered instead of the cycle rules) and hence omitted here. 

We now analyze the time complexity. Steps $1,2,3,10,11$ each take time $O(|\Sig{UFD}|+|\Sig{UIND}|)$, step $5$ takes $O(|\Sig{UFD}|)$, and step $7$ takes $O(|\Sig{IA}|\cdot  |\Sig{UFD}|)$. Since, reachibility can be tested in linear time, we obtain the time bound for a UFD or a UIND $\si$. Assume that $\si$ is an IA. It easy to see that $\si$ is (finitely) implied by $\Sig{IA}^*$ iff $\si\upharpoonright R'$ is (finitely) implied by $\Sig{IA}^*\upharpoonright R'$, where $R':=R\setminus Z$. On the other hand, by Theorem 2 in \cite{KontinenLV13} the right-hand side implication problem for disjoint independence atoms can be decided in time $O(|\Sig{IA}^*\upharpoonright R'|\cdot |R'|^2)$. Since, $\Sig{IA}^*\upharpoonright R'$ is of the size of $\Sig{IA}$, the time bound for an IA $\si$ follows.
\end{proof}

It follows that all unary INDs and constancy atoms implied by a set of INDs and IAs can be recognized in linear time.
\begin{theorem}\label{thm:uind+ca+linear}
The unrestricted and finite implication problems for the class CA+UIND by IND+IA is linear-time decidable.
\end{theorem}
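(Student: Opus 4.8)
The plan is to reduce the question to linear-time graph reachability using Theorem~\ref{theorem:uind+uca}. For a UIND or CA $\sigma$, that theorem says $\Sigma\models\sigma$ iff $\sigma$ is derivable from $\Sigma_{\rm UIND}\cup\Sigma_{\rm CA}$ using only the four rules $\iax{1},\iax{2},\uax{3},\uax{4}$, where $\Sigma_{\rm UIND}$ collects the $n$ unary projections of each $n$-ary IND of $\Sigma$ and $\Sigma_{\rm CA}$ collects the atoms $A\boto A$ with $A\in X\cap Y$ over all IAs $X\boto Y$ of $\Sigma$. Both sets are computable in linear time and have total size $O(|\Sigma|)$, so it suffices to decide in linear time whether one UIND or CA is derivable from a set of UINDs and CAs under these four rules.

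First I would build the directed graph $G$ whose vertices are the attribute occurrences $(R_i,A)$ with $A\in R_i$, with an edge $(R_i,A)\to(R_j,B)$ for each $R_i[A]\sub R_j[B]\in\Sigma_{\rm UIND}$, and mark the set $\mathrm{CA}_0:=\{(R_i,A):R_i:A\boto A\in\Sigma_{\rm CA}\}$. Let $\mathcal{C}$ be the set of vertices from which some vertex of $\mathrm{CA}_0$ is reachable in $G$ (a single reverse search, $O(|\Sigma|)$ time), and let $G'$ be $G$ together with the reversed edge $v\to u$ for every edge $u\to v$ of $G$ with $v\in\mathcal{C}$ (at most doubling the edges, still $O(|\Sigma|)$). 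The two claims are: (a)~$R_i:A\boto A$ is derivable iff $(R_i,A)\in\mathcal{C}$; and (b)~$R_i[A]\sub R_j[B]$ is derivable iff $(R_j,B)$ is reachable from $(R_i,A)$ in $G'$. Granting these, the decision procedure is immediate --- membership in $\mathcal{C}$ for a CA, reachability in $G'$ for a UIND, each a linear-time search --- and by Corollary~\ref{cor:coincide} it decides finite and unrestricted implication alike.

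The easy directions are: in (a), chain $\uax{4}$ backward along a $G$-path from $(R_i,A)$ into $\mathrm{CA}_0$; in (b), read each ordinary edge as a member of $\Sigma_{\rm UIND}$ and each reversed edge $v\to u$ (which exists only when $u\in\mathcal{C}$, i.e., $R':u\boto u$ is derivable by (a)) as an instance of $\uax{3}$, then concatenate along the path using $\iax{1},\iax{2}$. The converse directions I would prove simultaneously by induction on derivation length; the only nontrivial rules are $\uax{3}$ and $\uax{4}$, and both are controlled by the graph-theoretic lemma: \emph{if $v\in\mathcal{C}$ and there is a $G'$-path from $u$ to $v$, then every vertex on it lies in $\mathcal{C}$, and reversing the path edge by edge again yields a $G'$-path, so $v$ reaches $u$ in $G'$.} This follows by a backward induction along the path: at an original edge $w_t\to w_{t+1}$ with $w_{t+1}\in\mathcal{C}$ one has $w_t\in\mathcal{C}$ (it reaches $w_{t+1}$, which reaches $\mathrm{CA}_0$), whence $G'$ carries the reversed edge $w_{t+1}\to w_t$; at a reversed edge $w_t\to w_{t+1}$ the original edge $w_{t+1}\to w_t$ already lies in $G$ and $w_t\in\mathcal{C}$ by construction. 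Given the lemma, the $\uax{4}$ step puts $(R_i,A)$ on a $G'$-path ending in a $\mathcal{C}$-vertex, hence in $\mathcal{C}$, and the $\uax{3}$ step reverses such a path.

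I expect the main obstacle to be this converse half: $\uax{4}$ pushes constancy backward along UINDs while $\uax{3}$ spawns reversed edges, so the two rules feed one another and the induction must carry (a) and (b) together, with $\mathcal{C}$ and $G'$ fixed as the concrete objects above rather than as ``the set of derivable constancy atoms'' and the like. Once the path-reversal lemma is isolated, that coupling is tamed; the remaining cases ($\iax{1}$, $\iax{2}$, and the base case $\sigma\in\Sigma_{\rm UIND}\cup\Sigma_{\rm CA}$) are routine, and the whole procedure clearly runs in time $O(|\Sigma|)$.
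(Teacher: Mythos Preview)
Your proposal is correct and follows essentially the same route as the paper: both start from Theorem~\ref{theorem:uind+uca} and reduce to graph reachability, with your set $\mathcal{C}$ coinciding with the singlevalued span and your graph $G'$ with the edge set of the unrestricted extension $\Sigma_0$. The one real difference is packaging: the paper invokes Theorem~\ref{thm:cosma} (the Cosmadakis--Kanellakis--Vardi separation of EID and UIND consequences) as a black box, whereas you re-derive the special case needed here by a direct induction on derivations in the four-rule system. Your version is more self-contained; the paper's is shorter because the heavy lifting is outsourced.

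One small slip to fix: in the backward direction of (b) you write that a reversed edge $v\to u$ ``exists only when $u\in\mathcal{C}$, i.e., $R':u\boto u$ is derivable,'' and cite this as the premise for $\uax{3}$. By your own definition the reversed edge is added when $v\in\mathcal{C}$, and it is $v\boto v$ (constancy at the target of the \emph{original} edge $u\to v$) that $\uax{3}$ requires to derive $v\sub u$ from $u\sub v$. Your statement happens to be true anyway, since $u\to v$ with $v\in\mathcal{C}$ forces $u\in\mathcal{C}$, but the justification should point to $v$, not $u$. You get this right in the path-reversal lemma, so the argument as a whole stands.
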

\begin{proof}
Let $\Si$ be a set of INDs and IAs, and let $\sigma$ be a UIND and $\tau$ a CA. By Theorem \ref{theorem:uind+uca}, $\Si\models \rho \Leftrightarrow \Sigma_{\rm UIND} \cup\Sigma_{\rm CA}\models \rho$, for $\rho\in\{\sigma,\tau\}$.  Let $Y$ be the singlevalued span of $\Sigma_{\rm UIND} \cup\Sigma_{\rm CA}$ described in Definition \ref{span}, and let  $\Sigma_0:= \Sigma_{\rm UIND}\cup \{A\sub B: B\sub A \in \Sigma_{\rm UIND}, A\in Y\}$ and $\Sigma_1:=\Sigma_{\rm CA}\cup \{\emptyset \to A:A\in Y\}$.
By Theorem \ref{thm:cosma},
\begin{itemize}
\item $\Sigma_{\rm UIND} \cup\Sigma_{\rm CA}\models \sigma \Leftrightarrow \Sigma_0 \models \sigma$,
\item $\Sigma_{\rm UIND} \cup\Sigma_{\rm CA}\models \tau \Leftrightarrow \Sigma_1 \models \tau $.
\end{itemize}
 The singlevalued span $Y$ and the deductive closure of $\Si_0$ can be computed in linear time by reducing to graph reachability. For the latter, note that  $\iax{1},\iax{2}$ form a complete axiomatization  for UINDs. Moreover, $\Si_1\models  \emptyset \to A$ iff $A\in Y$.  We conclude that in both cases implication can be tested in linear time.
\end{proof}

Next we turn to the class IND+IA and use graphs again to characterize the associated implication problem. Recall by Corollary \ref{cor:coincide} that  the unrestricted and finite implication problems  coincide for IND+IA.
\begin{definition}\label{def:graph}
Let $\Sigma\cup\{\sigma\}$ be a set of INDs and IAs, and assume that $\sigma$ is of the form $R[ X]\sub S[ Y]$ (or $R[ X_1\boto   X_2]$ where $ X= X_1 X_2$). Then we let  $H_{\Si,\si}$  be a graph that has nodes $\{\tau_1, \ldots , \tau_{k}\}$, for $k\leq | X|$, where $\tau_i$ is an IND of the form $R[ A_i]\sub R'[ B_i]$ and the concatenation $ A_1 \ldots  A_n$ is a permutation (without repetition) of $ X$. Two nodes $v,v'$ are connected by a directed edge $v\to v'$ if one of the following holds:
\begin{itemize}
\item There exists $\tau_0,\tau_1,\tau$ such that $v\setminus\{\tau\}=v'\setminus\{ \tau_0,\tau_1\}$ where $\tau$ is a permutation of $R[U_0U_1]\sub R'[V_0V_1]$, $\tau_0=R[U_0]\sub R'[V_0]$, $\tau_1=R[U_1]\sub R'[V_1]$.
\item There exists $\tau_0,\tau_1,\tau$ such that $v\setminus\{\tau_0,\tau_1\}=v'\setminus\{\tau\}$ where $\tau_0=R[U_0]\sub R'[V_0]$, $\tau_1=R[U_1]\sub R'[V_1]$, $\tau=R[U_0U_1]\sub R'[V_0 V_1]$, and 
 for some $ W_0\supseteq V_0$ and
$ W_1\supseteq V_1 $, $R'[W_0\boto W_1]\in \Sigma$.
\item There exists $\tau,\tau'$ such that $v\setminus\{\tau\}=v'\setminus\{ \tau'\}$ where $\tau =R[U]\sub R'[V],\tau'=R[U]\sub R''[W]$,   and 
$R'[V]\sub R''[W]$ is a projection and permutation of some IND in $\Sigma$.
\end{itemize}
If $\sigma$ is  $R[ X_1\boto  X_2]$, then we define $v_{\rm start}:=\{R[ X_1]\sub R[ X_1],R[ X_2]\sub R[ X_2]\}$ and $v_{\rm end}:=\{R[ X]\sub R[ X]\}$. If  $\sigma$ is  $R[ X]\sub S[ Y]$, then $v_{\rm start}:=\{R[ X]\sub R[ X]\}$ and $v_{\rm end}:=\{R[ X]\sub S[ Y]\}$.
\end{definition}
\begin{lemma}
Let $\Sigma\cup\{\sigma\}$  be a set of INDs and DIAs. Then $\Sigma\models \sigma$ iff $H_{\Si,\si}$ contains a directed path from $v_{\rm start}$ to $v_{\rm end}$.
\end{lemma}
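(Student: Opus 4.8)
The plan is to read this lemma as a graph-theoretic reformulation of the completeness argument behind Theorem~\ref{theorem:ind+ia} (its DIA and IND subcases), and to prove the two implications by different means: the ``if'' direction by a semantic invariant propagated along a path, the ``only if'' direction by extracting a path from a terminating chase.

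\textbf{From a path to $\Sigma\models\sigma$.} Suppose $H_{\Si,\si}$ has a path $v_{\rm start}=w_0\to\cdots\to w_m=v_{\rm end}$, and write $X$ for the attribute set of $\sigma$. I would call a node $w=\{R[U_1]\sub R_1'[V_1],\ldots,R[U_k]\sub R_k'[V_k]\}$, whose left-hand sides $U_1,\ldots,U_k$ partition $X$, \emph{realizable} if for every $r\models\Sigma$ and every choice of source tuples of $r[R]$ --- one source $t$ when $\sigma$ is an IND, two sources $t_1,t_2$ when $\sigma$ is a DIA $R[X_1\boto X_2]$ --- there are tuples $u_j\in r[R_j']$ such that each position of $V_j$ carries the value held at the corresponding position of $U_j$ by $t$ (respectively by $t_1$ if that attribute lies in $X_1$, by $t_2$ if it lies in $X_2$). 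The node $v_{\rm start}$ is realizable, witnessed by the source tuples themselves. I would then check that each edge type preserves realizability: a type-$1$ edge merely restricts the tuple witnessing $\tau$ to the split components; a type-$2$ edge uses the atom $R'[W_0\boto W_1]\in\Sigma$ with $W_0\supseteq V_0$ and $W_1\supseteq V_1$, whence $r\models R'[V_0\boto V_1]$ by symmetry and decomposition ($\tax{2},\tax{3}$), so the two witnessing tuples of $\tau_0,\tau_1$, which live in the common relation $r[R']$, can be joined; a type-$3$ edge composes the witnessing tuple of $\tau$ with a projection/permutation of a $\Sigma$-IND. Hence $v_{\rm end}$ is realizable, and spelling out realizability of $v_{\rm end}$ --- which is $\{R[X]\sub S[Y]\}=\{\sigma\}$ in the IND case and $\{R[X_1X_2]\sub R[X_1X_2]\}$ read with two independent sources in the DIA case --- says exactly that every $r\models\Sigma$ satisfies $\sigma$, i.e.\ $\Sigma\models\sigma$.

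\textbf{From $\Sigma\models\sigma$ to a path.} Here I would run the chase used in the proof of Theorem~\ref{theorem:ind+ia}: start from the instance $d_0$ whose relation over $R$ has the two distinguished rows $s,s'$ (DIA case) or the single row $s:A_i\mapsto i$ (IND case), with all other relations being single all-$0$ rows, and chase by $\Sigma$ with the IND rule (i) and the IA rule (ii). The chase terminates in a finite $d\models\Sigma$, so $d\models\sigma$, and $d$ contains a witness tuple $t^*$ holding all the distinguished tokens --- on positions $Y$ of $S$ in the IND case, on positions $X$ of $R$ in the DIA case. Tracing how $t^*$ was built yields a finite derivation \emph{tree}: an IND step has one parent, an IA step has two, and since the atoms are \emph{disjoint} the token sets supplied by the two parents of an IA step are disjoint, so the $\le|X|$ leaves of the tree carry a partition of the $X$-tokens. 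Serializing this tree into a left-to-right walk and recording, between consecutive chase steps on the active frontier, the set of inclusion dependencies that record where the distinguished tokens currently reside, produces a sequence of $H_{\Si,\si}$-nodes: an IND chase step becomes a type-$1$ edge (restricting to the copied positions) followed by a type-$3$ edge (transitivity through the $\Sigma$-IND), and an IA chase step becomes a type-$1$ edge followed by a type-$2$ edge; splitting the initial source pieces (another type-$1$ edge) lets the walk start at $v_{\rm start}$, and the moment $t^*$ appears is $v_{\rm end}$. Because the frontier always consists of dependencies whose left-hand sides partition $X$, every node on the walk has at most $|X|$ entries, so it is a genuine vertex of $H_{\Si,\si}$. (That finite and unrestricted implication coincide here, so that the same equivalence settles $\fmodels$ as well, is Corollary~\ref{cor:coincide}.)

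\textbf{Main obstacle.} The technical heart is the second direction: turning the tree-shaped provenance of $t^*$ into a \emph{linear} path without the intermediate node-sets ever exceeding $|X|$ entries, which needs the observation that, once a subtree has been assembled, continuing the chase for a sibling subtree only adds new tuples and never disturbs the already-assembled component. One should also verify that none of the rules $\uax{3},\uax{4},\uax{5}$ --- present in $\mathfrak C$ but, by Corollary~\ref{corollary:IA+IND}, redundant for DIA$+$IND --- is secretly required, and that the reassembly of $R[X_1\boto X_2]$ via the transfer rule in the DIA subcase of Theorem~\ref{theorem:ind+ia} (which there rests on Lemma~\ref{lemma:help3}) is faithfully mirrored by the walk returning to the node $\{R[X]\sub R[X]\}$.
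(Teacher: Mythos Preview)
Your proposal is correct and follows essentially the same approach as the paper. Both directions match: your ``realizability'' invariant propagated along a path is exactly the paper's induction (the paper phrases it via the combined mapping $t_0$ agreeing with $t$ on $X_1$ and $t'$ on $X_2$), and your backtracking of the chase from Theorem~\ref{theorem:ind+ia} to extract a path is precisely what the paper means by ``backtracking a successful chase.'' You have supplied considerably more detail than the paper's two-sentence sketch---in particular, you correctly isolate the one genuine technicality (serializing the derivation tree into a linear walk whose frontier never exceeds $|X|$ entries), which the paper leaves implicit.
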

\begin{proof}
Assuming $\Sigma\models \sigma$, the required path is found by backtracking a succesful chase of $d$ by $\Sigma$ where the chase rules and $d$ are defined as in cases 2) and 3) in the proof of Theorem \ref{theorem:ind+ia}.

For the other direction, let $d$ be a database satisfying $\Sigma$. Assume first that $\sigma$ is an IA of the form $R[ X_1\boto  X_2]$, and let $t,t'\in r[R]$. Now $ X_1$ and $ X_2$ are disjoint, so we can define  a mapping $t_0$ that agrees with $t$ on $ X_1$ and with $t'$ on $ X_2$. It  suffices to show, given a directed path in $H_{\Si,\si}$ from $v_{\rm start}$ to $\{\tau_1, \ldots , \tau_{k}\}$, that for each $\tau_i$ of the form $R[ U_i]\sub R'[V_i]$ there is $t_i\in r'[R']$ such that $t_0(U_i)=t_i(V_i)$. Since this is a straightforward induction, we  leave the proof to the reader. The case where $\sigma$ is an IND is analogous.
\end{proof}

$\PSPACE$-completeness of the IND+IA-implication is now showed by reducing to graph reachability in $H_{\Si,\si}$.
\begin{theorem}\label{comp:IND+IA}
The unrestricted (finite) implication problem for IND+IA is complete for $\PSPACE$.
\end{theorem}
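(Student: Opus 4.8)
The plan is to treat the lower and upper bounds separately, with essentially all of the substantive content already supplied by the graph characterization in the preceding lemma.

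For $\PSPACE$-hardness, I would simply observe that the implication problem for inclusion dependencies alone is already $\PSPACE$-hard~\cite{CasanovaFP84journal}, and that every instance of IND-implication is trivially an instance of IND+IA-implication. Hence IND+IA-implication is $\PSPACE$-hard, and by Corollary~\ref{cor:coincide} the same lower bound applies verbatim to both the finite and the unrestricted problems.

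For membership in $\PSPACE$ the route is: reduce to the IND+DIA case, invoke the graph characterization, and observe that the resulting reachability question lies in $\PSPACE$. First I would argue that the reduction from an arbitrary IND+IA-implication instance to an IND+DIA-implication instance is polynomial. Using $\tax{1}$--$\tax{3}$ we may assume the conclusion $\si$ is a constancy atom, a DIA, or an IND; constancy-atom conclusions are disposed of directly, in linear time, by Theorem~\ref{thm:uind+ca+linear}. In the remaining two cases Lemma~\ref{help} rewrites $\Sigma$ into an equivalent set free of constancy atoms, and the only step of that rewriting which is not visibly polynomial---computing the set $\C$ of attributes that $\Sigma$ forces to be constant---is itself a constancy-atom implication test, hence linear-time by Theorem~\ref{thm:uind+ca+linear}. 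Next I would bound the graph $H_{\Si,\si}$ of Definition~\ref{def:graph}: each of its nodes is a set of at most $|X|$ unary INDs, and each such unary IND is determined by one attribute of $X$ together with a target relation schema and target attribute occurring in $\Sigma\cup\{\si\}$; so although $H_{\Si,\si}$ may have exponentially many nodes, every node has a description of size polynomial in the input. Deciding whether $v_{\rm start}$ reaches $v_{\rm end}$ is therefore in $\classFont{NPSPACE}$: guess a path one node at a time, keeping in memory only the current node and a step counter whose bit length is polynomial (the number of nodes being at most singly exponential in the input size), and check in polynomial time that each guessed step is a legal edge---for the three edge clauses of Definition~\ref{def:graph} this just amounts to searching the INDs and IAs of $\Sigma$ for a suitable match. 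By Savitch's theorem $\classFont{NPSPACE}=\PSPACE$, giving the upper bound, and once more Corollary~\ref{cor:coincide} lets us phrase it for either implication problem.

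The part I expect to require the most care is the membership argument, though the difficulty is bookkeeping rather than conceptual: one must confirm (i) that eliminating constancy atoms via Lemma~\ref{help} does not enlarge the instance super-polynomially, (ii) that the node descriptions of $H_{\Si,\si}$ are genuinely polynomially bounded, and (iii) that each of the three edge relations of Definition~\ref{def:graph}---in particular the second, which existentially quantifies over supersets $W_0\supseteq V_0$ and $W_1\supseteq V_1$ with $R'[W_0\boto W_1]\in\Sigma$---is checkable in polynomial time and space. Granting these, the reachability-in-polynomial-space argument is standard, and the hardness half is inherited for free from inclusion dependencies.
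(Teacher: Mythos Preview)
Your approach is essentially the paper's own: inherit $\PSPACE$-hardness from INDs, use Lemma~\ref{help} (with Theorem~\ref{thm:uind+ca+linear} supplying the constant-attribute set in polynomial time) to reduce to the DIA case, then solve reachability in $H_{\Si,\si}$ nondeterministically in polynomial space and apply Savitch. One small inaccuracy: the nodes of $H_{\Si,\si}$ are not sets of \emph{unary} INDs---each $\tau_i=R[A_i]\sub R'[B_i]$ may have arbitrary arity, with the $A_i$ jointly forming a permutation of $X$---but this does not affect your space bound, since a node is still described by a partition of $X$ together with one target relation and attribute per position of $X$, which is polynomial in the input.
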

\begin{proof}
The lower bound follows by the fact that the implication problem for INDs alone is $\PSPACE$-complete  \cite{CasanovaFP84journal}. For the upper bound, let  $\Si\cup \sigma$ be a set of INDs and IAs where $\sigma$ is an IA (or an IND). Construct first $\Sigma_0\cup\{\sigma_0\}$ ($\Sigma_1$) as described in Lemma \ref{help}. By Theorem \ref{thm:uind+ca+linear} this can be done in polynomial time. Then non-deterministically check whether there is a directed path in $H_{\Sigma_0, \sigma_0}$ ($H_{\Sigma_1, \sigma}$) from $v_{\rm start}$ to $v_{\rm end}$. Since this requires only polynomial amount of space, we conclude by Savitch' theorem that the implication problem is in $\PSPACE$.
\end{proof}

Note that there are only polynomially many nodes in  $H_{\Si,\si}$, given that $\sigma$ is of fixed  arity. Hence, by Lemma \ref{help} and Theorem \ref{thm:uind+ca+linear} we obtain the following corollary.
\begin{corollary}
The unrestricted (finite) implication problem for $\sigma$ by $\Sigma$, where $\Sigma\cup \sigma$ is a set of INDs and IAs, is fixed-parameter tractable in the arity of $\sigma$.
\end{corollary}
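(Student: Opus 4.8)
The plan is to reuse, essentially verbatim, the pipeline from the proof of Theorem~\ref{comp:IND+IA}, adding the single observation that when the arity of $\sigma$ is treated as a parameter the graph of Definition~\ref{def:graph} has only polynomially many nodes, so that the non-deterministic path search used there collapses to a deterministic polynomial-time computation. First I would normalize $\sigma$: using $\tax{1},\tax{2},\tax{3}$ one may assume, without increasing its arity, that $\sigma$ is a constancy atom, a disjoint independence atom, or an inclusion dependency. The constancy-atom and unary-IND cases are already handled in linear time by Theorem~\ref{thm:uind+ca+linear}, so the interesting case is $\sigma$ a DIA or an IND. Then, exactly as in the proof of Theorem~\ref{comp:IND+IA}, I would apply Lemma~\ref{help} to pass to the constancy-atom-free instance $\Sigma_0\cup\{\sigma_0\}$ (for an IA $\sigma$) or $\Sigma_1$ (for an IND $\sigma$); building these sets requires only the derivable constancy atoms and unary INDs of $\Sigma$, which by Theorems~\ref{theorem:uind+uca} and~\ref{thm:uind+ca+linear} can be computed in linear, hence arity-independent, time. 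After this step all independence atoms are disjoint, so the graph lemma preceding Theorem~\ref{comp:IND+IA} applies: $\Sigma\models\sigma$ — equivalently, by Corollary~\ref{cor:coincide}, $\Sigma\fmodels\sigma$ — holds iff $H_{\Sigma_0,\sigma_0}$ (resp.\ $H_{\Sigma_1,\sigma}$) contains a directed path from $v_{\rm start}$ to $v_{\rm end}$.

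Next I would bound that graph. Let $k$ be the arity of $\sigma$ and let $X$, of length at most $k$, be the attribute sequence of $\sigma_0$ (resp.\ the left-hand side of $\sigma$). A node is a set of at most $k$ inclusion dependencies $R[A_i]\sub R'[B_i]$ whose left sides $A_1,\ldots,A_m$ concatenate to a repetition-free permutation of $X$; the number of ways to split and reorder the $\le k$ attributes of $X$ into such blocks is a function $g(k)$ of $k$ alone, while for a fixed block the target pair (relation schema, attribute sequence $B_i$) has at most $n\cdot M^{|A_i|}$ choices, where $n$ is the number of relation schemata and $M$ the maximum schema size. Since $\sum_i|A_i|\le k$, the number of nodes is at most $g(k)\cdot (nM)^{k}$, which is polynomial in the size of $\Sigma\cup\{\sigma\}$ for every fixed $k$. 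Each of the three edge types in Definition~\ref{def:graph} is testable in polynomial time — they come down to comparing a bounded-size symmetric difference of node sets and checking whether a single $1$- or $2$-ary IND (resp.\ IA) is a projection-and-permutation of, or is suitably covered by, some dependency of $\Sigma$. Hence the whole adjacency relation is computable, and reachability from $v_{\rm start}$ to $v_{\rm end}$ decidable by breadth-first search, in time polynomial in the number of nodes and in the size of $\Sigma$. Composing with the polynomial preprocessing above yields a decision procedure whose running time is a polynomial in the size of $\Sigma\cup\{\sigma\}$ of degree depending only on the arity of $\sigma$ — the asserted fixed-parameter tractability.

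The one genuinely non-routine point, and the step I expect to require the most care, is the node count: one must read off Definition~\ref{def:graph} that along every chase step the left-hand attribute sequences remain built from the $\le k$ attributes of $X$ (so that their combinatorics depends only on $k$, not on the size of the schema) and that the right-hand sides have total length at most $k$, so that the only schema-dependent factor is the $(nM)^{k}$ contributed by the targets. Once this structural fact is in place, the rest is the standard ``reachability in a polynomial-size graph'' reasoning that is already implicit in the proof of Theorem~\ref{comp:IND+IA}, and I expect no further obstacles.
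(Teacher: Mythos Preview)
Your proposal is correct and follows essentially the same approach as the paper: reduce via Lemma~\ref{help} (with the preprocessing justified by Theorem~\ref{thm:uind+ca+linear}), observe that $H_{\Sigma,\sigma}$ has only polynomially many nodes when the arity of $\sigma$ is fixed, and then do reachability. The paper compresses all of this into a single sentence, while you spell out the node-count combinatorics and the edge-test details explicitly, but the underlying argument is identical.
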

Actually, the choice of the parameter in the above corollary is not optimal. By Theorem \ref{thm:uind+ca+linear}, tractability is preserved if only the number of non-constant attributes  in an IA $\sigma$ is fixed. Moreover, assume that $\sigma$ is an IND of the form $A_1\ldots A_{h+k} \sub B_1\ldots B_{h+k}$ where $B_i$ is constant for $i\leq h$. Then the implication problem for $\sigma$ is fixed-parameter tractable in $k$ since by $\uax{1}$ it suffices to test whether $A_{h+1}\ldots A_{h+k}\sub B_{h+1}\ldots B_{h+k}$ and $A_i\sub B_i$, for $i\leq h$, are all implied.

\section{Conclusion and Outlook}\label{sec:conclusion}

In view of the infeasibility of EMVDs and of FDs and INDs combined, the class of FDs, MVDs and unary INDs is important as it is low-degree polynomial time decidable in the finite and unrestricted cases. As independence atoms form an important tractable embedded subclass of EMVDs, we have delineated axiomatisability and tractability frontiers for subclasses of FDs, INDs, and IAs. The most interesting class is that of IAs, unary FDs and unary INDs, for which finite and unrestricted implication differ but each is axiomatisable and decidable in low-degree polynomial time. The subclass is robust with this properties as unary functional and binary inclusion dependencies are undecidable in both the finite and unrestricted case, and binary functional dependencies with unary independence atoms are not finitely axiomatisable in the finite. The results form the basis for new applications of these data dependencies in many data processing tasks.

Even though research in this space has been rich and deep, there are many problems that warrant future research. Theoretically, the decidability remains open for both independence atoms and functional dependencies as well as unary independence atoms and functional dependencies, both in the finite and unrestricted case. 
This line of research should also be investigated in the probabilistic setting of conditional independencies, fundamental to multivariate statistics and machine learning.
Practically, implementations and experimental evaluations of the algorithms can complement the findings in the research. It would be interesting to investigate applications. For example, in consistent query answering the aim is to return all those answers to a query that are present in all repairs \cite{DBLP:series/synthesis/2011Bertossi,DBLP:conf/pods/KoutrisW15}. Here, a repair is a database obtained by applying a minimal set of operations that resolve all violations of the given database with respect to the given constraints). Assuming that the constraint is the IA $\sigma_{11}=\textsc{Heart}:\;p\_name\boto t\_id$, operations are tuple insertions (appropriate for tuple-generating dependencies), and the query is
\begin{center}
\begin{tabular}{l}
SELECT p\_name, t\_id FROM \textsc{Heart}\,;
\end{tabular}
\end{center}
a rewriting of this query that would return the consistent query answers over the given database is
\begin{center}
\begin{tabular}{l}
SELECT H.p\_name, H'.t\_id \\
FROM \textsc{Heart} H, \textsc{Heart} H'\,;
\end{tabular}
\end{center}
It would be interesting to include IAs when investigating typical problems of consistent query answering \cite{DBLP:series/synthesis/2011Bertossi}. In database design \cite{DBLP:journals/is/GottlobPW10,koehler:2016} IAs are useful for finding lossless decompositions of a database schema, which is exemplified in \cite{DBLP:journals/tods/Delobel78}. Considering the infeasibility of the implication problem for EMVDs, and the limited knowledge on the interaction of IA+FD, it would be interesting to investigate the possibilities for automating schema design using subclasses of IA+FD+IND. Query folding has greatly benefited from considering FD+IND~\cite{DBLP:journals/is/Gryz99}. As the implication problems for FD+IND are undecidable, there is no algorithm that can produce a complete list of query rewritings for this class. However, our results are a starting point to develop (complete) query folding algorithms for subclasses of IA+FD+IND. Similarly, well-known methods for deciding query containment in the presence of FD+IND~\cite{DBLP:journals/jcss/JohnsonK84} could be extended to subclasses of IA+FD+IND. Query answering has also been shown to be effective on data integration systems in which global schemata are expressed by key and foreign key constraints \cite{DBLP:books/daglib/p/CaliC13}. It would be interesting to see how these techniques can be extended to other subclasses of IA+FD+IND. Another area of impact for our results is ontology-based data access, in which keys and inclusion dependencies play an important role. It would be interesting to study to which degree independence atoms can be added without increasing too much the complexity of associated decision problems \cite{DBLP:conf/rweb/GottlobMP15}. It is interesting to develop algorithms that discover all those IAs that hold on a given database \cite{DBLP:journals/vldb/AbedjanGN15}. For subclasses of IA+FD+IND, our algorithms can remove data dependencies implied by others.

\bibliographystyle{plain}
\bibliography{biblio2}

\begin{thebibliography}{10}

\bibitem{DBLP:journals/vldb/AbedjanGN15}
Ziawasch Abedjan, Lukasz Golab, and Felix Naumann.
\newblock Profiling relational data: a survey.
\newblock {\em {VLDB} J.}, 24(4):557--581, 2015.

\bibitem{AbiteboulHV95}
Serge Abiteboul, Richard Hull, and Victor Vianu.
\newblock {\em Foundations of Databases}.
\newblock Addison-Wesley, 1995.

\bibitem{DBLP:journals/eatcs/Abramsky14}
Samson Abramsky.
\newblock Contextual semantics: From quantum mechanics to logic, databases,
  constraints, and complexity.
\newblock {\em Bulletin of the {EATCS}}, 113, 2014.

\bibitem{AbramskyGK13}
Samson Abramsky, Georg Gottlob, and Phokion~G. Kolaitis.
\newblock Robust constraint satisfaction and local hidden variables in quantum
  mechanics.
\newblock In {\em {IJCAI}}, 2013.

\bibitem{armstrong74}
William~W. Armstrong.
\newblock {Dependency Structures of Data Base Relationships.}
\newblock In {\em Proc. of IFIP World Computer Congress}, pages 580--583, 1974.

\bibitem{BeeriB79}
Catriel Beeri and Philip~A. Bernstein.
\newblock Computational problems related to the design of normal form
  relational schemas.
\newblock {\em {ACM} Trans. Database Syst.}, 4(1):30--59, 1979.

\bibitem{BeeriFH77}
Catriel Beeri, Ronald Fagin, and John~H. Howard.
\newblock A complete axiomatization for functional and multivalued dependencies
  in database relations.
\newblock In {\em {SIGMOD}}, pages 47--61, 1977.

\bibitem{DBLP:series/synthesis/2011Bertossi}
Leopoldo~E. Bertossi.
\newblock {\em Database Repairing and Consistent Query Answering}.
\newblock Morgan {\&} Claypool Publishers, 2011.

\bibitem{DBLP:journals/ijisec/BiskupB04}
Joachim Biskup and Piero~A. Bonatti.
\newblock Controlled query evaluation for enforcing confidentiality in complete
  information systems.
\newblock {\em Int. J. Inf. Sec.}, 3(1):14--27, 2004.

\bibitem{DBLP:conf/mfcs/Cadiou76}
Jean{-}Marc Cadiou.
\newblock On semantic issues in the relational model of data.
\newblock In {\em {MFCS}}, pages 23--38, 1976.

\bibitem{DBLP:books/daglib/p/CaliC13}
Andrea Cal{\`{\i}}, Diego Calvanese, and Maurizio Lenzerini.
\newblock Data integration under integrity constraints.
\newblock In {\em Seminal Contributions to Information Systems Engineering},
  pages 335--352. 2013.

\bibitem{CasanovaFP82}
Marco~A. Casanova, Ronald Fagin, and Christos~H. Papadimitriou.
\newblock Inclusion dependencies and their interaction with functional
  dependencies.
\newblock In {\em {PODS}}, pages 171--176, 1982.

\bibitem{CasanovaFP84journal}
Marco~A. Casanova, Ronald Fagin, and Christos~H. Papadimitriou.
\newblock Inclusion dependencies and their interaction with functional
  dependencies.
\newblock {\em J. Comput. Syst. Sci.}, 28(1):29--59, 1984.

\bibitem{chandra85}
Ashok~K. Chandra and Moshe~Y. Vardi.
\newblock The implication problem for functional and inclusion dependencies is
  undecidable.
\newblock {\em SIAM Journal on Computing}, 14(3):671--677, 1985.

\bibitem{DBLP:persons/Codd72}
E.~F. Codd.
\newblock Relational completeness of data base sublanguages.
\newblock {\em In: R. Rustin (ed.): Database Systems: 65-98, Prentice Hall and
  {IBM} Research Report {RJ} 987, San Jose, California}, 1972.

\bibitem{CosmadakisKV90}
Stavros~S. Cosmadakis, Paris~C. Kanellakis, and Moshe~Y. Vardi.
\newblock Polynomial-time implication problems for unary inclusion
  dependencies.
\newblock {\em J. {ACM}}, 37(1):15--46, 1990.

\bibitem{DBLP:journals/tods/Delobel78}
Claude Delobel.
\newblock Normalization and hierarchical dependencies in the relational data
  model.
\newblock {\em {ACM} Trans. Database Syst.}, 3(3):201--222, 1978.

\bibitem{DBLP:conf/vldb/DeutschPT99}
Alin Deutsch, Lucian Popa, and Val Tannen.
\newblock Physical data independence, constraints, and optimization with
  universal plans.
\newblock In {\em VLDB}, pages 459--470, 1999.

\bibitem{fagin77}
Ronald Fagin.
\newblock Multivalued dependencies and a new normal form for relational
  databases.
\newblock {\em ACM Transactions on Database Systems}, 2:262--278, September
  1977.

\bibitem{DBLP:journals/jacm/Fagin82}
Ronald Fagin.
\newblock Horn clauses and database dependencies.
\newblock {\em J. {ACM}}, 29(4):952--985, 1982.

\bibitem{DBLP:journals/tcs/FaginKMP05}
Ronald Fagin, Phokion~G. Kolaitis, Ren{\'{e}}e~J. Miller, and Lucian Popa.
\newblock Data exchange: semantics and query answering.
\newblock {\em Theor. Comput. Sci.}, 336(1):89--124, 2005.

\bibitem{Galil82}
Zvi Galil.
\newblock An almost linear-time algorithm for computing a dependency basis in a
  relational database.
\newblock {\em J. {ACM}}, 29(1):96--102, 1982.

\bibitem{geiger:1991}
Dan Geiger, Azaria Paz, and Judea Pearl.
\newblock Axioms and algorithms for inferences involving probabilistic
  independence.
\newblock {\em Information and Computation}, 91(1):128--141, 1991.

\bibitem{DBLP:conf/rweb/GottlobMP15}
Georg Gottlob, Michael Morak, and Andreas Pieris.
\newblock Recent advances in datalog +/-.
\newblock In {\em Reasoning Web}, pages 193--217, 2015.

\bibitem{DBLP:journals/is/GottlobPW10}
Georg Gottlob, Reinhard Pichler, and Fang Wei.
\newblock Tractable database design and datalog abduction through bounded
  treewidth.
\newblock {\em Inf. Syst.}, 35(3):278--298, 2010.

\bibitem{GradelV13}
Erich Gr{\"{a}}del and Jouko~A. V{\"{a}}{\"{a}}n{\"{a}}nen.
\newblock Dependence and independence.
\newblock {\em Studia Logica}, 101(2):399--410, 2013.

\bibitem{DBLP:series/synthesis/2012Greco}
Sergio Greco, Cristian Molinaro, and Francesca Spezzano.
\newblock {\em Incomplete Data and Data Dependencies in Relational Databases}.
\newblock Synthesis Lectures on Data Management. Morgan {\&} Claypool
  Publishers, 2012.

\bibitem{DBLP:journals/is/Gryz99}
Jarek Gryz.
\newblock Query rewriting using views in the presence of functional and
  inclusion dependencies.
\newblock {\em Inf. Syst.}, 24(7):597--612, 1999.

\bibitem{HannulaK14}
Miika Hannula and Juha Kontinen.
\newblock A finite axiomatization of conditional independence and inclusion
  dependencies.
\newblock In {\em FoIKS}, pages 211--229, 2014.

\bibitem{HannulaK16}
Miika Hannula and Juha Kontinen.
\newblock A finite axiomatization of conditional independence and inclusion
  dependencies.
\newblock {\em Inf. Comput.}, 249:121--137, 2016.

\bibitem{HannulaKL14}
Miika Hannula, Juha Kontinen, and Sebastian Link.
\newblock On independence atoms and keys.
\newblock In {\em {CIKM}}, pages 1229--1238, 2014.

\bibitem{DBLP:conf/cikm/HannulaKL14}
Miika Hannula, Juha Kontinen, and Sebastian Link.
\newblock On independence atoms and keys.
\newblock In Jianzhong Li, Xiaoyang~Sean Wang, Minos~N. Garofalakis, Ian
  Soboroff, Torsten Suel, and Min Wang, editors, {\em Proceedings of the 23rd
  {ACM} International Conference on Conference on Information and Knowledge
  Management, {CIKM} 2014, Shanghai, China, November 3-7, 2014}, pages
  1229--1238. {ACM}, 2014.

\bibitem{HannulaKL16}
Miika Hannula, Juha Kontinen, and Sebastian Link.
\newblock On the finite and general implication problems of independence atoms
  and keys.
\newblock {\em J. Comput. Syst. Sci.}, 82(5):856--877, 2016.

\bibitem{herrmann:2006}
Christian Herrmann.
\newblock On the undecidability of implications between embedded multivalued
  database dependencies.
\newblock {\em Information and Computation}, 122(2):221 -- 235, 1995.

\bibitem{Herrmann06}
Christian Herrmann.
\newblock Corrigendum to "{O}n the undecidability of implications between
  embedded multivalued database dependencies".
\newblock {\em Inf. Comput.}, 204(12):1847--1851, 2006.

\bibitem{DBLP:journals/jcss/JohnsonK84}
David~S. Johnson and Anthony~C. Klug.
\newblock Testing containment of conjunctive queries under functional and
  inclusion dependencies.
\newblock {\em J. Comput. Syst. Sci.}, 28(1):167--189, 1984.

\bibitem{Kahn62}
A.~B. Kahn.
\newblock Topological sorting of large networks.
\newblock {\em Commun. ACM}, 5(11):558--562, November 1962.

\bibitem{Kanellakis90}
Paris~C. Kanellakis.
\newblock Elements of relational database theory.
\newblock In {\em Handbook of Theoretical Computer Science}, pages 1073--1156.
  1990.

\bibitem{koehler:2016}
Henning K\"ohler and Sebastian Link.
\newblock {SQL} schema design: Foundations, normal forms, and normalization.
\newblock In {\em {SIGMOD}}, 2016.

\bibitem{KontinenLV13}
Juha Kontinen, Sebastian Link, and Jouko~A. V{\"a}{\"a}n{\"a}nen.
\newblock Independence in database relations.
\newblock In {\em WoLLIC}, pages 179--193, 2013.

\bibitem{DBLP:conf/pods/KoutrisW15}
Paraschos Koutris and Jef Wijsen.
\newblock The data complexity of consistent query answering for self-join-free
  conjunctive queries under primary key constraints.
\newblock In {\em {PODS}}, pages 17--29, 2015.

\bibitem{DBLP:journals/is/LangeveldtL10}
Warren{-}Dean Langeveldt and Sebastian Link.
\newblock Empirical evidence for the usefulness of armstrong relations in the
  acquisition of meaningful functional dependencies.
\newblock {\em Inf. Syst.}, 35(3):352--374, 2010.

\bibitem{DBLP:conf/pods/LeindersB05}
Dirk Leinders and Jan~Van den Bussche.
\newblock On the complexity of division and set joins in the relational
  algebra.
\newblock In {\em PODS}, pages 76--83, 2005.

\bibitem{DBLP:journals/ipl/LeveneL99}
Mark Levene and George Loizou.
\newblock How to prevent interaction of functional and inclusion dependencies.
\newblock {\em Inf. Process. Lett.}, 71(3-4):115--125, 1999.

\bibitem{DBLP:journals/tcs/LeveneL01}
Mark Levene and George Loizou.
\newblock Guaranteeing no interaction between functional dependencies and
  tree-like inclusion dependencies.
\newblock {\em Theor. Comput. Sci.}, 254(1-2):683--690, 2001.

\bibitem{DBLP:journals/jcss/MannilaR86}
Heikki Mannila and Kari{-}Jouko R{\"{a}}ih{\"{a}}.
\newblock Design by example: An application of armstrong relations.
\newblock {\em J. Comput. Syst. Sci.}, 33(2):126--141, 1986.

\bibitem{DBLP:conf/edbt/MemariL15}
Mozhgan Memari and Sebastian Link.
\newblock Index design for enforcing partial referential integrity efficiently.
\newblock In {\em Proceedings of the 18th International Conference on Extending
  Database Technology, {EDBT} 2015, Brussels, Belgium, March 23-27, 2015.},
  pages 217--228, 2015.

\bibitem{mitchell83}
John~C. Mitchell.
\newblock The implication problem for functional and inclusion dependencies.
\newblock {\em Information and Control}, 56(3):154--173, 1983.

\bibitem{mitchell83a}
John~C. Mitchell.
\newblock Inference rules for functional and inclusion dependencies.
\newblock In {\em {PODS}}, pages 58--69, 1983.

\bibitem{OlteanuZ15}
Dan Olteanu and Jakub Z{\'{a}}vodn{\'{y}}.
\newblock Size bounds for factorised representations of query results.
\newblock {\em {ACM} Trans. Database Syst.}, 40(1):2, 2015.

\bibitem{DBLP:journals/pvldb/PapenbrockEMNRZ15}
Thorsten Papenbrock, Jens Ehrlich, Jannik Marten, Tommy Neubert, Jan{-}Peer
  Rudolph, Martin Sch{\"{o}}nberg, Jakob Zwiener, and Felix Naumann.
\newblock Functional dependency discovery: An experimental evaluation of seven
  algorithms.
\newblock {\em {PVLDB}}, 8(10):1082--1093, 2015.

\bibitem{paredaens:1980}
Jan Paredaens.
\newblock The interaction of integrity constraints in an information system.
\newblock {\em J. Comput. Syst. Sci.}, 20(3):310--329, 1980.

\bibitem{Petrov:1989}
S.~V. Petrov.
\newblock Finite axiomatization of languages for representations of system
  properties: Axiomatization of dependencies.
\newblock {\em Inf. Sci.}, 47(3):339--372, April 1989.

\bibitem{DBLP:conf/vldb/PoosalaI97}
Viswanath Poosala and Yannis~E. Ioannidis.
\newblock Selectivity estimation without the attribute value independence
  assumption.
\newblock In Matthias Jarke, Michael~J. Carey, Klaus~R. Dittrich, Frederick~H.
  Lochovsky, Pericles Loucopoulos, and Manfred~A. Jeusfeld, editors, {\em
  VLDB'97, Proceedings of 23rd International Conference on Very Large Data
  Bases, August 25-29, 1997, Athens, Greece}, pages 486--495. Morgan Kaufmann,
  1997.

\bibitem{ParkerP80}
Douglas {Stott Parker Jr.} and Kamran Parsaye{-}Ghomi.
\newblock Inferences involving embedded multivalued dependencies and transitive
  dependencies.
\newblock In {\em {SIGMOD}}, pages 52--57, 1980.

\bibitem{thalheim:1991}
B~Thalheim.
\newblock {\em Dependencies in relational databases}.
\newblock Teubner, 1991.

\bibitem{DBLP:conf/caise/WeiL19}
Ziheng Wei and Sebastian Link.
\newblock A fourth normal form for uncertain data.
\newblock In Paolo Giorgini and Barbara Weber, editors, {\em Advanced
  Information Systems Engineering - 31st International Conference, CAiSE 2019,
  Rome, Italy, June 3-7, 2019, Proceedings}, volume 11483 of {\em Lecture Notes
  in Computer Science}, pages 295--311. Springer, 2019.

\end{thebibliography}
\end{document}